\def\*#1{\boldsymbol{#1}}
\def\1#1{\mathcal{#1}}
\def\2#1{\mathscr{#1}}
\def\3#1{\mathbb{#1}}
\setlist[itemize]{noitemsep}
\setlist[enumerate]{noitemsep}
\title{Toward a Complete Criterion for Value of Information in Insoluble Decision Problems}
\author[1]{Ryan Carey} 
\author[2]{Sanghack Lee}
\author[1]{Robin J.~Evans}
\affil[1]{University of Oxford}
\affil[2]{Seoul National University}
\date{}
\newcommand*{\Eps}{{\mathcal{E}}}
\newcommand*{\eps}{{\varepsilon}}
\newcommand*{\calS}{{\mathcal{S}}}
\newcommand*{\calG}{{\mathcal{G}}}
\newcommand*{\calH}{{\mathcal{H}}}
\newcommand*{\calM}{{\mathcal{M}}}
\newcommand*{\Pa}{{\text{Pa}}}
\newcommand*{\pa}{{\text{pa}}}
\newcommand*{\Ch}{{\text{Ch}}}
\newcommand*{\Anc}{{\text{Anc}}}
\newcommand*{\Desc}{{\text{Desc}}}
\newcommand*{\Unif}{{\text{U}}}
\newcommand{\unif}{\text{U}}
\newcommand*{\sr}{{\bm r}}
\newcommand*{\sW}{{\bm W}}
\newcommand*{\sw}{{\bm w}}
\newcommand*{\spi}{{\bm \pi}}
\newcommand*{\sV}{{\bm V}}
\newcommand*{\sv}{{\bm v}}
\renewcommand*{\sp}{{\bm p}}
\newcommand*{\sT}{{\bm T}}
\newcommand*{\sX}{{\bm X}}
\newcommand*{\sx}{{\bm x}}
\newcommand*{\sY}{{\bm Y}}
\newcommand*{\sZ}{{\bm Z}}
\newcommand*{\sz}{{\bm z}}
\newcommand*{\sC}{{\bm C}}
\renewcommand*{\sc}{{\bm c}}
\newcommand*{\sU}{{\bm U}}
\newcommand*{\su}{{\bm u}}
\newcommand*{\sN}{{\bm N}}
\newcommand*{\imin}{{i_\text{min}}}
\newcommand*{\imax}{{i_\text{max}}}
\newcommand*{\jmax}{J}
\newcommand*{\idiff}{{\imax-\imin+1}}
\newcommand*{\dsep}{\perp}
\newcommand{\upathto}{\;\hbox{-\hspace{-0.19em} -\hspace{-0.19em} -}\;}
\newcommand{\pathto}{\dashrightarrow}
\newcommand{\pathfrom}{\dashleftarrow}
\newcommand{\bool}{\mathbb{B}}
\newcommand{\EE}{\mathbb{E}}
\newcommand{\dom}[2][]{#1\mathfrak{X}_{#2}}
\DeclareMathOperator{\aseq}{{\overset{\text{a.s.}}{=\joinrel=}}}
\newcommand{\condset}[1]{\lceil (\sX(\calS) \cup C_{\sX(\calS) \setminus #1}) \setminus #1 \rceil}
\DeclareMathOperator{\doo}{do}
\DeclareMathOperator{\cat}{\times}
 \newcommand{\slee}[1]{\textcolor{red}{}}
 \newcommand{\ryan}[1]{\textcolor{brown}{}}
 \newcommand{\robin}[1]{\textcolor{blue}{}}
\newtheorem*{theorem*}{Theorem}
\newtheorem*{lemma*}{Lemma}
\newcommand{\braces}[2][]{#1\{#2 #1\}}
\newcommand{\parens}[2][]{#1(#2 #1)}
\newcommand{\angles}[2][]{#1\langle#2 #1\rangle}
\newcommand{\undir}{-\!\!\!\!\!\!-\,}
\newcommand{\set}[2][]{\braces[#1]{#2}}
\newcommand{\tuple}[2][]{\angles[#1]{#2}}
\newcommand{\dotcup}{\mathrel{\dot{\cup}}}
\newenvironment{customlem}[1]
  {\innercustomlem}
  {\endinnercustomlem}
\newenvironment{customthmalt}[1]
  {\innercustomthmalt}
  {\endinnercustomthmalt}
\newif\ifthesis
\begin{document}

\maketitle

\section*{Abstract}
In a decision problem, observations are said to be material
if they must be taken into account to perform optimally.
Decision problems have an underlying (graphical) causal structure, 
which may sometimes be used to evaluate certain observations as immaterial.
For soluble graphs --- ones where important past observations are remembered ---
there is a complete graphical criterion; one that rules out materiality 
whenever this can be done on the basis of the graphical structure alone.
In this work, we analyse a proposed criterion for insoluble graphs.
In particular, we prove that some of the conditions used to prove immateriality 
are necessary; when they are not satisfied, materiality is possible.
We discuss possible avenues and obstacles to proving necessity of the remaining conditions.



\section{Introduction}
We can view any decision problem as having an underlying causal structure --- 
a graph consisting of chance events, decisions and outcomes, and their causal relationships.
Sometimes, it is possible to evaluate key aspects of a decision problem from its causal structure alone.
For example, in \Cref{fig:linear-no-voi} and \Cref{fig:yes-voi}, we see two such causal structures.
For now, let us focus on the three endogenous vertices:
the observation $Z$, 
the decision (chosen by the decision-maker) $X$,
and the downstream outcome $Y$.
In each graph, $Z$ has an effect on $X$, which affects $Y$, 
but in \Cref{fig:yes-voi}, $Z$ also directly influences $Y$,
whereas in \Cref{fig:linear-no-voi}, it does not.

To fully describe a decision problem, we must specify probability distributions for 
each of the non-decision variables ---
distributions that must be compatible with the graphical structure.
In particular, the distribution for any variable must depend only on its direct causes, 
i.e.\ its parents, a condition known as Markov compatibility.
For example, in the causal structure shown in \Cref{fig:yes-voi}, 
one compatible decision problem is shown in the figure.
The variable $Z$ is a Bernoulli trial (i.e.\ a coin flip),
and the decision-maker is rewarded with $Y=1$ if they state the outcome of $Z$ 
(i.e.\ call the outcome of the coin flip), otherwise the reward is $Y=0$.
A variable is then said to be material if the attainable reward is greater given access to 
an observation than without it.
For example, by observing $Z$, the decision-maker can obtain a reward of $1$, such as with the policy $Y=Z$.
Without observing $Z$, any policy will achieve a reward of $0.5$.
The means that the value of information is $1-0.5=0.5$, and since this quantity is strictly 
positive, $Z$ is material.

For the causal structure shown in \Cref{fig:linear-no-voi}, we can instead make a deduction that applies 
to \emph{any} decision problem compatible with the graph.
In this case, for any such decision problem,
there will exist an optimal decision rule that ignores the value of $Z=z$ entirely.
One way to see this is that once a decision $X=x$ is chosen, 
the observation $Z$ becomes independent of $Y$, 
and so there is no reason for the decision to depend on it.
(This can be proved from the fact that $Z$ is d-separated from $Y$ given $X$.)
So for any decision problem compatible with this graph, $Z$ is immaterial.

There are many reasons that we may want to evaluate whether a causal structure 
allows an observation such as $Z$ to be material.
Firstly, for algorithmic efficiency ---
if an observed variable is immaterial, 
then the optimal policies are contained in a small subset of all available policies, 
that we can search exponentially more quickly.
(For example, in \Cref{fig:linear-no-voi}, there are two choices for $X$, 
but there are four deterministic mappings from $Z$ to $X$.)

Secondly, materiality can have implications regarding the fairness of a decision-making procedure.
Suppose that $Z$ designates the gender of candidates available to a recruiter, 
which are male $Z=1$ or female $Z=0$ with equal probability, 
while $X$ indicates whether that person is $X=1$ or is not $X=0$ recruited, 
and $Y$ indicates whether that person is $Y=1$ or is not $Y=0$ hired.
If $Y$ is correlated with $Z$ given $X$, then the applicant's gender is material for the recruiter,  
and to maximise the hiring probability, they will have to recruit applicants at different 
rates based on their gender.
If the causal structure is that of \Cref{fig:linear-no-voi}, then materiality can be ruled out, 
meaning that unfair behaviour is not necessary for optimal performance, 
whereas the causal structure of \Cref{fig:yes-voi} can incentivise unfairness.
Such an analyses can be used for well-studied concepts like counterfactual fairness \citep{Kusner2017}.
An arbitrary graph where $Z$ is a sensitive variable (such as gender),
counterfactual fairness can arise only when there is a path $Z \to\ldots\to O \to X$, where the observation $O$ is material \citep{everitt2021agent}.

Thirdly, materiality can have implications for AI safety ---
if $Z$ represents a corrective instruction from a human overseer,
and there exists no path $Z \to \ldots \to O \to X$ where $O$ is material, 
then there exist optimal policies that ignore this instruction \citep{everitt2021agent}.%
Materiality is also relevant for evaluations of agents' intent \citep{halpern2018towards,ward2024reasons}, 
and relatedly, their incentives to control parts of the environment \citep{everitt2021agent,farquhar2022path}.
For an agent to intentionally manipulate a variable $Z$ 
to obtain an outcome $Y=y$, there must be a path 
$p:X \to \ldots \to Z \to\ldots\to Y$ 
where for each of its decisions $X'$ lying on $p$, 
the parent $O'$ along $p$ is material for $X'$.
In general, a stronger criterion for ruling out materiality will allow us to 
rule out unfair or unsafe behaviour for a wider range of agent-environment interactions \citep{everitt2021agent}.

\begin{figure}[ht]
    \centering
\hfill
\begin{subfigure}[b]{0.23\textwidth}\centering
\begin{tikzpicture}[rv/.style={circle, draw, thick, minimum size=6mm, inner sep=0.8mm}, node distance=15mm, >=stealth]
  \node (Z) [rv]  {$Z$};
  \node (X) [rv,red,below = 10mm of Z] {$X$};
  \node (Y) [rv, right = 10mm of X]  {$Y$};
  \node (dummy) [opacity=0, rv, right = 10mm of X,label=below:{$\vphantom{y=\llbracket z=x \rrbracket}$}]  {$Y$};

  \draw[->, very thick] (Z) -- (X);
  \draw[->, very thick] (X) -- (Y);
\end{tikzpicture}
\caption{$Z$ is immaterial for $X$.} \label{fig:linear-no-voi}
\end{subfigure}\hfill\begin{subfigure}[b]{0.23\textwidth}\centering
\begin{tikzpicture}[rv/.style={circle, draw, thick, minimum size=6mm, inner sep=0mm}, node distance=15mm, >=stealth]
  \node (Z) [rv, label=right:{$Z =\Eps_Z$}]  {$Z$};
  \node (EZ) [rv, above = 5mm of Z, label=right:$\Eps_Z \sim U(\mathbb{B})$, gray]  {$\Eps_Z$};
  \node (X) [rv,red,below = 10mm of Z, label=below:{$X\!=\!Z$}] {$X$};
  \node (Y) [rv, right = 10mm of X,label={[xshift=2mm,yshift=-12.4mm]$f_Y(\!\pa_y\!)\!=\!\llbracket \!z\!=\!x \!\rrbracket$}]  {$Y$};
  
  \draw[->, very thick, gray] (EZ) -- (Z);
  \draw[->, very thick] (Z) -- (X);
  \draw[->, very thick] (X) -- (Y);
  \draw[->, very thick] (Z) -- (Y);
\end{tikzpicture}
\caption{$Z$ is material for $X$.}\label{fig:yes-voi}
\end{subfigure}
\hfill
\begin{subfigure}[b]{0.35\textwidth}\centering
\begin{tikzpicture}[rv/.style={circle, draw, thick, minimum size=6mm, inner sep=0mm}, node distance=15mm, >=stealth]
  \node (Z) [rv, label=right:{$Z =\Eps_Z$}]  {$Z$};
  \node (EZ) [rv, above = 5mm of Z, label=right:$\Eps_Z \sim U(\mathbb{B})$, gray]  {$\Eps_Z$};
  \node (X) [rv,red,below = 10mm of Z, label=below:{$X\!=\!Z$}] {$X$};
  \node (Xp) [rv,red,right = 7mm of X, label=below:{$X'\!=\!X$}] {$X'$};
  \node (Y) [rv, right = 10mm of Xp,label={[xshift=4.5mm,yshift=-12.4mm]$f_Y(\!\pa_y\!)\!=\!\llbracket \!z\!=\!x \!\rrbracket$}]  {$Y$};
  
  \draw[->, very thick] (Z) -- (X);
  \draw[->, very thick, gray] (EZ) -- (Z);
  \draw[->, very thick] (X) -- (Xp);
  \draw[->, very thick] (Xp) -- (Y);
  \draw[->, very thick] (Z) -- (Y);
\end{tikzpicture}
\caption{$Z$ is material for $X$.}\label{fig:yes-voi-no-sr}
\end{subfigure}\hfill\null
\caption[Materiality examples]{Three graphs, with decisions in red, and a real-valued outcome $Y$. We write $U(\bool)$ for a uniform distribution over $\bool$, i.e.\ a Bernoulli distribution with $p=0.5$.} \label{fig:intro-voi}
\end{figure}
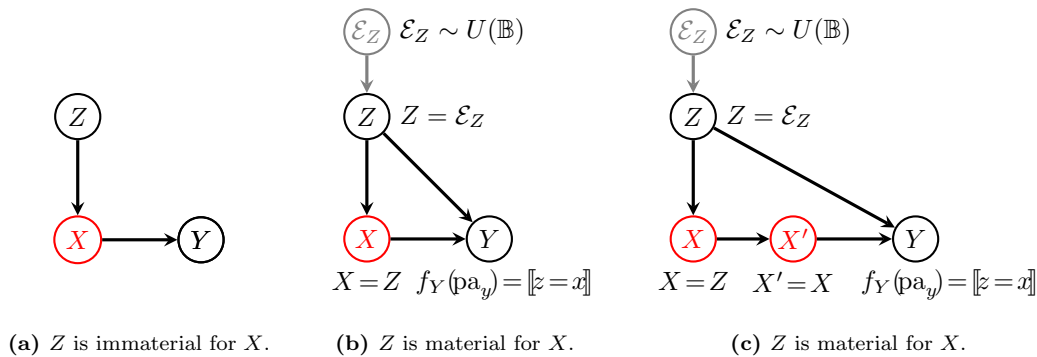

Any procedure for establishing immateriality based on the causal structure
may be called a \emph{graphical criterion}. 
For example, 
if a decision $X$ is not an ancestor of the outcome $Y$, then all of the variables observed at $X$ are immaterial.
An ideal graphical criterion would be proved \emph{complete}, 
in that it can establish immateriality whenever this is possible from the graphical structure alone.
Clearly, this criterion is not complete, because in \Cref{fig:linear-no-voi}, $X$ is an ancestor of the outcome, 
but we still proved $Z$ immaterial.
So far, a graphical criterion from \citet{van2022complete} has been proved complete, 
but only under some significant restrictions.
The causal structure must be \emph{soluble}, meaning that
all of the important information observed from past decisions is remembered at later decision points.
Also, no criteria has been proved complete for identifying immaterial decisions, i.e.\ 
past decisions that can be safely forgotten.

For insoluble graphs, there the criterion of \citet[Thm. 2]{lee2020characterizing}, 
which can identify immaterial decisions and is (strictly) more potent in general.
However, it is not yet known whether this criterion is complete.
In particular, it is not yet clear whether several of its conditions are necessary.
For example, one case where all existing criteria are silent is the simple graph shown in \Cref{fig:yes-voi-no-sr} --- 
we would like to know whether we can rule out $X$ being a material observation for $X'$.
We cannot use \citet{van2022complete} because $X$ is a decision, and because the graph is insoluble.%
\footnote{Formally, this is because $W \not \dsep Y \mid X \cup X'$, and $X' \not \dsep Y \mid X \cup W$, as per the definition of solubility 
that we will review in \Cref{sec:past-work}.}
Furthermore, we cannot establish immateriality using \citet[Thm. 2]{lee2020characterizing}, 
because it violates a property that we term
LB-factorizability, which we will discuss in \Cref{sec:past-lb}.%
\footnote{Specifically, requirement I of LB-factorizability is violated
because $Y$ is d-connected to $\spi_{X'}$ given $X'$.}

By studying \Cref{fig:yes-voi-no-sr} in a bespoke fashion, we find 
that there exists a decision problem with the given causal structure, 
where $X$ is material for $X'$.
As shown in \Cref{fig:yes-voi-no-sr}, $Z$ is a Bernoulli variable, 
and $Y$ is equal to $1$ if $Z=X'$ and to $0$ otherwise. 
If $X$ is observed by $X'$, then a reward of $\EE[Y]=1$ can be achieved by the policy $X'=X=Z$.
If $X$ is not observed, the greatest achievable reward is lower, at $\EE[Y]=0.5$, implying materiality.%

This raises a question: by generalising this construction,
can we prove that requirement I of LB-factorizability 
is necessary to prove immateriality for a wide class of graphs?
This work will prove that this requirement 
is indeed necessary, meaning that materiality cannot be excluded 
for a wide class of graphs including \Cref{fig:yes-voi-no-sr}.

It remains an open question whether the criterion of \citet[thm. 2]{lee2020characterizing} 
as a whole is complete, in that its other conditions are necessary for establishing immateriality.
In the case that it is complete, our work is a step toward proving this.
On the other hand, we also present some graphs where materiality is difficult to establish, 
that --- if the criterion is not complete --- could bring us closer to a proof of incompleteness.

The structure of the paper is as follows.
In \Cref{sec:setup}, we will recap the formalism used by \citet{lee2020characterizing} for modelling decision problems, 
based on structural causal models.
In \Cref{sec:past-work}, we will review existing procedures for proving that an observation can or cannot be material.
In \Cref{sec:main-result}, we will establish our main result: that requirement I of LB-factorizability is 
necessary to establish immateriality.
In \Cref{sec:further-steps}, we present some analogous results for other requirements of LB-factorizability, 
that could serve as a building block for proving the necessity of those requirements.
We then illustrate the problems that arise in trying to prove necessity of those further requirements, 
and outline some possible directions for further work.
Finally, in \Cref{sec:conclusion}, we conclude.


%

\section{Setup} \label{sec:setup}
Our analysis will follow \citet{lee2020characterizing} by using the structural causal model (SCM) framework \citep[Chapter 7]{pearl2009causality},
although the results also apply equally to Bayesian networks and influence diagrams.

\subsection{Structural causal models} 
A structural causal model (SCM) $\calM$ is  a tuple $\tuple{\sU,\sV,P(\sU),\mathbf{F}}$, where
$\sU$ is a set of variables determined by factors outside the model, called \emph{endogenous}
following a joint distribution $P(\sU)$, and 
$\sV$ is a set of endogenous variables whose values are determined by a collection of functions $\mathbf{F} = \set{f_V}_{V\in\sV}$ such that $V \gets f_V(\Pa(V), \sU_V)$ where $\Pa(V) \subseteq \sV\setminus\set{V}$ is a set of 
endogenous variables and $\sU_V\subseteq \sU$ is a set of exogenous variables.
The observational distribution $P(\sv)$ is defined as 
$\sum_{\su} \prod_{V\in\sV} P(v|\mathbf{pa}_V,\su_V) P(\su)$, 
where $\su_V$ is the assignment $\su$ restricted to variables $\sU_V$.
Furthermore, $\doo(\sX=\sx)$ represents 
the operation
of fixing a set $\sX$ to a constant $\sx$ regardless of their original mechanisms. 
Such intervention induces a submodel $\calM_\sx$, 
which is $\calM$ with $f_X$ replaced by $x$ for $X\in\sX$.
Then, an interventional distribution 
$P(\sv{\setminus}\sx|\doo(\sx))$ can be computed as the observational distribution 
in $\calM_\sx$.
The induced graph of an SCM $\calM$ is a DAG $\calG$ on only the 
endogenous variables $\sV$ where 
(i) $X {\to} Y$ if $X$ is an argument of $f_Y$; and 
(ii) $X {\leftrightarrow} Y$ if $\sU_X$ and $\sU_Y$ are dependent, i.e.\ for any $\su_X,\su_Y$, $P(\su_X,\su_Y) \neq P(\su_X) \times P(\su_Y)$.

We use the notation $\Pa(X)$, $\Ch(X)$, $\Anc(X)$ and $\Desc(X)$ to represent the parents, children, ancestors and descendants of a variable $X$, respectively, and take  \label{pg:tcc:pachancdesc}
ancestors and descendants to include the node $X$ itself.%
\footnote{Note that $\Pa(X)$ is an intentional reuse of the notation used to describe the arguments of $f_X$ in the SCM definition, because the endogenous arguments of $f_X$ 
and the parents of $X$ in the induced graph are the same variables.}

We will use the notation $V_1 \undir V_2$ to designate an edge whose direction may be $V_1 \to V_2$ or $V_1 \gets V_2$.
For a path $V_1 \undir V_2 \undir \cdots \undir V_\ell$, we will use the shorthand $V_1 \upathto V_\ell$, 
and for a directed path $V_1 \to \cdots \to V_\ell$, the shorthand $V_1 \pathto V_\ell$.
For a path $p:A \upathto B \upathto C \upathto D$, 
we will describe the segment $B \upathto C$
using the shorthand $B \overset{p}{\upathto} C$.
We will use the shorthand $\sV_{1:N}$ for
a sequence of variables $V_1,\ldots V_N$ indexed by $1,\ldots,N$, 
$\sv_{1:N}$ for a sequence of assignments, and
$\sp_{1:N}$ for a set of paths $p_1,\ldots p_N$. \label{pp:seq-colon}

There is certain notation that we will use repeatedly when constructing causal models, 
such as tuples, bitstrings, indexing, and Iverson brackets.
We will write a tuple as $z := \langle x, y \rangle$, and this may be indexed as $z[0] = x$.
A bitstring of length $n$, i.e.\ a tuple of $n$ Booleans, may be written as $\bool^n$, 
and a uniform distribution over this space, as $U(\bool^n)$.
We will denote a bitwise XOR operation by $\oplus$ so that, for example, $01 \oplus 11 = 10$.
Bitstrings may also be used for indexing, for example, the $y$\textsuperscript{th} bit of $x$ may be written as as $x[y]$, 
and the leftmost bits are of higher-order so that, for example, $0100[01] = 1$. \label{pp:indexing}
Similarly, for random variables $X,Y$, we will write $X[Y]$ for a variable equal to $x[y]$ when $X=x$ and $Y=y$.
Finally, the Iverson bracket $\llbracket P \rrbracket$ is equal to $1$ if $P$ is true, and $0$ otherwise.

\subsection{Modelling decision problems}
To use an SCM to define a decision problem, 
we need to specify what policies the agent can select from
and what goal the agent is trying to achieve.

We will describe the set of available policies using
a Mixed Policy Scope (\citealp{lee2020characterizing}), which casts certain variables as decisions, and 
others as \emph{context variables} or ``observations'' $\sC_X$, 
that each decision $X$ is allowed to depend on.
Following \citet{lee2020characterizing}, we will consistently illustrate decision variables with red circles, 
as in \Cref{fig:intro-voi}.

\begin{definition}[Mixed Policy Scope (MPS)]\label{def:mixed-policy-structure}
      Given a DAG $\calG$ on vertices $\sV$, a \emph{mixed policy scope} $\calS = \langle X, \sC_X \rangle_{X \in \sX(\calS)}$ 
      consists of a set of decisions $\sX(\calS) \subseteq \sV$
      and a set of context variables $\sC_X \subseteq \sV$ for each decision.
\end{definition}
For a set of decisions $\sX'$, we define their contexts as $\sC_{\sX'} = \bigcup_{X \in \sX'} \sC_X$.

A policy consists of a
probability distribution for each decision $X$, conditional on its contexts $\sC_X$. \label{def:contexts}

\begin{definition}[Mixed Policy]\label{def:mixed-policy}
Given an SCM $\calM$ and scope $\calS=\tuple{X,\sC_X}$,
	a \emph{mixed policy} $\spi$ (or a \emph{policy}, for short)
 contains for each $X$ a decision rule $\pi_{X \mid \sC_X}$,
	where $\pi_{X|\sC_X}: \dom{X}\times \dom{\sC_X} \mapsto [0,1]$ is a proper probability mapping.%
 \footnote{
 Following \citet{lee2020characterizing}, we term this a ``mixed policy'' due to its 
 including mixed strategies.
 Note that game theory also has a distinction between ``mixed'' policies, where 
 the decision rules share a source of randomness, and ``behavioural'' policies, where 
 they do not, and in this sense, the ``mixed'' policies of \citet{lee2020characterizing} 
 are actually \emph{behavioural}.
 }
\end{definition}
We will say that such a policy $\spi$ \emph{follows} the scope $\calS$, written $\spi \sim \calS$.
A mixed policy is said to be \emph{deterministic} if every decision is a deterministic function of its contexts.

Once a policy is selected, we would have a new causal structure, described by a \emph{scoped graph}.

\begin{definition}[Scoped graph] \label{def:scoped-graph}
The \emph{scoped graph} $\calG_{\calS}$ is obtained by $\calG$,
by replacing, for each decision $X \in \sX(\calS)$, 
all inbound edges to $X$ with edges $C \to X$ for every $C \in \sC_X$.
We only consider scopes for which $\calG_{\calS}$ is acyclic.
\end{definition}


We will designate one real-valued variable $Y \not \in \sX(\calS) \cup \sC(\calS)$ as the outcome node (also called the ``utility'' variable).
To calculate the expected utility
under a policy $\spi \sim \calS$, 
let $\sC^- = \left(\bigcup_{X \in \sX(\calS)} \sC_X \right)\setminus \sX(\calS)$ be the \textit{non-action} contexts.
Then, the expected utility is: \\ 
%
%
$\mu_{\spi,\calS} = \sum_{y,\sx,\sc^-} y P_\sx(y,\sc^-) \prod_{X\in \sX(\calS)} \pi(x|\sc_x).$
When the scope is obvious, we will simply write $\mu_{\spi}$.

This paper is concerned with materiality --- whether removing one context variable from one decision will decrease the expected utility attainable by the best policy.
We define it in terms of the value of information \citep{howard1990influence,everitt2021agent}.

\begin{definition}[Value of Information] \label{def:voi}
Given an SCM $\calM$ and scope $\calS$, the \emph{maximum expected utility} (MEU) is $\mu^*_\calS = \max_{\spi \sim \calS} \mu_{\spi,\calS}$.
The \emph{value of information} (VoI) of context $Z \in \sC_X$ for decision $X \in \sX(\calS)$ is $\mu^*_\calS - \mu^*_{\calS_{Z \not \to X}}$, 
where $\calS_{Z \not \to X}$ is defined as $\tuple{X',\sC_{X'}}_{X' \in \sX(\calS) \setminus \{X\}} \cup \tuple{X,\sC_X \setminus \{Z\}}$.

The context $Z$ is \emph{material} for $X$ in an SCM $\calM$
if $Z$ has strictly positive value of information for $X$, otherwise it is \emph{immaterial}.
\end{definition}

\subsection{Graphical criteria for independence}
Knowing when variables are independent is an important step in 
identifying immaterial contexts, as we will discuss in the next section.
So, we will make repeated use of d-separation, a graphical criterion that establishes the 
independence of variables in a graph.

\begin{definition}[d-separation; \citealp{Verma1988soundness}]\label{def:d-separation}
    A path $p$ is said to be d-separated by a set of nodes $\sZ$ if and only if:
    \begin{enumerate}
    \item $p$ contains a collider $X \to W \gets Y$ such that the middle node $W$ is not in $\sZ$ and no descendants of $W$ are in $\sZ$, or
    \item $p$ contains a chain $X \to W \to Y$ or fork $X \gets W \to Y$ where $W$ is in $\sZ$, or
    \item one or both of the endpoints of $p$ is in $\sZ$.
    \end{enumerate}
    A set $\sZ$ is said to d-separate $\sX$ from $\sY$,
written ${(\sX \dsep_\calG \sY \mid \sZ)}$, if and only if $\sZ$ d-separates every path
from a node in $\sX$ to a node in $\sY$. Sets that are not d-separated are
called d-connected, written $\sX \not \dsep_\calG \sY \mid \sZ$.~\looseness=-1
\end{definition}

When the graph is clear from context, we will write $\dsep$ in place of $\dsep_\calG$.
When 
sets $\sX,\sW,\sZ$ satsify $\sX \dsep \sW \mid \sZ$
they are conditionally independent:
$P(\sX,\sW \mid \sZ)=P(\sX \mid \sZ)P(\sW \mid \sZ)$ \citep{Verma1988soundness}.

\ifthesis
\newcommand{\LBfactlines}{8}
\else
\newcommand{\LBfactlines}{11}
\fi
\begin{wrapfigure}[\LBfactlines]{r}{0.31\linewidth}
\centering
\begin{tikzpicture}[rv/.style={circle, draw, thick, minimum size=6mm, inner sep=0.8mm}, node distance=15mm, >=stealth]\centering
  \node (Z) [rv,red]  {$Z$};
  \node (Pi) [rv,gray,left = 10mm of Z,inner sep=0.2mm] {$\pi_X$};
  \node (X) [rv,red,below = 10mm of Z] {$X$};
  \node (Y) [rv, right = 10mm of X]  {$Y$};
  
  \draw[->, very thick] (Pi) -- (X);
  \draw[->, very thick] (Z) -- (X);
  \draw[->, very thick] (X) -- (Y);
  \draw[->, very thick] (Z) -- (Y);
\end{tikzpicture}
\caption[Immaterial decision example]{A graph where decisions $Z,X$ jointly determine the 
outcome $Y$. A policy node $\pi_X$ is shown, which decides the decision rule at $X$.} \label{fig:triangle-no-voi}
\end{wrapfigure}
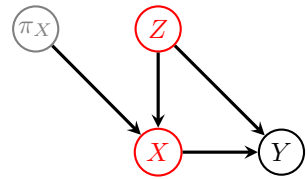

If we know that a deterministic mixed policy is being followed, then we may deduce further conditional independence relations.
This is because conditioning on variables $\sV$ may determine some decision variables, 
which are called ``implied'' \citep{lee2020characterizing}, or ``functionally determined'' \citep{Geiger1990completeness},
making them conditionally independent of other variables in the graph.

\begin{definition}[Implied variables; \citealp{lee2020characterizing}] \label{def:implied-vars}
To obtain the \emph{implied variables} $\lceil \sZ \rceil$ for variables $\sZ$ in $\calG$
given a mixed policy scope $\calS$,
begin with $\lceil \sZ \rceil \gets \sZ$,
then add to $\lceil \sZ \rceil$ every decision $X$ such that $C_X \subseteq \lceil \sZ \rceil$, until convergence.
\end{definition}

For example, in \Cref{fig:triangle-no-voi}, we see that $\lceil X \rceil=\{Z,X\}$,
so $Z$ is d-separated from $Y$ given $\lceil X \rceil$.
This means that under a deterministic mixed policy, $Z$ and $Y$ are statistically independent given $X$.
This has implications for materiality.
In particular, it means that the best deterministic mixed policy $Z=z,X=x$ does not need to observe 
$Z$ at $X$.
Moreover, the performance of the best deterministic mixed policy can never be surpassed by 
a stochastic policy (\citep[Proposition 1]{lee2020characterizing}), so $Z$ is immaterial.

\section{Review of graphical criteria for materiality} \label{sec:past-work}
We will now review some existing techniques for proving whether or not a graph is compatible with some variable $Z$ being material for some decision $X$.
\subsection{Single-decision settings} \label{sec:past-1dec-criteria}
In the single-decision setting,
there is a sound and complete criterion for materiality:
in a scoped graph $\calG(\calS)$, there exists an SCM
where the context $Z \in \sC_X$ is material
if and only if $Z \not \dsep Y \mid \sC_X \cup \{X\} \setminus \{Z\}$
and the outcome $Y$ is a descendant of $X$ \citep{lee2020characterizing,everitt2021agent}.
This statement can be split into proofs for the \emph{only if} and \emph{if} directions, 
both of which are relevant to the current paper.

The argument for the \emph{only if} is that
if $X$ is not an ancestor of the outcome $Y$, then its policy is completely irrelevant to the expected utility, and so all of its contexts are immaterial, 
and if $Z$ is conditionally independent of the outcome $Y$ given the decision and other observations, then it may be safely ignored without changing the outcome.
These arguments are important to us because they remain equally valid as we move to a 
multi-decision setting ---
a context must be an ancestor of $Y$ ,and must provide information about $Y$ over and above 
the other contexts, in order to be material.

The \emph{if} direction is proved by
constructing a decision problem where $Z$ is material.
By assumption, there is a directed path $X \pathto Y$, called 
the \emph{control path},
and a path $Z \upathto Y$, active given $C_X \cup \{X\} \setminus \{Z\}$, 
called the \emph{info path}.

In the SCM that is constructed, the variable $Z$ will contain information about $Y$
(due to a conditional dependency induced by the info path), 
and this will inform $X$ regarding how to influence $Y$
(using influence that is transmited along the control path).

The construction has two cases, which differ based on whether or not the info path 
contains colliders \citep{everitt2021agent,lee2020characterizing}.
For the case where it does not contain colliders, 
the graph and construction are shown in \Cref{fig:yes-voi-directed}.
(Note that when the info path is a directed path, we take this to be a special case where $V=Z$.)
The functions along the info path (dashed line) are chosen to copy $V$ to $\Pa_Y$ and to $Z$, and
$Y$ equals its maximum value of $1$ only if $X$ equals $V$, and $0$ otherwise.
So, $X$ must copy $Z$ to achieve the maximum expected utility.
Without the context $Z$, the maximum expected utility is $0.5$, proving materiality.%
\footnote{To be precise, the formalism of \citet{lee2020characterizing} also allows the active path from $Z$ to include one or more bidirected edges $V \leftrightarrow Y$, but to deal with these cases, 
we begin with the distribution that we would use for a path $V \gets L \to Y$, then marginalise out $L$.}

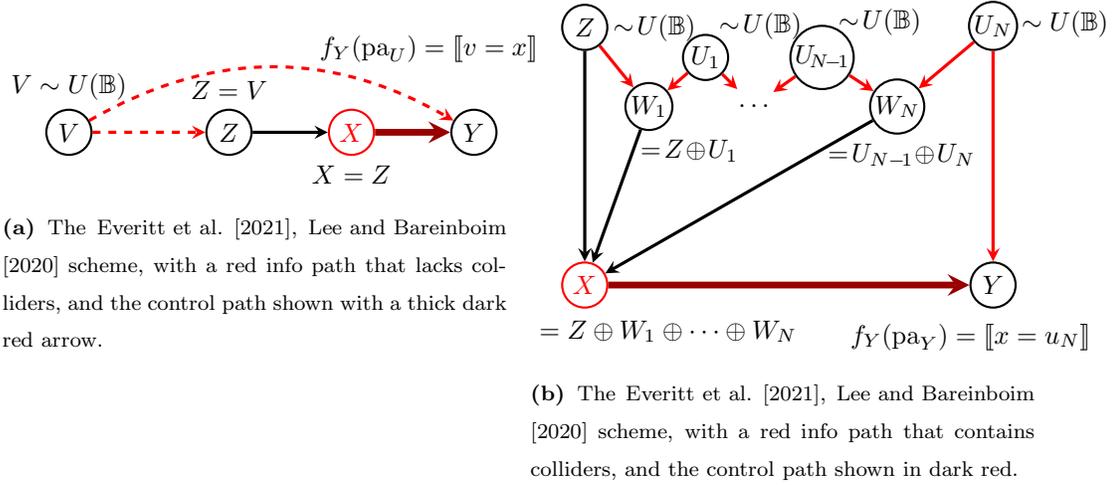
\begin{figure}[t]\centering
\begin{subfigure}[b]{0.44\textwidth}\centering
\begin{tikzpicture}[rv/.style={circle, draw, thick, minimum size=6mm, inner sep=0mm}, node distance=15mm, >=stealth]
  \node (V) [rv, label=$V \sim U(\mathbb{B})$]  {$V$};
  \node (Z) [rv, right = 15mm of V, label={$Z=V$}]  {$Z$};
  \node (X) [rv,red,right = 10mm of Z, label=below:{$X = Z$}] {$X$};
  \node (Y) [rv, right = 10mm of X,label={[xshift=-6mm,yshift=5mm]$f_Y(\pa_U)=\llbracket v=x \rrbracket$}]  {$Y$};
  
  \draw[->, very thick] (Z) -- (X);
  \draw[->, line width=0.9mm, red!60!black] (X) -- (Y);
  \draw[->, red, very thick, dashed] (V) -- (Z);
  \draw[->, red, very thick, dashed, bend left] (V) to (Y);
\end{tikzpicture}
\caption{The \citet{everitt2021agent,lee2020characterizing} scheme, with a red info path that lacks colliders, 
and the control path shown with a thick dark red arrow.}\label{fig:yes-voi-directed}
\end{subfigure}\hspace{2mm}
\begin{subfigure}[t]{0.44\textwidth}\centering
\begin{tikzpicture}[rv/.style={circle, draw, thick, minimum size=6mm, inner sep=0.2mm}, node distance=15mm, >=stealth]
  \node (X0) [rv,red, label={[xshift=11.1mm,yshift=-12mm]$=Z\oplus W_1 \oplus \cdots \oplus W_N$}]              {$X$};
  \node (Z0) [rv, above = 28mm of X0, label=0:{$\!\sim\! U(\bool)$}]  {$Z$};
  \node (Y) [rv, right=48mm of X0, label={[xshift=-3mm,yshift=-13mm,align=left]$f_Y(\pa_Y)=\llbracket x=u_N \rrbracket$}] {$Y$};
  \node (Vl) [rv, above = 28mm of Y, label={[xshift=9.6mm,yshift=-6mm]$\sim U(\bool)$}]  {$U_{N}$};
  \node (W1) [rv, below right = 6mm and 4mm of Z0, label=-80:{$\!\!\!\!\!=\!Z\!\oplus\! U_{1}$}]  {$W_{1}$};
  \node (V1) [rv, above right = 2mm and 3mm of W1, label={[xshift=7.6mm,yshift=-2mm]$\!\sim\! U(\bool)$}]  {$U_1$};
  \node (ldots) [rv, draw=none, below right = 2mm and 2mm of V1]  {$\ldots$};
  \node (Vlm1) [rv, right = 8mm of V1, label={[xshift=8.0mm,yshift=-2.6mm]$\!\sim\! U(\bool)$}]  {$U_{\!N\!-\!1}$};
  \node (Wl) [rv, right = 26mm of W1, label=-90:{$\,\,=\!U_{N-\!1}\!\oplus\! U_N$}]  {$W_{N}$};
  
  \draw[->, very thick] (Z0) -- (X0);
  \draw[->, very thick, red] (Z0) -- (W1);
  \draw[->, very thick] (W1) -- (X0);
  \draw[->, very thick, red] (V1) -- (W1);
  \draw[->, very thick, red] (V1) -- (ldots);
  \draw[->, very thick, red] (Vlm1) -- (ldots);
  \draw[->, very thick, red] (Vlm1) -- (Wl);
  \draw[->, very thick, red] (Vl) -- (Wl);
  \draw[->, very thick, red] (Vl) -- (Y);
  \draw[->, very thick] (Wl) -- (X0);
  \draw[->, line width=0.9mm, red!60!black] (X0) -- (Y);
\end{tikzpicture}
\caption{The \citet{everitt2021agent,lee2020characterizing} scheme, with a red info path that contains colliders, and the control path shown in dark red.} \label{fig:yes-voi-non-directed}
\end{subfigure}\hspace{3mm}
\caption[Past materiality constructions]{Three decision problems where $Z$ is material for $X$. 
For readability, we marginalise out exogenous variables from the SCM, so $z \sim U(\bool)$ can be understood as shorthand for $z=\eps_Z$ where $\eps_Z \sim U(\bool)$, and so on.
} \label{fig:problem-alternative-ways}
\end{figure}

\begin{figure}
\begin{subfigure}[b]{0.48\textwidth}\centering
\begin{tikzpicture}[rv/.style={circle, draw, thick, minimum size=6mm, inner sep=0mm}, node distance=15mm, >=stealth]
  \node (Z) [rv, label={[xshift=5mm,yshift=0mm]$Z \sim U(\mathbb{B})$}]  {$Z$};
  \node (X) [rv,red,below = 10mm of Z, label={[xshift=6mm,yshift=-9mm]$=Z$}] {$X$};
  \node (Zp) [rv,right = 13mm of X, label={[xshift=6mm,yshift=-9mm]$=X$}] {$Z'$};
  \node (Wp) [rv,below right = 7mm and 2mm of Zp] {$W'$};
  \node (Vp) [rv,below = 18mm of X, xshift=8mm] {$U'$};
  \node (Xp) [rv,red,right = 16mm of Zp, label={[xshift=6mm,yshift=-9mm]{$=Z'$}}] {$X'$};
  \node (Y) [rv, right = 17mm of Xp,label={[xshift=-6mm,yshift=1.5mm,align=left]$f_Y(\pa_Y)={\color{red}\llbracket z=x' \rrbracket}$}] {$Y$};

  \draw[->, very thick] (Z) -- (X);
  \draw[->, very thick] (Z) -- (Xp);
  \draw[->, very thick] (X) -- (Zp);
  \draw[->, very thick] (Zp) -- (Xp);
  \draw[->, very thick] (Zp) -- (Wp);
  \draw[->, very thick] (Vp) -- (Wp);
  \draw[->, very thick, color=black] (Wp) -- (Xp);
  \draw[->, very thick] (Vp) to [bend right=28] (Y);
  \draw[->, very thick] (Xp) -- (Y);
  \draw[->, very thick,red] (Z) -- (Y);
\end{tikzpicture}
\caption{The \citet{everitt2021agent} scheme is applied using just the red info path; $Z$ is immaterial for $X$.} \label{fig:soluble-everitt}
\end{subfigure}\hspace{4mm}
\begin{subfigure}[b]{0.48\textwidth}\centering
\begin{tikzpicture}[rv/.style={circle, draw, thick, minimum size=6mm, inner sep=0mm}, node distance=15mm, >=stealth]
  \node (Z) [rv, label={[xshift=5mm,yshift=0mm]$Z \sim U(\mathbb{B})$}]  {$Z$};
  \node (X) [rv,red,below = 10mm of Z, label={[xshift=6mm,yshift=-9mm]$=Z$}] {$X$};
  \node (Zp) [rv,right = 13mm of X, label={[xshift=6mm,yshift=-9mm]$=X$}] {$Z'$};
  \node (Wp) [rv,below right = 7mm and 2mm of Zp, label=right:{$=\langle Z',U'[Z'] \rangle$}] {$W'$};
  \node (Vp) [rv,below = 18mm of X, xshift=8mm, label={[xshift=0mm,yshift=1mm]$U' \sim \mathbb{B}^2$}] {$U'$};
  \node (Xp) [rv,red,right = 16mm of Zp, label={[xshift=6mm,yshift=-9mm]{$=W'$}}] {$X'$};
  \node (Y) [rv, right = 17mm of Xp,label={[xshift=-11mm,yshift=3.5mm,align=left]$f_Y(\pa_Y)={\color{red}\llbracket z=x'[0] \rrbracket}$ \\ $+ {\color{blue}\llbracket u'[x'[0]]=x'[1] \rrbracket}$}] {$Y$};

  \draw[->, very thick] (Z) -- (X);
  \draw[->, very thick] (Z) -- (Xp);
  \draw[->, very thick] (X) -- (Zp);
  \draw[->, very thick] (Zp) -- (Xp);
  \draw[->, very thick,blue] (Zp) -- (Wp);
  \draw[->, very thick,blue] (Vp) -- (Wp);
  \draw[->, very thick, color=black] (Wp) -- (Xp);
  \draw[->, very thick, color=blue] (Vp) to [bend right=28] (Y);
  \draw[->, very thick] (Xp) -- (Y);
  \draw[->, very thick,red] (Z) -- (Y);
\end{tikzpicture}
\caption{The \citet{van2022complete} scheme is applied, using the red and blue info paths; $Z$ is material for $X$.} \label{fig:van-merwijk-scheme}
\end{subfigure}
\caption[Materiality in soluble graphs]{Two decision problems on a soluble graph.} \label{fig:soluble-past}
\end{figure}
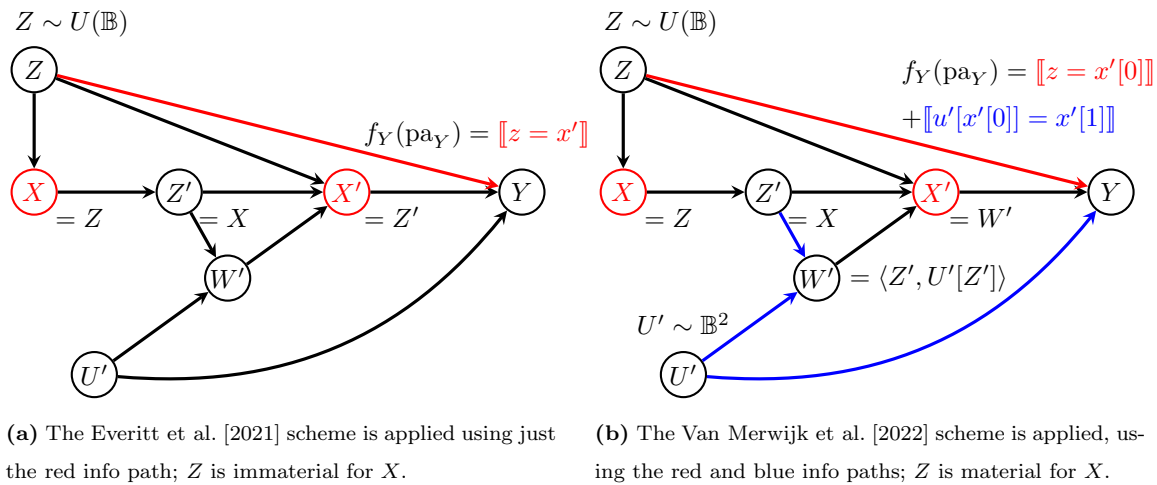

For the case where the info path does contains a collider, 
the graph and construction from \citet{everitt2021agent,lee2020characterizing}
are shown in \Cref{fig:yes-voi-non-directed}.
Each fork $U_i$ in the info path, along with $Z$, generates a random bit, 
while each collider $W_i$ is assigned the XOR ($U_{i-1}\oplus U_i$) of its two parents.
By observing $z$ and the values $\sw_{1:N}$, the agent has just enough information to recover $u_N$.
In particular, hte policy that sets $x$ equal to the XOR of $z$ and $\sw_{1:N}$, 
obtains $x=u_N$ and achieves the MEU, $\EE[Y]=1$.
Without the context $Z$, the MEU becomes $0.5$, so $Z$ is material.

\subsection{Soluble multi-decision settings} \label{sec:past-soluble}
This approach has been generalised to deal with multi-decision graphs 
that are \emph{soluble} (also known as graphs that respect ``sufficient recall'').

To recap, a graph is said to be soluble if there is an ordering 
$\prec \hspace{1mm} = \langle X_1,\ldots,X_N \rangle$
over decisions 
such that at for every $X_i$, 
for every previous decision or context
$V \in \{X_{j} \cup C_{X_j} \mid j \prec i \}$, 
we have 
$V \not \in \Anc(Y)$
or 
$V \dsep Y \mid \{X_i\} \cup C_{X_i}$.
That is, past decisions and contexts do not contain any information
that is relevant for a later decision, 
and unknown at the time that this later decision is made.
For example, in \Cref{fig:soluble-everitt}, using the ordering $X \prec X'$,
the nodes $Z,X$ are d-separated from $Y$ by $X'$ and its contexts $\{Z,Z',W\}$, 
which implies solubility.

For soluble graphs, there exists a complete criterion, for discerning whether a non-decision context $Z$ is material for a decision $X$.
If $X$ lacks a \emph{control path} (a directed path to $Y$), 
or $Z$ lacks an info path (a path to $Y$, active given $\sC \setminus \{Z\}$), 
then $Z$ is immaterial.
Conversely, if in a graph, every $X$ decision has a control path, and each context $Z$ has an info path, 
then every context is material in some decision problem with that causal structure \citep[Theorem 7]{van2022complete}.%
\footnote{
In full generality, the result allows an info path to terminate at another context, rather than at $Y$.
This detail is not pertinent to the methods used to derive our main result in \Cref{sec:main-result}, although we do 
consider this scenario in \Cref{sec:further-steps}.}
For example, in the graph of \Cref{fig:soluble-everitt},
every decision is an ancestor of $Y$, and every context has an info path, 
(the info paths include $Z \to Y$, $Z' \to W' \gets U' \to Y$, and $W' \gets U' \to Y$),
so, all contexts may be material in at least one decision problem with this causal structure.

It will be important for us to understand what obstacles can arise in proving materiality in multi-decision graphs, 
such as was required in proving \citep[Theorem 7]{van2022complete}.
For example, suppose that we seek to construct a decision problem where $Z$ is material for the graph in in \Cref{fig:soluble-past}
Suppose that we apply the single-decision construction of \citet{everitt2021agent} to this graph.
First, we would identify the info path $Z \to Y$ and the control path $X \to Z' \to X' \to Y$.
The info path has no colliders, so we will construct a decision problem using the scheme 
from \Cref{fig:yes-voi-directed}, and the result is shown in \Cref{fig:soluble-everitt}.
The idea of this construction is that $X$ should have to copy $Z$ in order 
for the value $z$ transmitted by the info path to match the value $x'$ transmitted by the control path.
We see, however, that whatever action $x$ is selected,
the decision $X'$ can assume the value $z$, thereby achieving the MEU.
The MEU is then achievable whether $Z$ is a context of $X$ or not, so $Z$ is immaterial in this construction.

In order to render $Z$ material, we must adapt the construction from \Cref{fig:soluble-everitt} 
by incentivising $X'$ to pass along the value of $Z'$.
To this end, we will use the second info path $Z' \to W' \gets U' \to Y$, shown in \Cref{fig:van-merwijk-scheme}.
We add a term $y_2:=\llbracket u'[x'[0]]=x'[1]\rrbracket$ to the reward, which equals $1$ if $X'$ 
presents one bit from $U'$, along with its index.
We then set $W'=U'[Z']$, so that $X'$ knows only the $Z'$\textsuperscript{th} bit of $U'$, 
and since the index $z'$ is one bit, we let $U'$ be two bits in length, i.e.\ $U' \sim U(\bool^2)$.
Finally, rather than requiring $z=x'$ as in \Cref{fig:soluble-everitt}, we now include 
the term $y_1:=\llbracket z=x'[0] \rrbracket$, because $Z'$ will be the zeroth term of $X'$.
In the resulting model, the utility is clearly $Y=2$ in the non-intervened model, 
and to achieve this utility, the MEU, we must have $Y_1=Y_2=1$ with probability $1$.
To maximise $y_2$, the decision $X'$ must reproduce the only known digit from $U'$, i.e.\ $x'=\langle z',u'[z'] \rangle$.
To maximise $y_1$, we must have $Z=X'[0]$ almost surely, and since $X'[0]=X$, this requires $X=Z$ with probability $1$.
This can only be done if $Z$ is a context of $X$, meaning that $Z$ is material for $X$.
There is a general principle here --- if a control path for $X$, such as $X \to Z' \to X' \to Y$, contains 
decisions other than $X$, then we need to incentivise the downstream decision to copy information 
along the control path, and this will be done by choosing values for variables lying on the info path for $X'$ 
(the one shown in blue in \Cref{fig:van-merwijk-scheme}); we will revisit this matter in our main result.

\subsection{Multi-decision settings in full generality} \label{sec:past-lb}

Once the solubility assumption is relaxed, there are some criteria 
for identifying immaterial variables, but it is not yet known to what extent these 
criteria are necessary, in that materiality is possible whenever they are not satisfied.

The simplest criteria for immateriality are those that carry over from the single-decision case:
\begin{itemize}
\item If a decision $X$ is a non-ancestor of $Y$, then its contexts are immaterial,
\item If $C \dsep Y \mid \sC_X \setminus \{C\}$, then the context $C$ is immaterial.
\end{itemize}

But suppose that we have a graph where neither of these criteria is satisfied.
Then on some occasions, we can still establish immateriality, using the more 
sophisticated criterion of \citet[Theorem 2]{lee2020characterizing}.
The assumptions of this criterion are split across:
\citet[Lemma 1]{lee2020characterizing}
and \citet[Theorem 2]{lee2020characterizing} itself.
\citet[Lemma 1]{lee2020characterizing} establishes that 
if some target variables $\sZ$, target actions $\sX'$, and latent variables $\sU'$ 
satisfy certain separation conditions, then they may be factorized in a 
favourable way.
\citet[Theorem 2]{lee2020characterizing} then proves that 
under some further assumptions, the contexts $\sZ$ are immaterial to the decisions $\sX'$.
in this paper, our focus is exclusively on the assumptions of 
\citet[Lemma 1]{lee2020characterizing}, 
and we term them ``LB-factorizability'', after the authors' initials.
\citet[Theorem 2]{lee2020characterizing} does not feature in our analysis, 
but for completeness sake, it is reproduced in \Cref{app:recap}.

\begin{definition} \label{def:lb-factorizable}
For a scoped graph $\calG_\calS$, 
we will say that
target actions $\sX'$,
endogenous variables $\sZ$ disjoint with $\sX'$, 
contexts $\sC' := \sC_{\sX'} \setminus (\sX' \cup \sZ)$ and
exogenous variables $\sU'$
are \emph{LB-factorizable} 
if there exists an ordering $\prec$ over $\sV' := \sC' \cup \sX' \cup \sZ$ such that:

\begin{enumerate}[itemsep=5pt,label=\Roman*.]
    \item $(Y \dsep \spi_{\sX'} \mid \lceil (\sX' \cup \sC') \rceil )$,
    
    \item $(C \dsep \spi_{{\sX'}_{\prec C}},{\sZ}_{\prec C},\sU' \mid \lceil (\sX' \cup \sC')_{\prec C} \rceil)$, for every $C \in \sC'$ and
    
    \item $\sV'_{\prec X}$ is disjoint with $\Desc(X)$ and subsumes $\Pa(X)$ for every $X \in \sX'$,
\end{enumerate}
where $\spi_{\sX'}$ consists of a new parent $\pi_X$ added to each variable $X \in \sX'$, 
and $\sW_{\prec V}$,
for $\sW \subseteq \sV'$,
denotes the subset of $\sW$ that is strictly prior to $V$ in the ordering $\prec$.
\end{definition}

For example, consider the graph \Cref{fig:triangle-no-voi}.
In this case, $Y \in \Desc(X)$ and $Z \not \dsep Y \mid X$, so the single-decision criteria
cannot establish that $Z$ is immaterial for $X$.
However, by choosing $\sZ=\{Z\},\sX'=\{X\}$, and the ordering $\prec = \langle Z,X \rangle$, 
we have that:
\begin{enumerate}[itemsep=5pt,label=\Roman*.]
\item the outcome $Y$ is d-separated from $\spi_X$ by $\lceil X \rceil$, (since $Z$ is a decision that lacks parents, we actually have $\lceil X \rceil = \{Z,X\}$),
\item the contexts $\sC'$ are an empty set, so (II) is trivially true, and 
\item $\sV'_{\prec X}=\sZ$, and $\sZ$ is disjoint with $\Desc(X)$ and $\sZ \supseteq \Pa(X)$
\end{enumerate}
so $\sZ$ and $\sX'$ are LB-factorizable.
As shown in \Cref{app:recap}, the assumptions of \citet[Theorem 2]{lee2020characterizing} are also satisfied, 
enabling us to deduce that $Z$ is immaterial for $X$, matching the ad hoc analysis
of this graph in \Cref{sec:setup}.

\section{Main result} \label{sec:main-result}
\subsection{Theorem statement and proof overview}

The goal of this paper is to prove that condition (I) of LB-factorizability is necessary
to establish immateriality.
More precisely, we prove that if condition (I) is unsatisfiable for 
all observations in the graph, then the  graph is incompatible with materiality.
It might initially seem unnecessarily stringent to assume that this holds for \emph{all} observations, 
rather than the context $Z_0$ for which we are trying to prove materiality.
Recall from \Cref{fig:van-merwijk-scheme}, however, that proofs of materiality 
are recursive ---
to prove that $Z$ material for $X$, 
we incentivised $X$ to copy $Z$, 
and to do this, we had to incentivise $X'$ has to pass on the value of $Z'$.
To do this, we needed to assume that other contexts and decisions
(such as $Z'$ and $X'$)
have their own info paths and control paths, 
not just $Z$ and $X$.
So, in our theorem below, assumption (C) requires that (I) holds for all contexts.
Assumptions (A) and (B) are also necessary for a graph to be compatible with materiality, 
because their negation implies immateriality, as per the single-decision criteria discussed 
in \Cref{sec:past-1dec-criteria}.

\begin{theorem} \label{thm:main}
If, in a scoped graph $\calG_\calS$,
for every $X \in \sX(\calS)$
\begin{enumerate}[label=\Alph*.]
\item $X \in \Anc_{\calG_\calS}(Y)$,
\item $\forall C \in \sC_X:(C \not \dsep_{\calG_\calS} Y \mid (\{X\} \cup \sC_X \setminus \{ C \}))$, and
\item for every decision $X$ and context $Z \in \sC_X$ in $\calG_\calS$, 
$(\pi_X \not \dsep_{\calG_\calS} Y \mid \lceil (\sX(\calS) \cup \sC_{\sX(\calS) \setminus \{Z\}} ) \setminus \{Z\} \rceil)$, where $\pi_X$ is a new parent of $X$,
\end{enumerate}
 then for every $X_0 \in \sX(\calS)$ and $Z_0 \in \sC_{X_0}$, there exists an SCM where $Z_0$ is material for $X_0$.
\end{theorem}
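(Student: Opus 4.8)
The plan is to prove the statement by explicit construction: for each $X_0 \in \sX(\calS)$ and $Z_0 \in \sC_{X_0}$ we build an SCM on $\calG_\calS$ in which $Z_0$ is material for $X_0$, generalising the single-decision and soluble constructions of \Cref{fig:yes-voi-directed}, \Cref{fig:yes-voi-non-directed} and \Cref{fig:van-merwijk-scheme}. The backbone is a ``challenge--response'' channel. A fresh exogenous \emph{question} variable is placed on an info path for $Z_0$ at $X_0$ (which exists by assumption B, active given $\{X_0\} \cup \sC_{X_0} \setminus \{Z_0\}$), so that the question --- or, when that path carries colliders, an XOR-recoverable function of it as in \Cref{fig:yes-voi-non-directed} --- reaches both $Z_0$ and $Y$. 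Simultaneously, $X_0$'s action is carried to $Y$ along a control path $X_0 \pathto Y$, which exists by assumption A, and $Y$ rewards the agent exactly when the value arriving at $Y$ along the control path matches the question. With $Z_0$ available, the agent reads the question off $Z_0$ and relays it along the control path, attaining the maximal expected utility; the entire difficulty is to force that, absent $Z_0$, it cannot --- which reduces to guaranteeing that every decision lying on the control path \emph{faithfully relays} the value it receives.

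Enforcing relaying is recursive and is the heart of the argument. Whenever the control path passes through a decision $X_i \neq X_0$, let $C_i \in \sC_{X_i}$ be the vertex immediately preceding $X_i$ on it (a parent of $X_i$ in $\calG_\calS$, hence a context). We attach to $X_i$ a copy of the puzzle gadget of \Cref{fig:van-merwijk-scheme}, built on an info path for $C_i$ at $X_i$: $X_i$ observes a value encoding the answer to a puzzle keyed by $C_i$, and $Y$ grants extra reward only when $X_i$ outputs the pair comprising $C_i$ and the matching answer --- which $X_i$ can supply reliably only by having recovered the control-path value $C_i$, and hence relaying it. As the info path of $X_i$ may itself pass through further decisions, the construction descends recursively; well-foundedness follows from acyclicity of $\calG_\calS$ once the order of processing is chosen carefully. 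Assumption C enters at each level: the hypothesis $\pi_{X_i} \not\dsep_{\calG_\calS} Y \mid \lceil (\sX(\calS) \cup \sC_{\sX(\calS) \setminus \{C_i\}}) \setminus \{C_i\} \rceil$ forces a d-connecting path of the form $\pi_{X_i} \to X_i \gets C_i \upathto Y$ that remains active with \emph{every} other decision and context pinned down --- exactly the information state obtaining mid-recursion --- so it certifies that $X_i$'s decision rule still influences $Y$ through a channel unrecoverable once $C_i$ is withheld, i.e.\ that the relaying incentive is strictly binding rather than free. This is why C must hold for every decision--context pair, not just $(X_0, Z_0)$.

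With the SCM in hand, materiality splits into two claims. First, with $Z_0 \in \sC_{X_0}$, the deterministic policy in which $X_0$ outputs the question, each downstream decision relays what it receives, and each solves its puzzle attains the maximal expected utility; this is a direct computation. Second --- the hard direction --- the maximal expected utility strictly decreases under $\calS_{Z_0 \not\to X_0}$. Restricting to deterministic policies via \citet[Proposition 1]{lee2020characterizing}, any maximiser for the reduced scope must win every puzzle term and the final comparison term almost surely; propagating ``forced relaying'' outward from the innermost recursion then forces $X_0$'s output to equal the question almost surely. But in $\calG_{\calS_{Z_0 \not\to X_0}}$ the only channel carrying the question into information available to $X_0$ is severed together with the edge $Z_0 \to X_0$, and the gadget structure admits no substitute; hence no maximiser exists for the reduced scope, and the strict inequality follows.

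The principal obstacle I anticipate is the combinatorial bookkeeping when the control path, the info paths of the several decisions, and the assumption-C witnessing paths overlap or carry colliders: one must select these paths (for instance greedily along a topological order of $\calG_\calS$) so that a single deterministic or XOR-valued mechanism at each shared vertex keeps every relevant path active in the constructed SCM while opening no unintended channel that would let the agent reconstruct the question and thereby bypass $Z_0$. In the soluble setting this consistency comes for free from solubility; here it must be argued directly, together with the fact that XOR encodings compose along concatenated path segments. A secondary difficulty is fixing a provably well-founded processing order for the recursion and making the bottom-up forcing argument rigorous in degenerate cases --- for instance when $Z_0$ or some $C_i$ is itself a decision, as in \Cref{fig:yes-voi-no-sr}.
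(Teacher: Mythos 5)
Your overall architecture matches the paper's: select an info path for $Z_0$ and a control path for $X_0$, attach a puzzle gadget at each downstream decision to force relaying, verify the non-intervened policy attains the MEU, argue no deterministic policy in the reduced scope can, and finish with \citet[Proposition 1]{lee2020characterizing}. However, there are two concrete gaps at exactly the points where insolubility bites. First, you propose handling a collider-bearing info path for $Z_0$ with the XOR scheme of \Cref{fig:yes-voi-non-directed}. The paper shows this fails in insoluble graphs: in \Cref{fig:finite-domain-1} an intermediate decision $X'$ observing a collider on the info path can transmit the entire one-bit secret $U_1$ downstream, so $X_0$ never needs $Z_0$ and the XOR construction yields \emph{im}materiality. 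The fix (\Cref{fig:finite-domain-2}, \Cref{def:nro-model}) is not XOR but an indexing construction: forks $U_{i,j}$ carry doubly-exponentially growing bitstrings, colliders are $W_{i,j}=U_{i,j}[U_{i,j-1}]$, and $Y$ checks a ``compatibility'' predicate, so that no small-domain intermediate variable has the capacity to carry the secret. Second, your closing claim that once $Z_0\to X_0$ is severed ``the only channel carrying the question into information available to $X_0$'' is gone and ``the gadget structure admits no substitute'' is asserted where the real work lies. Other contexts of $X_0$ generally \emph{are} d-connected to the question (colliders on info paths may be contexts of $X_0$), so no d-separation argument is available; the paper instead proves a capacity bound (\Cref{le:impossible-non-decision}, \Cref{le:impossible-decision-bc-cannot-distinguish}): the source $A$ emits $k$ bits with $2^k>(k+c)bc$, the remaining contexts of $X_0$ range over fewer than $2^k$ values, and a deterministic policy cannot map the smaller domain onto the required support.

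A further point you flag only as a ``degenerate case'' is actually load-bearing: when $Z_0$ is itself a decision, an info path that starts with an outgoing edge from $Z_0$ makes $Z_0$ implied by the other observations and destroys materiality (\Cref{fig:remember-decision-1}). The paper uses condition (C) to prove $Z_0$ then has a non-decision parent $N$ and routes the info path as $Z_0\gets N \upathto Y$ (\Cref{le:decision-backdoor-revised}); without this step the construction is simply wrong for decision-valued contexts such as \Cref{fig:yes-voi-no-sr}, which is one of the motivating cases of the theorem. So the proposal is the right skeleton, but as written the constructed SCM would not exhibit materiality on the very graphs that distinguish this theorem from the soluble-case result.
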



We will prove this result in three stages, across the next three sections.
\begin{itemize}
    \item In \Cref{sec:paths-exist}, we prove that for any scoped graph satisfying the assumptions of \Cref{thm:main}, for any context $Z_0 \in \sC_{X_0}$, there exist certain paths, 
    which we will call the \emph{materiality paths}.
\item In \Cref{sec:nro-model}, we use the materiality paths to define an SCM for this scoped graph, which we will call the \emph{materiality SCM}.  
\item In \Cref{sec:model-implies-nro}, we will prove that in the materiality SCM, $Z_0$ is material for $X_0$.
\end{itemize}

\subsection{The materiality paths} \label{sec:paths-exist}
To prove materiality, we will begin by selecting info paths and a control path, 
similar to what was described in \Cref{sec:past-soluble} and illustrated in \Cref{fig:van-merwijk-scheme}.
One difference, however, is that these paths must allow for the case where 
we are proving the value of remembering a past decision.
We will first describe how to accommodate this case in \Cref{sec:materiality-path-obstacles} 
then define a set of paths for our proof in \Cref{sec:materiality-paths-general}.

\subsubsection{Paths for the value of remembering a decision} \label{sec:materiality-path-obstacles}
One distinction between our setting and that of \citet{van2022complete} 
is that we may need to establish the value of remembering a past decision,
for example, the value of remembering $Z_0$ in \Cref{fig:remember-decision}.
In this graph, the procedures of \citet{everitt2021agent} and \citet{van2022complete} are 
silent about whether we should 
choose the info path $Z_0 \to Y$, and construct the graph \Cref{fig:remember-decision-1}, 
or choose the info path $Z_0 \gets U \to Y$, 
and construct the model depicted in \Cref{fig:remember-decision-2}.
In the first case, we have $Y=1$ if $x_0=z_0$, i.e. the decision $X_0$ is required to match 
the value of a past decision \Cref{fig:remember-decision-1}.
Then, the MEU of $1$ can be achieved with a deterministic policy such as $Z_0=1,X_0=1$, 
and $Z_0$ is immaterial for $X_0$.
To understand this in terms of the paths involved,
The problem is that the info path $Z_0 \to Y$ doesn't include any parents of $Z_0$,
so $Z_0$ is \emph{implied} by values outside the info path, 
and $Z_0 \to Y$ is rendered inactive given $\lceil U \rceil$.
This means that observing $Z_0$ can no-longer provide useful information about how to maximise $Y$.
In the second case, $Y=1$ if $x_0=u$, i.e.\ the decision $X_0$ must match the value of a 
random Bernoulli variable $U$ \Cref{fig:remember-decision-2}.
$U$ is directly observed only by $Z_0$, and so in optimal policy, 
$X_0$ must observe the decision $z_0$, as is the case in the optimal policy 
$z_0=u, x_0=z_0$, and so $Z_0$ is material for $X_0$.
The info path $Z_0 \gets U \to Y$ does include a parent $U$ of $Z_0$, 
and so $Z_0$ is no-longer \emph{implied} by values outside the info path, 
and the path $Z_0 \gets U \to Y$ remains active given $\lceil \emptyset \rceil$.
Thus $Z_0$ may still provide useful information about $Y$.

\begin{figure}[t]\centering
\begin{subfigure}[b]{0.42\textwidth}\centering
\begin{tikzpicture}[rv/.style={circle, draw, thick, minimum size=6mm, inner sep=0.2mm}, node distance=15mm, >=stealth]
  \node (V) [rv, label={[xshift=8.1mm,yshift=-2mm]}]  {$U$};
  \node (Z) [rv,red, right=10mm of V, label={[xshift=8.1mm,yshift=-2mm]$\sim U(\bool)$}]  {$Z_0$};
  \node (X) [rv,red, right=10mm of Z, label={[xshift=6.6mm,yshift=-2mm]$=Z_0$}]              {$X_0$};
  \node (Y) [rv, right=10mm of X, label={[xshift=-10mm,yshift=4.5mm]$f_Y(\pa_Y)=\llbracket z_0=x_0\rrbracket$}] {$Y$};
  
  \draw[->, very thick] (V) -- (Z);
  \draw[->, very thick] (Z) -- (X);
  \draw[->, very thick] (X) -- (Y);
  \draw [->,red,very thick] (Z) to [bend right=20] (Y);
  \draw [->,very thick] (V) to [bend right=20] (Y.-140);
\end{tikzpicture}\vspace{-2mm}
\caption{$Z_0$ is immaterial for $X_0$.} \label{fig:remember-decision-1}
\end{subfigure}
\hspace{6mm}
\begin{subfigure}[b]{0.42\textwidth}\centering
\begin{tikzpicture}[rv/.style={circle, draw, thick, minimum size=6mm, inner sep=0.2mm}, node distance=15mm, >=stealth]
  \node (V) [rv, label={[xshift=8.1mm,yshift=-2mm]$\sim U(\bool)$}]  {$U$};
  \node (Z) [rv,red, right=10mm of V, label={[xshift=6.6mm,yshift=-2mm]$=u$}]  {$Z_0$};
  \node (X) [rv,red, right=10mm of Z, label={[xshift=6.6mm,yshift=-2mm]$=z_0$}]              {$X_0$};
  \node (Y) [rv, right=10mm of X, label={[xshift=-10mm,yshift=4.5mm]$f_Y(\pa_Y)=\llbracket u=x_0\rrbracket$}] {$Y$};
  
  \draw[->,red, very thick] (V) -- (Z);
  \draw[->, very thick] (Z) -- (X);
  \draw[->, very thick] (X) -- (Y);
  \draw [->,very thick] (Z) to [bend right=20] (Y);
  \draw [->,red,very thick] (V) to [bend right=20] (Y.-140);
\end{tikzpicture}\vspace{-2mm}
\caption{$Z_0$ is material for $X_0$} \label{fig:remember-decision-2}
\end{subfigure}
\caption{Two SCMs, with models constructed using different (red) info paths.} \label{fig:remember-decision}
\end{figure}
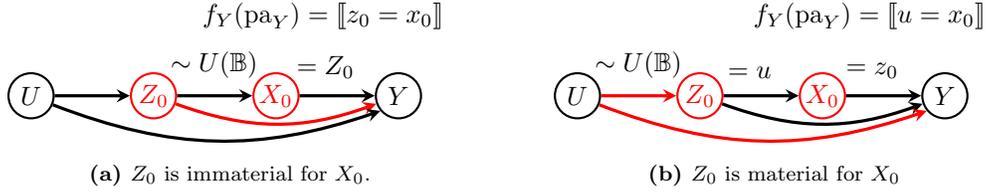

For our proof, we need a general procedure for finding an info path that contains a non-decision parent for every decision.
Condition (C) of \Cref{thm:main} is useful, because it implies the presence of a path from $Z$ to $Y$ that is active given $\condset{Z}$. 
Any fork or chain variables in this path will not be decisions, otherwise they would be contained in $\lceil \sX(\calS) \setminus Z \rceil$, 
which would make them blocked given $\condset{Z}$.
This deals with the possibility of decisions anywhere except for the endpoint $Z$.
But how can we ensure that the info path contains a non-decision parent for $Z$, if it is a decision?
We can use condition (C) again, because it implies that every context that is a decision
must have a non-decision parent.

\begin{lemma} \label{le:chance-parent-of-Z-revised}
If a scoped graph $\calG(\calS)$ satisfies the condition(C) of \Cref{thm:main}, then for every context
$Z \in \sC_X$ 
where $Z,X \in \sX(\calS)$ are decisions, 
there exists a non-decision $N \in \sC_Z \setminus \condset{\{Z\}}$.
\end{lemma}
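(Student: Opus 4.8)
\emph{Proof sketch.} The plan is to apply condition (C) of \Cref{thm:main} to the decision $X$ together with its context $Z$, and to read enough structure off the resulting active path to pin down a parent of $Z$. Write $S_0 := (\sX(\calS) \cup \sC_{\sX(\calS) \setminus \{Z\}}) \setminus \{Z\}$ for the set whose implied closure appears in condition (C), and $\calH := \condset{\{Z\}} = \lceil S_0 \rceil$, so that by \Cref{def:implied-vars} we have $S_0 \subseteq \calH$; in particular $\calH$ contains every decision other than $Z$, and contains every context---other than $Z$---of every decision other than $Z$. Note also $Z \neq X$, since $Z \in \sC_X$ makes $Z \to X$ an edge of the acyclic graph $\calG_\calS$.

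First I would unpack condition (C) for the pair $(X,Z)$: it hands us a path $p$ from the fresh parent $\pi_X$ to $Y$ that is active given $\calH$. Since $\pi_X$ is adjacent only to $X$, $p$ begins $\pi_X \to X \undir \cdots$; and since $X$ is a decision distinct from $Z$, we have $X \in S_0 \subseteq \calH$, so $X$ must occur as a collider on $p$ (a non-collider of $p$ lying in the conditioning set would block it, and $X$ is not an endpoint). The edge $\pi_X \to X$ already points into $X$, so the next edge must too: $p$ begins $\pi_X \to X \gets V_2 \undir \cdots$, where $V_2$ is a parent of $X$ other than $\pi_X$, i.e. $V_2 \in \sC_X$.

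The crux is to force $V_2 = Z$. Because the edge $X \gets V_2$ points out of $V_2$, the vertex $V_2$ is a non-collider on $p$, and it is interior (the endpoints of $p$ are $\pi_X$ and $Y$), so activeness forces $V_2 \notin \calH$. But \emph{every} context of $X$ other than $Z$ lies in $\calH$: a decision context $C \neq Z$ lies in $\sX(\calS) \setminus \{Z\} \subseteq S_0$, and a non-decision context $C \in \sC_X$ lies in $\sC_{\sX(\calS)\setminus\{Z\}} \setminus \{Z\} \subseteq S_0$ because $X \in \sX(\calS)\setminus\{Z\}$. Hence $V_2 = Z$, so $p$ begins $\pi_X \to X \gets Z \undir \cdots$. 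The edge $X \gets Z$ points out of $Z$, so $Z$ is likewise an interior non-collider on $p$, and activeness now forces $Z \notin \calH = \lceil S_0 \rceil$.

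It remains to turn ``$Z \notin \lceil S_0 \rceil$'' into the desired parent. Since $Z$ is a decision, the closure rule of \Cref{def:implied-vars} would have placed $Z$ into $\lceil S_0 \rceil$ as soon as $\sC_Z \subseteq \lceil S_0 \rceil$; hence $\sC_Z \not\subseteq \calH$, and we may pick $N \in \sC_Z \setminus \calH$. Finally $N$ cannot be a decision: such an $N$ would satisfy $N \neq Z$ (acyclicity of $\calG_\calS$ again), hence $N \in \sX(\calS)\setminus\{Z\} \subseteq \calH$, a contradiction. So $N$ is a non-decision in $\sC_Z \setminus \condset{\{Z\}}$, as required. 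The only step that genuinely uses the precise shape of condition (C) is forcing $V_2 = Z$, and it works exactly because the conditioning set $\condset{\{Z\}}$ absorbs every decision and every context except $Z$ (together with their implied closure), so $Z$ is the unique context of $X$ through which an active path can leave $X$; as a by-product the argument also shows $\sC_Z \neq \emptyset$, so a parentless decision-context never arises under the hypotheses of \Cref{thm:main}.
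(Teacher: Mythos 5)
Your proof is correct and is essentially the paper's own argument run in the contrapositive direction: the paper assumes $\sC_Z \subseteq \condset{\{Z\}}$, deduces $Z \in \condset{\{Z\}}$ via the closure rule, and concludes that no path from $\pi_X$ to $Y$ can be active given $\condset{\{Z\}}$, contradicting condition (C). You instead start from the active path guaranteed by (C), force it to leave $X$ through $Z$ as a non-collider to obtain $Z \notin \condset{\{Z\}}$, and unwind the closure rule to produce $N$; the care you take in pinning down $V_2 = Z$ and in checking that $N$ cannot be a decision makes explicit two steps the paper leaves implicit.
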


Intuitively, this is because condition (C) states that there is an active path from $Z$ to $Y$, 
given a superset of $\lceil \sX(\calS) \setminus \{Z \}$. 
If all of the parents of $Z$ 
are decisions, then we would have $Z \in \lceil \sX(\calS) \setminus \{Z \}$, and every path would be blocked, and condition (C) could not be true.

\begin{proof}[Proof of \Cref{le:chance-parent-of-Z-revised}]
Assume that there is no such non-decision $N$, i.e.\ $\sC_Z \subseteq \condset{\{Z\}}$, 
and that $\pi_X \not \dsep Y \mid \condset{\{Z\}}$, (by condition (C) of \Cref{thm:main}), and we will prove a contradiction.
From $\sC_Z \subseteq \condset{\{Z\}}$, 
we deduce that $Z \in \condset{\{Z\}}$ (by the definition of $\lceil \sW \rceil$),
and then there can be no active path from $\pi_X$ to $Y$ given $\condset{\{Z\}} \supseteq \sC_Z \cup \{Z\}$, contradicting condition (C) of \Cref{thm:main}, and proving the result.
\end{proof}

This tells us that for any decision $Z$ there is an edge $Z \gets N$.
Moreover, by condition(C) of the main result, we know that there is an info path from $N$ to $Y$.
By concatenating the edge and the path, we obtain a path from $Z$ to $Y$, 
which we will prove is active given $\condset{\{Z\}}$. 
This is precisely the kind of info path that we are looking for:
activeness given $\condset{Z}$ means that forks and chains will not be decisions,
and we know that the endpoint $Z$ has a non-decision parent $N$.

\begin{restatable}{lemma}{pathstoz} \label{le:decision-backdoor-revised}
If a scoped graph $\calG(\calS)$ satisfies assumptions (B-C) of \Cref{thm:main}, then for every edge 
$Z \to X$ between decisions $Z,X \in \sX(\calS)$,
there exists a path $Z \gets N \upathto Y$,
active given $\condset{\{Z\}}$, 
(so $N \not \in \condset{\{Z\}}$).
\end{restatable}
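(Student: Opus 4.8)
The plan is to combine Lemma \ref{le:chance-parent-of-Z-revised} (which supplies a non-decision parent $N$ of $Z$ outside the conditioning set) with an info path from $N$ to $Y$ furnished by assumption (B) or (C), and then argue that the concatenated walk $Z \gets N \upathto Y$ is a legitimate active path. First I would apply Lemma \ref{le:chance-parent-of-Z-revised} to the edge $Z \to X$ between decisions to obtain a non-decision $N \in \sC_Z \setminus \condset{\{Z\}}$; in particular $N \to Z$ is an edge in $\calG_\calS$ and $N \notin \condset{\{Z\}}$. Next I would produce a path $N \upathto Y$ that is active given $\condset{\{Z\}}$: since $N$ is a context of the decision $Z$, assumption (B) gives $N \not\dsep Y \mid (\{Z\} \cup \sC_Z \setminus \{N\})$, but we need activeness relative to $\condset{\{Z\}}$ rather than relative to $\{Z\}\cup\sC_Z\setminus\{N\}$. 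The cleaner route is to lean on assumption (C) applied to the decision $Z$ and one of its contexts: (C) states $\pi_Z \not\dsep Y \mid \condset{\{Z'\}}$ for the relevant context, and more directly, the discussion preceding the lemma already notes that (C) "implies the presence of a path from $Z$ to $Y$ active given $\condset{\{Z\}}$" with the property that no fork or chain node on it is a decision (else that node would lie in $\lceil \sX(\calS)\setminus\{Z\}\rceil \subseteq \condset{\{Z\}}$, blocking the path). I would make that implication explicit: an active path from $\pi_Z$ to $Y$ given $\condset{\{Z\}}$ must pass through $Z$ via an edge $Z \gets \cdot$ (since $\pi_Z \to Z$ and $Z$ together with $\sC_Z$ is conditioned on, any segment leaving $Z$ forward would be blocked at $Z$ as a chain, or at a collider needing $Z$ or a descendant — needs checking), so the sub-path from $Z$ onward has the form $Z \gets N' \upathto Y$ for some parent $N'$ of $Z$.

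The remaining work is bookkeeping to assemble the final path with the stated endpoints and activeness. I would take the path $q: N \upathto Y$ active given $\condset{\{Z\}}$ extracted above, observe that its first edge at $N$ may be either into or out of $N$; prepend the edge $Z \gets N$. If $q$ begins with $N \to \cdots$, then at $N$ the concatenation forms a fork $Z \gets N \to \cdots$, which is active iff $N \notin \condset{\{Z\}}$ — true by choice of $N$. If $q$ begins with $N \gets \cdots$, then at $N$ the concatenation forms a chain $Z \gets N \gets \cdots$, again active iff $N \notin \condset{\{Z\}}$, true. A subtlety is that $q$ might revisit $Z$ or contain $Z$ as an interior node; but $Z \in \condset{\{Z\}}$, so any path active given that set cannot pass through $Z$ as a chain or fork, and if $Z$ appeared as a collider the path would need $Z$ or a descendant of $Z$ in the set, which is possible — so I would instead truncate: take the sub-path of $Z \gets N \cdot q$ from $Z$ to the first recurrence of any already-visited vertex, yielding a genuine (acyclic) path, which remains active since sub-paths of active paths with the same endpoints-in/not-in-$\sZ$ status stay active. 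This gives the desired $Z \gets N \upathto Y$ active given $\condset{\{Z\}}$, with $N \notin \condset{\{Z\}}$ recorded.

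The main obstacle I anticipate is the activeness transfer at the junction node $N$ and the handling of possible collider structure inside $q$ near $Z$: specifically ensuring that gluing the edge $Z\gets N$ onto $q$ does not create a blocked triple at $N$ and does not secretly route through $Z$ or a descendant of $Z$ in a way that the conditioning set $\condset{\{Z\}}$ would block. Making the "no decisions as forks/chains" claim rigorous — that any interior fork or chain vertex $V$ on an active-given-$\condset{\{Z\}}$ path satisfies $V \notin \sX(\calS)$ because otherwise $V \in \lceil \sX(\calS)\setminus\{Z\}\rceil \subseteq \condset{\{Z\}}$ — also requires carefully tracking that $Z$ itself is excluded from the implied-variable closure only when its own parents are not all inside, which is exactly what Lemma \ref{le:chance-parent-of-Z-revised} secures. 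Once those two points are nailed down, the rest is a short concatenation-and-truncation argument.
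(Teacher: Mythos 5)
Your overall architecture matches the paper's — obtain the non-decision parent $N$ from \Cref{le:chance-parent-of-Z-revised}, obtain an info path from condition (C), and splice — but there is a genuine gap at the step where you produce a path $N \upathto Y$ active given $\condset{\{Z\}}$. Condition (C) has no instance of the form $\pi_Z \ndsep Y \mid \condset{\{Z\}}$: the policy node is attached to the \emph{decision} while the conditioning set is indexed by one of its \emph{contexts}, so the only instances available are (a) $\pi_X \ndsep Y \mid \condset{\{Z\}}$ for the pair $(Z,X)$, and (b) $\pi_Z \ndsep Y \mid \condset{\{N\}}$ for the pair $(N,Z)$. Instance (a) gives activeness relative to the right set, but the resulting path $\pi_X \to X \gets Z \upathto Y$ need not leave $Z$ through a parent: since $Z \notin \condset{\{Z\}}$ (which is exactly what \Cref{le:chance-parent-of-Z-revised} secures), $Z$ may be an active fork $X \gets Z \to \cdots$, so your claim that the path ``must pass through $Z$ via an edge $Z \gets \cdot$'' fails. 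Instance (b) does force the path to begin $\pi_Z \to Z \gets N$ (because $Z$, being a decision distinct from $N$, lies in $\condset{\{N\}}$ and must be a collider, and every parent of $Z$ other than $N$ also lies in that set), but it is active given $\condset{\{N\}}$, not $\condset{\{Z\}}$.

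The paper's proof is devoted almost entirely to converting the activeness of (b) from $\condset{\{N\}}$ to $\condset{\{Z\}}$. The non-collider half is cheap, since $\condset{\{Z\}} \subseteq \condset{\{N\}}$ and so non-colliders outside the larger set stay outside the smaller one. The collider half is not: a collider that is an ancestor of $\condset{\{N\}}$ (hence active given that set) need not be an ancestor of the smaller set $\condset{\{Z\}}$, and can become blocked. The paper handles this by taking the collider $M$ nearest $Z$ that fails this test, observing that $M$ is an ancestor of some decision and hence, by assumption (A), of $Y$, and rerouting the path as $Z \overset{p'}{\upathto} M \pathto Y$, whose directed tail is active precisely because $M$ is not an ancestor of $\condset{\{Z\}}$. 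Your proposal has no counterpart to this rerouting step; the concatenation-and-truncation bookkeeping at $N$ and at possible revisits of $Z$ is fine as far as it goes, but it presupposes a path $q: N \upathto Y$ already active given $\condset{\{Z\}}$, and producing that path is the actual content of the lemma.
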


Some care is needed in proving that the segment $N \upathto Y$ is active given $\condset{\{Z\}}$, 
rather than just $\condset{\{N\}}$, and the detail is presented in \Cref{le:decision-backdoor-revised}.


\subsubsection{Defining the materiality paths} \label{sec:materiality-paths-general}
We will now describe how to select finitely many info paths, along with a control path, as shown in \Cref{fig:nro-paths-only}.
The assumptions of \Cref{thm:main} allow there to be any finite number of contexts and 
decisions, so we will designate the target decision and context (whose materiality we are trying to establish) 
as $X_0:=X$ and context $Z_0:=Z$.
We know from condition (A) that $X_0$ is an ancestor of $Y$, so we have a directed path $X_0 \pathto Y$.
We also know that $Z_0$ has a chance node ancestor, because it either is a chance node, 
or it has a chance node parent, from \Cref{le:decision-backdoor-revised}.
So we will call that chance node ancestor, $A$, and define a \emph{control path}
of the form $A \pathto Z_0 \to X_0 \pathto Y$, shown in black in \Cref{fig:nro-paths-only}, 
where $A \pathto Z_0$ has length of either $0$ or $1$.

Other paths are then chosen to match this control path.
We will index the decisions on the control path as $X_\imin,\ldots,X_\imax$, 
and their respective contexts are $Z_\imin,\ldots,Z_\imax$.
where $\imin$ is either $0$ (if $Z_0$ is a chance node), or $-1$ (if $Z_0=X_{-1}$).
In general, we allow for the possibility that $Z_i=X_{i-1}$ for any of the decisions.
We define an info path $m_i$ for each context $Z_i$, which must satisfy the desirable properties established in \Cref{le:chance-parent-of-Z-revised}.
To help with our later proofs, it is also useful to define an intersection node $T_i$, 
at which the info path departs from the control path, and a truncated info path $m'_i$, 
which consists of the segment of $m_i$ that is not in the control path.
Recall from \Cref{fig:yes-voi-non-directed} and \Cref{fig:van-merwijk-scheme}
that information from collider variables can play an important role in incentivising 
a decision to copy information from its context.
So, for each collider $W_{i,j}$ in each info path $m_i$ we define an
\emph{auxiliary path} $r_{i,j}:W_{i,j} \pathto Y$.

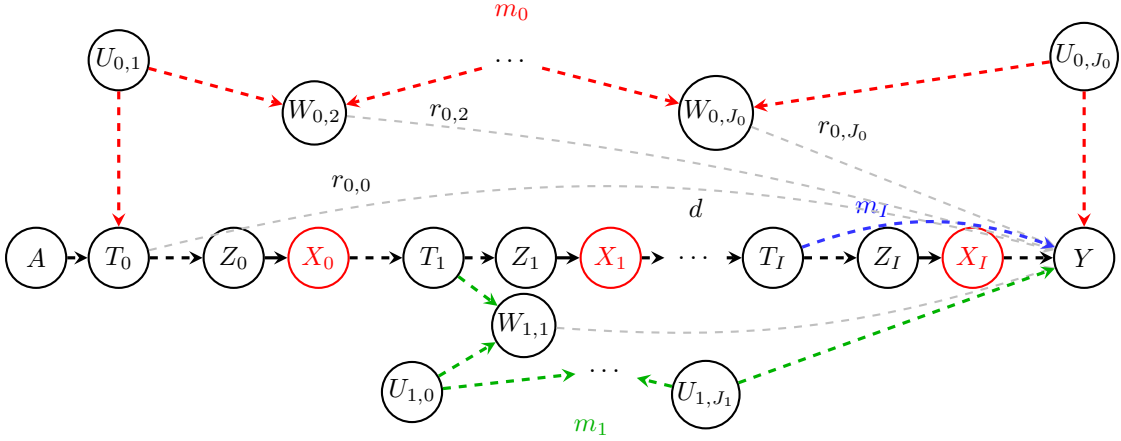
\begin{figure}[t]
\centering
\begin{tikzpicture}[rv/.style={circle, draw, thick, minimum size=8mm, inner sep=0.2mm}, node distance=15mm, >=stealth]
  \node (T0) [rv,label={[xshift=31mm,yshift=3mm]$r_{0,0}$}]  {$T_0$};
  \node (Z0) [rv, right = 7mm of T0]  {$Z_0$};
  \node (X0) [rv,red, right = 3mm of Z0]    {$X_0$};
  \node (N) [rv, left = 2.7mm of T0]  {$A$};
  \node (V0) [rv, above = 18mm of T0]  {$U_{0,1}$};
  \node (Y) [rv, right=93.5mm of X0] {$Y$};
  \node (NJ) [rv, above = 18mm of Y]  {$U_{0,\jmax_0}$};
  \node (W1) [rv, below right = 1mm and 20mm of V0,label={[xshift=18mm,yshift=-7mm]$r_{0,2}$}]  {$W_{0,2}$}; %
  \node (Ndots) [rv, right = 44mm of V0,draw=none, label={[xshift=0mm,yshift=0mm]\color{red}$m_0$}]  {\ldots};
  \node (WJ) [rv, right = 44mm of W1,label={[xshift=17mm,yshift=-10mm]$r_{0,\jmax_0}$}]  {$W_{0,\jmax_0}$};
  
\node (T1) [rv, right=7mm of X0] {$T_1$};
  \node (Z1) [rv, right=4mm of T1] {$Z_1$};
  \node (X1) [rv, red, right=3mm of Z1] {$X_1$};
  \node (Xdots) [rv, right=3mm of X1,draw=none,label={[xshift=0mm,yshift=0mm]\color{black}$d$}] {\ldots};
  \node (TI) [rv, right=2mm of Xdots] {$T_I$};
  \node (ZI) [rv, right=7mm of TI,label={[xshift=-2mm,yshift=0mm]\color{blue}$m_I$}] {$Z_I$};
  \node (XI) [rv, red, right=3mm of ZI] {$X_I$};
  
  \node (V10) [rv, below left = 12mm and -3mm of T1, label={[xshift=12mm,yshift=-11mm]}] {$U_{1,0}$};
  \node (W11) [rv, below right = 3mm and 6mm of T1, label={[xshift=12mm,yshift=-11mm]}] {$W_{1,1}$};
  \node (W1dots) [rv, above right = -3mm and 20mm of V10,draw=none,label={[xshift=-2mm,yshift=-14mm]\color{green!70!black}$m_1$}] {\ldots};
  \node (V1J) [rv, below right = -3mm and 7mm of W1dots, label={[xshift=19mm,yshift=-11mm]}] {$U_{1,\jmax_1}$};

  \draw[->, very thick, color=black,dashed] (N) -- (T0);
  \draw[->, very thick, color=red,dashed] (V0) -- (T0);
  \draw[->, very thick, color=black,dashed] (T0) -- (Z0);
  \draw[->, very thick, color=black] (Z0) -- (X0);
  \draw[->, very thick, color=red,dashed] (V0) -- (W1);
  \draw[->, very thick, color=red,dashed] (Ndots) -- (W1);
  \draw[->, very thick, color=red,dashed] (Ndots) -- (WJ);
  \draw[->, very thick, color=red,dashed] (NJ) -- (WJ);
  \draw[->, very thick, color=red,dashed] (NJ) -- (Y);
  
\draw[->, thick, color=black,dashed, gray!50!white] (W1) to [bend left=5] (Y);
\draw[->, thick, color=black,dashed, gray!50!white] (T0) to [bend left=14] (Y);
\draw[->, thick, color=gray!50!white,dashed] (WJ) -- (Y);
\draw[->, thick, color=black,dashed, gray!50!white] (W11) to [bend right=11] (Y);
  
  \draw[->, very thick, color=black,dashed] (X0) -- (T1);
  \draw[->, very thick, color=black,dashed] (T1) -- (Z1);
  \draw[->, very thick, color=black] (Z1) -- (X1);
  \draw[->, very thick, color=black,dashed] (X1) -- (Xdots);
  \draw[->, very thick, color=black,dashed] (Xdots) -- (TI);
  \draw[->, very thick, color=black,dashed] (TI) -- (ZI);
  \draw[->, very thick, color=black] (ZI) -- (XI);
  \draw[->, very thick, color=black,dashed] (XI) -- (Y);
  \draw[->, very thick, color=green!70!black,dashed] (T1) -- (W11);
  \draw[->, very thick, color=green!70!black,dashed] (V10) -- (W11);
  \draw[->, very thick, color=green!70!black,dashed] (V10) -- (W1dots);
  \draw[->, very thick, color=green!70!black,dashed] (V1J) -- (W1dots);  
  \draw[->, very thick, color=green!70!black,dashed] (V1J) -- (Y);  

\draw[->, very thick, color=blue!80!white,dashed] (TI) to [bend left=20] (Y);
  
\end{tikzpicture}
\caption[The materiality paths]{
The set of paths proven to exist by \Cref{le:nro-paths}
are red, green and blue.
In each case, the point of departure of the active path from the (black) directed path is designated by $T_i$.
In full generality, each path may begin either as $Z_i \pathfrom T_i \gets \cdot$ (as in red), 
or as $Z_i \pathfrom T_i \to \cdot$ (green, blue).
} \label{fig:nro-paths-only}
\end{figure}

Collectively, we refer to the control, info and auxiliary paths as the \emph{materiality paths}.

\begin{restatable}{lemma}{materialitypaths} \label{le:nro-paths}
Let $\calG(\calS)$ be a scoped graph that 
 contains a context $Z_0 \in \sC_{X_0}$ and
satisfies the assumptions of for \Cref{thm:main}. 
Then, it contains the following:

\begin{itemize}
\item A \textbf{control path}: a directed path $d: A \pathto Z_0 \to X_0 \pathto Y$, where $A$ is a non-decision, possibly equal to $Z_0$,
and $d$ contains no parents of $X_0$ other than $Z_0$.
\item 
We can write $d$ as
$A\! \pathto Z_\imin\! \to X_\imin\! \pathto \cdots Z_0 \to X_0 \pathto Z_\imax\! \to X_\imax\! \pathto Y, \imin \leq i \leq \imax$, where
each $Z_i$ is the parent of $X_i$ along $d$ (where $A\pathto Z_{\imin}$ and $X_{i-1}\pathto Z_i$ are allowed to have length $0$).
Then, for each $i$, define the \textbf{info path}: $m'_i:Z_i \upathto Y$, active given $\condset{Z_i}$, that if $Z_i$ is a decision, begins as $Z_i \gets N$ (so $N \in \sC_{Z_i} \setminus \condset{Z_i}$.)
\item Let $T_i$ be the node nearest $Y$ in $m'_i:Z_i \upathto Y$ (and possibly equal to $Z_i$) such that 
the segment $Z_i\overset{m'_i}{\upathto }T_i$ of $m'_i$ is identical to 
the segment $Z_i \overset{d}{\pathfrom} T_i$ of $d$.
Then, let the \textbf{truncated info path} $m_i$ be the segment $T_i \overset{m'_i}{\upathto} Y$. 
\item Write $m_i$ as $m_i:T_i \!\pathto\! W_{i,1} \!\pathfrom\! U_{i,1} \!\pathto\! W_{i,2} \!\pathfrom\! U_{i,2} \cdots U_{i,\jmax_i} \!\pathto\! Y$, 
where $\jmax_i$ is the number of \emph{forks} in $m_i$. 
(We allow the possibilities 
that $T_i\!=\!W_{i,1}$ so that $m_i$ begins as $T_i \pathfrom U_{i,1}$, 
or that $J_i=0$ so that $m_i$ is $T_i \pathto Y$.)
Then, for each $i$ and $1 \leq j \leq \jmax_i$, let the
\textbf{auxiliary path} be any 
directed path
$r_{i,j}: W_{i,j} \pathto Y$ from $W_{i,j}$ to $Y$. 
\end{itemize}
\end{restatable}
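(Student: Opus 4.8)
The plan is to produce the objects in the order they appear in the statement --- first a control path, then an info path for each decision on it, then the truncations, decompositions and auxiliary paths --- the same family of paths drawn in \Cref{fig:nro-paths-only} and, for intuition, combined in \Cref{fig:van-merwijk-scheme}. The only graphical facts I will invoke beyond routine acyclicity are assumptions (A)--(C) of \Cref{thm:main} together with Lemmas~\ref{le:chance-parent-of-Z-revised} and~\ref{le:decision-backdoor-revised}. For the control path, start from the edge $Z_0 \to X_0$ (present since $Z_0 \in \sC_{X_0}$) and a directed path $X_0 \pathto Y$ given by (A); by acyclicity that path passes through no parent of $X_0$. If $Z_0$ is a non-decision, take $A := Z_0$ and $d := Z_0 \to X_0 \pathto Y$. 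If $Z_0$ is a decision, \Cref{le:chance-parent-of-Z-revised} yields a non-decision $N \in \sC_{Z_0} \setminus \condset{Z_0}$; since $\sC_{X_0} \setminus \{Z_0\} \subseteq \condset{Z_0}$, membership $N \notin \condset{Z_0}$ forces $N \notin \sC_{X_0}$, so $d := N \to Z_0 \to X_0 \pathto Y$ is again a simple directed path from a non-decision $A := N$ and contains no parent of $X_0$ other than $Z_0$. Listing the decisions of $d$ in path order as $X_{\imin},\dots,X_{\imax}$ and setting $Z_i$ to be the $d$-predecessor of $X_i$ gives the stated factorisation, with $X_{i-1}\pathto Z_i$ of length $0$ precisely when $Z_i$ is itself a decision and hence equals $X_{i-1}$.

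Next I build the info paths. If $Z_i$ is a non-decision, apply (C) to the decision $X_i$ and its context $Z_i$: the active path it provides from the fresh node $\pi_{X_i}$ to $Y$ given $\condset{Z_i}$ must begin $\pi_{X_i} \to X_i \gets V$ (a chain at $X_i$ would need $X_i \notin \condset{Z_i}$, false as $X_i \in \sX(\calS)\setminus\{Z_i\}$), and the node $V$ --- a parent of $X_i$, hence a context, and a fork or chain on the path, hence outside $\condset{Z_i} = (\sX(\calS)\cup\sC_{\sX(\calS)})\setminus\{Z_i\}$ --- must be $Z_i$ itself; removing the first two edges leaves a path $m'_i : Z_i \upathto Y$ active given $\condset{Z_i}$ (its new endpoint $Z_i$ is not in $\condset{Z_i}$). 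If $Z_i$ is a decision then $Z_i = X_{i-1}$, the edge $Z_i \to X_i$ joins two decisions, and \Cref{le:decision-backdoor-revised} directly gives $m'_i : Z_i \gets N \upathto Y$ active given $\condset{Z_i}$ with $N \in \sC_{Z_i}\setminus\condset{Z_i}$. In either case no fork or chain of $m'_i$ is a decision, since such a node would lie in $\lceil\cdot\rceil \subseteq \condset{Z_i}$ and block the path.

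For the truncations and auxiliary paths, observe that $T_i$ is well defined because $Z_i$ lies on $d$ and the empty overlap $T_i = Z_i$ always qualifies, and $T_i \neq Y$ since $T_i$ is strictly upstream of $Z_i$ on $d$ while $Y$ is downstream; set $m_i := T_i \overset{m'_i}{\upathto} Y$ and break it into maximal directed segments, so every junction is a collider or a fork. The crucial remark is that $\condset{Z_i} \subseteq \Anc(Y)$: decisions are ancestors of $Y$ by (A), each context $C$ of a decision $X'$ satisfies $C \to X' \pathto Y$ by (A), and the closure $\lceil\cdot\rceil$ only adds decisions. From this: (i) every collider $W_{i,j}$ of $m_i$ lies in $\Anc(Y)$ --- it is either $T_i$, which is on the directed path $d$, or an honest collider of the active path $m'_i$ and so has a descendant in $\condset{Z_i}\subseteq\Anc(Y)$ --- whence an arbitrary directed path $r_{i,j} : W_{i,j}\pathto Y$ exists; and (ii) $m_i$ must end with an edge \emph{into} $Y$, as otherwise its final maximal segment would be directed away from $Y$ and its other end would be a collider of $m'_i$ lying both in $\Anc(Y)$ and in $\Desc(Y)\setminus\{Y\}$, contradicting acyclicity. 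Together these give the displayed form $T_i \pathto W_{i,1} \pathfrom U_{i,1} \pathto \cdots U_{i,\jmax_i} \pathto Y$, with equally many forks and colliders and with the degenerate cases $T_i = W_{i,1}$ and $\jmax_i = 0$ exactly as flagged in the statement.

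The heavy lifting --- producing, for each decision context, a non-decision parent lying off $\condset{Z_i}$, and producing an active path that delivers information into $Y$ --- is already encapsulated in Lemmas~\ref{le:chance-parent-of-Z-revised} and~\ref{le:decision-backdoor-revised} and in condition (C), so the remainder is bookkeeping. I expect the fussiest points, and hence the ``main obstacle'', to be (a) extracting from condition (C)'s $\pi_{X_i}$-to-$Y$ path one that genuinely originates at $Z_i$, which relies on the precise description of $\condset{Z_i}$ when $Z_i$ is a non-decision, and (b) checking that $T_i$ sits where it should, so that $m_i$ really departs from $d$ and its collider/fork decomposition has $T_i$ and $Y$ as its endpoints in the sense the later SCM construction requires; the edge-into-$Y$ argument in (ii) above is the delicate piece of this.
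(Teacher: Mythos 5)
Your proposal is correct and follows essentially the same route as the paper's own (much terser) proof: the control path is assembled from \Cref{le:chance-parent-of-Z-revised} and assumption (A), the info paths are extracted from assumption (C) and \Cref{le:decision-backdoor-revised}, and the auxiliary paths come from the observation that every collider of an active-given-$\condset{Z_i}$ path is an ancestor of $\condset{Z_i}\subseteq\Anc(Y)$. The extra details you supply --- why the $\pi_{X_i}$-to-$Y$ path must pass through $Z_i$ as a collider's parent, why $N$ is not a parent of $X_0$, and why $m_i$ must enter $Y$ along an inward edge --- are all correct and are points the paper leaves implicit.
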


The proof was described before the lemma statement, and is detailed in \Cref{app:materiality-paths}.

\subsection{The materiality SCM} \label{sec:nro-model}
We will now show how the materiality paths can be used to define an SCM where $Z_0$ is material for $X_0$.
As with the seleciton of paths, the construction of models will have to
differ a little from the constructions of \Cref{sec:past-1dec-criteria,sec:past-soluble}, 
in order to better deal with insolubility.
So we will first describe how we deal with insoluble graphs, in \Cref{sec:materiality-scm-specific} , 
then define a general model in \Cref{sec:materiality-scm-general}.

\subsubsection{Models for insoluble graphs} \label{sec:materiality-scm-specific}
Certain graphs that are allowed by \Cref{thm:main} violate solubility, 
and the constructions from \citet{everitt2021agent} and \citet{van2022complete}
will need to be altered in order to establish materiality in these graphs.

The assumption of solubility meant that upstream decisions could not contain latent, actionable information ---
in particular, this implied if an info path $m_i$ contains a context $C$ for a decision 
$X' \in \sX(\calS) \setminus \{X_i\}$, then $V$ would have to be context of $X_i$, 
otherwise the past decision $V$ would contain latent information that is of import to $X_i$
\citep[Lemma 28]{van2022complete}.
For example, in \Cref{fig:finite-domain-1} the red info path contains the variable $W_1$, 
which is a context for $X'$ but not for $X_0$, and solubility is violated because $W_1 \dsep Y \mid \{Z_0,X_0,X_1\}$
but it satisfies all the three conditions of \Cref{thm:main}.

We can nonetheless apply the construction from \citep{van2022complete} to this graph, 
by treating $X'$ as through it was a non-decision. 
This yields the decision problem shown in \Cref{fig:finite-domain-1},
which is example of the construction from \Cref{fig:multi-collider}), 
except that there is a decision $X'$ that observes $Z_0$ and $W_1$.
In this model, the outcome $Y$ is equal to $1$ if $x_0$ is equal to $u_1$.
The intended logic of this construction is that since $W_1 = Z_0 \oplus U_q$, the MEU can be achieved 
with the non-intervened policy $X_0=Z_0 \oplus W_1$, which would require $X_0$ to depend on $Z_0$.
In this model, however, there exists an alternative policy where $X'=U_1$ and $X_0=X'$,
which achieves the MEU of $1$, without having $X_0$ directly depend on $Z_0$, and proving that $Z_0$ is immaterial for $X_0$.
Essentially, the single bit of $X'$ sufficed to transmit the value of $U_1$, meaning that $Z_0$ contained no more useful information.
So long as the decision problem allows $X'$ can do this there can be no need for $X_0$ to observe $Z_0$.
So in order to exhibit materiality, we need the domain of $X'$ to be smaller than that of $U_1$.

As such, we can devise a modified scheme, shown in \Cref{fig:finite-domain-2}.
In this scheme, \emph{two} random bits are generated at $U_1$.
The outcome is $Y=1$ if
$X_1$ supplies one bit from $U_1$ along with its index.
A random bit is sampled at $Z_0$, 
and $W_1$ presents the $Z_0$\textsuperscript{th} bit from $U_1$, 
while $X_1$ has a domain of just one bit.
Then, similar to our previous discussion of \Cref{fig:van-merwijk-scheme}, 
the only bit from $U_1$ that $X_0$ can reliably know is the $Z_0$\textsuperscript{th} bit.
Hence the only way achieve the MEU is for $X'$ to inform $X_0$ about the value of $W_1$, 
and for $X_0$ to equal $X_0=\langle Z_0,X' \rangle$.
Importantly, this can only be done if $X_0$ observes $Z_0$; it is material for $X_0$.

\begin{figure}[t]\centering
\begin{subfigure}[b]{0.25\textwidth}\centering
\begin{tikzpicture}[rv/.style={circle, draw, thick, minimum size=6mm, inner sep=0.8mm}, node distance=15mm, >=stealth]
  \node (X0) [rv,red, label={[xshift=12.1mm,yshift=-4mm]$=Z_0 \oplus W_1$}]              {$X_0$};
  \node (Z0) [rv, above = 22mm of X0, label=0:{$Z_0\sim \unif(\bool)$}]  {$Z_0$};
  \node (Y) [rv, right=21.5mm of X0, label={[xshift=-14mm,yshift=-13mm,align=left]$f_Y(\pa_Y)=\llbracket u_{1}=x_0\rrbracket$}] {$Y$};
  \node (V1) [rv, above = 22mm of Y, label={[xshift=-4mm,yshift=-0.5mm]$U_1\sim \unif(\bool)$}]  {$U_{1}$};
  \node (W1) [rv, below right = 3mm and 3.5mm of Z0, label=0:{$=\!Z_0\!\oplus\! U_{1}$}]  {$W_{1}$};
  \node (Xp) [rv,red, below = 3mm of W1, label=0:{$=W_1$}]  {$X'$};
  
  \draw[->, very thick] (Z0) -- (X0);
  \draw[->, very thick, red] (Z0) -- (W1);
  \draw[->, very thick] (Z0) -- (Xp);
  \draw[->, very thick] (W1) -- (Xp);
  \draw[->, very thick] (Xp) -- (X0);
  \draw[->, very thick, red] (V1) -- (W1);
  \draw[->, very thick, red] (V1) -- (Y);
  \draw[->, very thick] (X0) -- (Y);
\end{tikzpicture}
\caption{$Z_0$ is immaterial for $X_0$} \label{fig:finite-domain-1}
\end{subfigure}
\hspace{1mm}
\begin{subfigure}[b]{0.27\textwidth}\centering
\begin{tikzpicture}[rv/.style={circle, draw, thick, minimum size=6mm, inner sep=0.8mm}, node distance=15mm, >=stealth]
  \node (X0) [rv,red, label={[xshift=12.1mm,yshift=-4mm]$=\langle Z_0,X'\rangle$}]              {$X_0$};
  \node (Z0) [rv, above = 22mm of X0, label=0:{$Z_0\sim \unif(\bool)$}]  {$Z_0$};
  \node (Y) [rv, right=21mm of X0, label={[xshift=-12mm,yshift=-13mm,align=left]$f_Y(\pa_Y\!)\!=\!\llbracket u_{1}[x[0]]\!=\!x[1]\rrbracket$}] {$Y$};
  \node (V1) [rv, above = 22mm of Y, label={[xshift=-5mm,yshift=-0.5mm]$U_1\sim \unif(\bool^2)$}]  {$U_{1}$};
  \node (W1) [rv, below right = 3mm and 3.5mm of Z0, label=0:{$=\!U_{1}[Z_0]$}]  {$W_{1}$};
  \node (Xp) [rv,red, below = 3mm of W1, label=0:{$=W_1$}]  {$X'$};
  
  \draw[->, very thick] (Z0) -- (X0);
  \draw[->, very thick, red] (Z0) -- (W1);
  \draw[->, very thick] (Z0) -- (Xp);
  \draw[->, very thick] (W1) -- (Xp);
  \draw[->, very thick] (Xp) -- (X0);
  \draw[->, very thick, red] (V1) -- (W1);
  \draw[->, very thick, red] (V1) -- (Y);
  \draw[->, very thick] (X0) -- (Y);
\end{tikzpicture}
\caption{$Z_0$ is material for $X_0$} \label{fig:finite-domain-2}
\end{subfigure}
\hspace{2mm}
\begin{subfigure}[b]{0.43\textwidth}\centering
\begin{tikzpicture}[rv/.style={circle, draw, thick, minimum size=6mm, inner sep=0.8mm}, node distance=15mm, >=stealth]
  \node (X0) [rv,red,label={[xshift=22mm,yshift=-4.5mm]$X_0\!=\!\langle Z_0,\!\sW_{1:J} \rangle$}]              {$X_0$};
  \node (Z0) [rv, above = 5mm of X0, label={[xshift=7.5mm,yshift=-5.5mm]$=\!U_{0}$}]  {$Z_0$};
  \node (V0) [rv, above = 22mm of X0, label={[xshift=13mm,yshift=-6mm]$U_0\sim \unif(\bool^k)$}]  {$U_{0}$};
  \node (Y) [rv, right=46mm of X0,label={[xshift=-19mm,yshift=-13mm]$f_Y\!(\pa_Y)\!=\!\llbracket x_0 \text{ com.w. }\! u_\jmax \rrbracket$}] {$Y$};
  \node (VJ) [rv, above = 21.5mm of Y, label={[overlay,xshift=-9mm,yshift=-0.5mm]$U_j\sim \unif(\bool^{\exp^{N}_2(k)})$}]  {$U_J$};
  \node (W1) [rv, below right = 7mm and 11mm of V0, label={[xshift=11mm,yshift=-7mm]$=\!U_{1}[U_{0}]$}]  {$W_1$}; %
  \node (Ndots) [rv, right = 24mm of V0,draw=none]  {\ldots};
  \node (WJ) [rv, right = 14mm of W1, label={[xshift=0mm,yshift=-16mm]$W_j=\!U_J[U_{J-1}]$}]  {$W_J$};
  
  \draw[->, very thick,dashed, red] (V0) -- (Z0);
  \draw[->, very thick] (Z0) -- (X0);
  \draw[->, very thick,dashed, red] (V0) -- (W1);
  \draw[->, very thick,dashed, red] (Ndots) -- (W1);
  \draw[->, very thick,dashed, red] (Ndots) -- (WJ);
  \draw[->, very thick,dashed, red] (VJ) -- (WJ);
  \draw[->, very thick,dashed, red] (VJ) -- (Y);
  \draw[->, very thick,dashed] (W1) -- (X0);
  \draw[->, very thick,dashed] (WJ) -- (X0);
  \draw[->, very thick,dashed] (X0) -- (Y);
\end{tikzpicture}
\caption{$Z_0$ is material for $X_0$} \label{fig:multi-collider}
\end{subfigure}
\caption[Constructing the model for the info path]{Two SCMs (a-b), and a description of a family of SCMs, where each dashed line represents a path.
The repeated exponent $\exp^{n}_2(k)$ is defined as $k$ if $n=0$, and $2^{\exp^{n-1}_2(k)}$ otherwise.
}
\label{fig:finite-domain}
\end{figure}
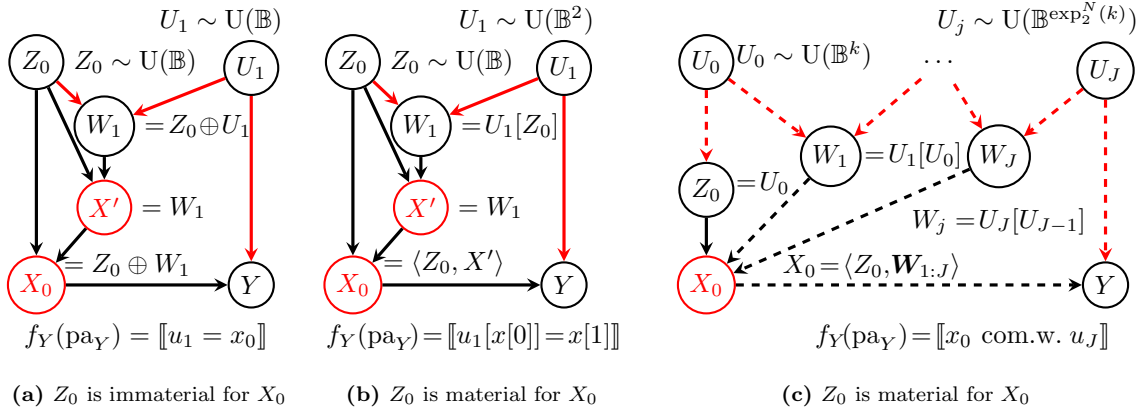

In \Cref{fig:finite-domain-2}, if $x_1$ produces the $z_0$\textsuperscript{th}
bit from $u_1$, i.e.\ $x_1 = \langle z_0,u_1[z_0] \rangle$, 
we will call it \emph{consistent} with $\langle z_0,u_1 \rangle$.
If it produces \emph{any} bit from $u_1$, then we will call it \emph{compatible} with $\langle z_0,u_1 \rangle$.
For instance, either $\langle 0,0 \rangle$ or $\langle 1,1 \rangle$ is compatible with $z_0=0$ and $u_1=01$, 
but only the former is consistent with $z_0=0$ and $u_1=00$.

We can generalise these concepts to a case of multiple fork variables, rather than just $Z_0$ and $U_1$.
For example, \Cref{fig:multi-collider}, we have $J+1$ fork variables $\sU_{0:J}$, 
which sample bitstrings of increasing length.
Then, $Z_0=W_u$, and each collider $W_i$ has $W_i=U_j[U_{j-1}]$.
The outcome $Y$ will still check whether $X_0$ is compatible with $U_J$, but it will do so using a more general definition, 
as follows.

\begin{definition}[Consistency and compatibility] \label{def:compatible}
Let $\sw = \langle w_0,w_1,\ldots,w_J \rangle$ where $w_0 \in \bool^k$ and $w_n \in \bool$ for $n\geq 1$. 
Then, $\sw$ is 
\emph{consistent with}
$\su = \langle u_0,\ldots,u_J, u_i \in \bool^{\exp^i_2(k)} \rangle$ (i.e.\ $\sw \sim \su$) if
$w_0=u_0$ and $w_n = u_n[u_{n-1}]$ for $n \geq 1$.
Moreover, 
$\sw$
is \emph{compatible with}
$u_J \in \bool^{\exp^J_2(k)}$ 
(i.e.\ $\sw \sim u_J$) if there exists any $u_0,\dots,u:{J-1}$ 
such that $\sw$ is consistent with $u_0,\dots,u_J$.
\end{definition}

In \Cref{fig:finite-domain-2}, if, with positive probability, the assignment of $X_0$ is inconsistent with $\langle z_0,u_1 \rangle$,
then the decision-maker is also penalised with strictly positive probability.
For instance, if the assignments $z_0=0$ and $u_1=01$ lead to the assignment $x = \langle 1,1 \rangle$, 
then this policy will achieve utility of $y=0$ given the assignments $y_0=0$ and $u_1=00$, 
since they cause the values $z_0=0$ and $w_1=0$, which will cause the assignment $x = \langle 1, 1 \rangle$, 
which is not consistent with $z_0=0$ and $u_1=\langle 0, 0 \rangle$.
We find that the same is true in the more general mode of \Cref{fig:multi-collider}.
If with strictly positive probability, the assignment of $X_0$ is inconsistent with $\su_{0:J}$, 
then there will exist an alternative assignment $\sU_{0:J}=\su'_{0:J}$, 
that produces the same assignments to the observations of $X_0$, 
but where $X_0$ is not compatible with $\su'_J$.

\begin{restatable}{lemma}{encryptiontwo} \label{le:encryption2}
Let 
$\sw = \langle w_0,\ldots,w_J\rangle$
and $\bm{\bar{w}} = \langle \bar{w}_0,\ldots,\bar{w}_J\rangle$
be sequences
with $w_0,\bar{w}_0 \in \bool^k$, 
$w_j,\bar{w}_j \in \bool$ for $j \geq 1$, and
let $J'\leq J$ be the smallest integer such that $w_{J'} \neq \bar{w}_{J'}$.
Let $u_0,\ldots,u_{J'}$ be a sequence where $u_j[u_{j-1}]=w_j$ for $1 \leq j< J'$.
Then, there exists some
$u_{J'+1},\ldots,u_J$ 
such that $\sw$ is consistent with $u_0,\ldots,u_J$,  
but $\bm{\bar{w}}$ is incompatible with $u_J$.
\end{restatable}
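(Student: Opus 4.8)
The plan is to construct the latent tail explicitly, choosing each $u_n$ to be a near-constant bitstring that pins down the value $w_n$ while excluding itself from the set of indices the $\bm{\bar w}$-side could ever reach (reading the hypothesis as fixing a consistent prefix $u_0,\dots,u_{J'-1}$ with $\langle w_0,\dots,w_{J'-1}\rangle=\langle\bar w_0,\dots,\bar w_{J'-1}\rangle$, and leaving $u_{J'},\dots,u_J$ among the values to be built). The first step is to isolate the combinatorial object that governs compatibility: for $0\le n\le J$, let $S_n\subseteq\bool^{\exp^n_2(k)}$ be the set of bitstrings arising as the last entry of some sequence $\langle v_0,\dots,v_n\rangle$ with $v_0=\bar w_0$ and $v_j[v_{j-1}]=\bar w_j$ for $1\le j\le n$; equivalently $S_0=\{\bar w_0\}$ and $S_n=\{v : v[v']=\bar w_n \text{ for some } v'\in S_{n-1}\}$. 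Unwinding \Cref{def:compatible}, $\bm{\bar w}$ is compatible with $u_J$ exactly when some $v\in S_{J-1}$ has $u_J[v]=\bar w_J$, and crucially $S_n$ depends only on $\bar w_0,\dots,\bar w_n$ and $k$, not on the $u$'s being built. So it suffices to produce $u_{J'},\dots,u_J$ with (i) $u_n[u_{n-1}]=w_n$ for $J'\le n\le J$ — which together with the prefix makes $\sw$ consistent with $\langle u_0,\dots,u_J\rangle$ — and (ii) $u_J[v]\neq\bar w_J$ for every $v\in S_{J-1}$.

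The construction: for each $n$ with $J'\le n\le J$, set $u_n[u_{n-1}]:=w_n$ and $u_n[v]:=1-\bar w_n$ at every other index $v$ (when $J'=0$, the only requirement is $u_0:=w_0$). I would then prove $u_n\notin S_n$ by induction on $n$ from $J'$ up to $J$. Base case: if $J'\ge 1$, then $w_{J'}\neq\bar w_{J'}$ together with $w_{J'},\bar w_{J'}\in\bool$ forces $1-\bar w_{J'}=w_{J'}$, so $u_{J'}$ is in fact the constant string $w_{J'}$ and $u_{J'}[v]=w_{J'}\neq\bar w_{J'}$ for every $v$, hence $u_{J'}\notin S_{J'}$; if $J'=0$, then $u_0=w_0\neq\bar w_0$, the unique element of $S_0$, so again $u_0\notin S_0$. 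Note the base case uses nothing whatsoever about $S_{J'-1}$. Inductive step, $J'<n\le J$: by hypothesis $u_{n-1}\notin S_{n-1}$, so every $v\in S_{n-1}$ differs from $u_{n-1}$, whence $u_n[v]=1-\bar w_n\neq\bar w_n$, giving $u_n\notin S_n$. Applying this at $n=J$ yields (ii); (i) holds by construction; and $\sw$ is consistent with $\langle u_0,\dots,u_J\rangle$, completing the proof.

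I expect the only genuine obstacle to be the induction's base case, and relatedly the fact that the statement has to let us choose $u_{J'}$ itself. A naive induction establishing ``$u_n\notin S_n$'' starting from $n=0$ upward is doomed: for $n<J'$ the prefixes of $\sw$ and $\bm{\bar w}$ coincide, so $u_n\in S_n$, and for $J'\ge 2$ the set $S_{J'}$ need not be a singleton, so the mismatch $u_{J'}[u_{J'-1}]\neq\bar w_{J'}$ by itself does \emph{not} give $u_{J'}\notin S_{J'}$ — the mismatch at level $J'$ must first be amplified into a constant bitstring before it can propagate up the chain. Beyond that the remaining work is bookkeeping: each $u_n$ must have length exactly $\exp^n_2(k)=2^{\exp^{n-1}_2(k)}$, so that indexing it by $u_{n-1}$, or by any $v\in S_{n-1}$ (all of length $\exp^{n-1}_2(k)$), is well defined — which is automatic from the nested-exponential domains fixed in \Cref{def:compatible}.
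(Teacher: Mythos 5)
Your proof is correct and follows essentially the same route as the paper's: the paper proves an auxiliary induction (\Cref{le:encryption1}) using exactly your construction $u_n[u_{n-1}]=w_n$ and $u_n[v]=1-\bar w_n$ at every other index, and then seeds it in the proof of \Cref{le:encryption2} with a constant bitstring at the first index of disagreement, which is precisely your base case (your observation that $1-\bar w_{J'}=w_{J'}$ just makes the constant string fall out of the general construction). Your reachable-set $S_n$ is a mild repackaging of the paper's compatibility bookkeeping, and your explicit treatment of $u_{J'}$ as one of the values to be constructed is, if anything, a cleaner handling of a point that the paper's own two-lemma split leaves somewhat garbled.
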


The proof is deferred to \Cref{app:collider-path-reqs}.

This result implies that an optimal policy in \Cref{fig:multi-collider}, 
$x_0$ must be consistent with $\su_{0:J}$ with probability $1$.
After all, the non-intervened policy clearly achieves the MEU of $1$, being that it is consistent 
with $\su_{0:J}$, and consistency implies compatibility.
On the other hand, if $x_0$ is inconsistent with $\su_{0:J}$ with strictly positive probability, 
then there will exist an alternative assignment $\su'_{0:J}$ that produces the same assignment $x_0$, 
and since the variables $\sU_{0:J}$ have full support, this will lead to $y=0$ will strictly positive probability, 
and decrease the expected utility.
If a policy cannot copy $Z_0$ without observing it, then this will make $X_0$ inconsistent with $\su$ with strictly positive probability, 
so this policy will not be optimal.
One may notice that by setting $U_0$ to contain $k$ bits rather than just one, this will make it very difficult for $Z_0$ to 
copy the value of $Z_0$ without observing it, if a sufficiently large $k$ is chosen.
We will develop a fully formal argument for materiality in \Cref{sec:model-implies-nro}.

\subsubsection{A decision problem for any graph containing the materiality paths} \label{sec:materiality-scm-general}
We will now generalise the constructions from \Cref{fig:yes-voi-directed} (for a truncated info path is a directed path)
and \Cref{fig:multi-collider} (for a truncated info path that is not a directed path)
to an arbitrary graph containing the materiality paths described in \Cref{le:nro-paths}.

To begin with, let us note that the materiality paths may overlap.
So our general approach will be to define a random variable $V^p$ for each variable in a path $p$.
To derive the overall materiality SCM, we will simply define $V$ by a cartesian product over each $V_p$.
For the outcome variable $Y$, we will instead take a sum over each $Y^p$.
For any set of paths $\sp$, we define $V^{\sp} = \times_{p \in \sp} V^p$.

Let us now discuss the control path.
The initial node $A$ will sample a bitstring that is passed along the control path, and through each intersection node $T_i$ in particular.
To describe this, we will rely on shorthand.

\begin{definition}[Parents along paths] \label{pg:Vp}
When a vertex $V$ has a unique parent $\bar{V}$ along $p$, 
$\Pa({V^p}) = \bar{V}^p$, 
and for a set of paths $\bm{p}'$, let $\Pa(V^{\bm{p}'}) = \cat_{p \in \bm{p}'} \Pa(V^p)$.
For a collider $V$ in a truncated info path $m_i:T_i \upathto Y$, 
let the parent nearer $T_i$ along $m_i$ be $\Pa_L(V)$, and the parent nearer $Y$ be $\Pa_R(V)$.
\end{definition}

For example, a non-outcome child $V$ of $A$ along the control path will be assigned $V^d = \Pa(V^d)$.

Each info path must pass on information from upstream paths that traverse the intersection node.
We therefore use the notation $\sp_i$ to refer to the set of control and auxiliary paths that enter the intersection node $T_i$.
We also devise an extended notion of parents $\Pa^*$ to include this information.
Relatedly, we will define a notion of parents for the auxiliary path, which includes information from the collider $W_{i,j}$ of the info path, 
and a notion of parents for the paths $\sp_i$, that includes the exogenous parent $\Eps_A$ of $A$.
\begin{definition}[Extended parent relations]
For a truncated info path $m_i$, let: 
\ifthesis
\hspace{20mm} $\hspace{20mm}\Pa^*(V^{m_i}) = \begin{cases}T_i^{\sp_i} & \text{ if } \Pa(V^{m_i})=T_i^{m_i} \\ \Pa(V^{m_i}) & \text{ otherwise}\end{cases}$ and \\
\hspace{20mm} $\hspace{20mm}\Pa^*_l(V)=\begin{cases}T_i^{\sp_i} & \text{ if } \Pa_L(V^{m_i})=T_i^{m_i} \\ \Pa_L(V_l^{m_i}) & \text{ otherwise}\end{cases}.$
\else
$$\Pa^*(V^{m_i}) = \begin{cases}T_i^{\sp_i} & \text{ if } \Pa(V^{m_i})=T_i^{m_i} \\ \Pa(V^{m_i}) & \text{ otherwise}\end{cases} \text{, and } 
\Pa^*_l(V)=\begin{cases}T_i^{\sp_i} & \text{ if } \Pa_L(V^{m_i})=T_i^{m_i} \\ \Pa_L(V_l^{m_i}) & \text{ otherwise}\end{cases}.$$
\fi

For an auxiliary path $r_{i,j}$, let $\Pa^*(V^{r_{i,j}}) = \begin{cases}W_{i,j}^{m_i} & \text{ if } \Pa(V^{r_{i,j}})=W_{i,j}^{m_i} \\ \Pa(V^{r_{i,j}}) & \text{ otherwise}\end{cases}.$

Finally, let:
$\Pa^*(V^{\sp_i}) = \begin{cases}\Eps_A \times \Pa(V^{\sp_i}) & \text{ if } V \text{ is } A \\ \Pa(V^{\sp_i}) & \text{ otherwise}\end{cases}.$
\end{definition}

In other respects, the materiality SCM will behave in a similar manner to previous examples.
For instance, when $m_i$ is directed, the outcome $Y^{m_i}$ will evaluate whether the values $\Pa(Y^{\sp_i})$ (which mostly come from $X_i$) are equal to $\Pa(Y^{m_i})$, 
which come from the info path.
When $m_i$ is not directed, the outcome $Y^{m_i}$ will evaluate whether the values from $\Pa(Y^{\sp_i,r_{i,0:\jmax}})$ are compatible with those from $U_{i,\jmax}$.
So let us now define the materiality SCM as follows.

\begin{figure}[t]
\centering
\begin{tikzpicture}[rv/.style={circle, draw, thick, minimum size=6mm, inner sep=0.8mm}, node distance=15mm, >=stealth]
  \node (T0) [rv, label={[xshift=0mm,yshift=-13mm]$=U_{0,1}[A]$}]  {$T_0$};
  \node (Z0) [rv, right = 6.5mm of T0, label={[xshift=0mm,yshift=-13mm]$=T_0$}]  {$Z_0$};
  \node (X0) [rv,red, right = 3.5mm of Z0, label={[xshift=0mm,yshift=-13mm]$=Z_0$}]    {$X_0$};
  \node (N) [rv, left = 4.3mm of T0, label={[xshift=4mm,yshift=0mm]$\sim \unif(\bool^{k})$}]  {$A$};
  \node (V0) [rv, above = 25mm of T0, label={[xshift=0mm,yshift=0mm]$\sim \unif(\bool^{2^k})$}]  {$U_{0,1}$};
  \node (Y) [rv, right=92.5mm of X0, label={[xshift=-59mm,yshift=-41mm,align=left]$f_Y(\pa_Y)\!=\!{\color{red}\llbracket  \pa(Y^{\sp_0,\sr_{0:\jmax^0}})\text{ com.w. } u_{0,\jmax_0} \rrbracket}$ \, $+{\color{green!70!black}\llbracket \pa(Y^{\sp_1,\sr_{0:\jmax^1}}) \text{ com.w. } \!u_{1,\jmax_1} \rrbracket}$ \, $+{\color{blue}\llbracket x_I=t_I \rrbracket}$}] {$Y$};
  \node (NJ) [rv, above = 25mm of Y, label={[xshift=-7mm,yshift=0mm]$\sim \unif(\bool^{\exp_2^{\jmax_0}(k)})$}]  {$U_{0,\jmax_0}$};
  \node (W1) [rv, below right = 3mm and 20mm of V0, label={[xshift=0mm,yshift=0mm]$=U_{0,2}[U_{0,1}]$}]  {$W_{0,2}$}; %
  \node (Ndots) [rv, right = 44mm of V0,draw=none]  {\ldots};
  \node (WJ) [rv, right = 44mm of W1, label={[xshift=0mm,yshift=0mm]$=U_{0,\jmax_0}[U_{0,\jmax_0-1}]$}]  {$W_{0,\jmax_0}$};
  
\node (T1) [rv, right=6.5mm of X0, label={[xshift=0mm,yshift=-13mm]$=X_0$}] {$T_1$};
  \node (Z1) [rv, right=6.5mm of T1, label={[xshift=0mm,yshift=0mm]$=T_1$}] {$Z_1$};
  \node (X1) [rv, red, right=3.5mm of Z1, label={[xshift=0mm,yshift=0mm]$=Z_1$}] {$X_1$};
  \node (Xdots) [rv, right=2.5mm of X1,draw=none] {\ldots};
  \node (TI) [rv, right=2mm of Xdots, label={[xshift=0mm,yshift=0mm]$=\pa(T^d_I)$}] {$T_I$};
  \node (ZI) [rv, right=5.5mm of TI, label={[xshift=0mm,yshift=0mm]$=T_I$}] {$Z_I$};
  \node (XI) [rv, red, right=3.5mm of ZI, label={[xshift=0mm,yshift=0mm]$=Z_I$}] {$X_I$};
  
  \node (V10) [rv, below left = 12mm and 0mm of T1, label={[xshift=12mm,yshift=-11mm]$\sim \unif(\bool^{2^k})$}] {$U_{1,0}$};
  \node (W11) [rv, below right = 4mm and 7mm of T1, label={[xshift=12mm,yshift=-11mm]$=U_{1,0}[T_1]$}] {$W_{1,1}$};
  \node (W1dots) [rv, above right = -5mm and 18mm of V10,draw=none] {\ldots};
  \node (V1J) [rv, below right = 2mm and 7mm of W1dots, label={[xshift=19mm,yshift=-9mm]$\sim \unif(\bool^{\exp_2^{\jmax_1+1}(k)})$}] {$U_{1,\jmax_1}$};

  \draw[->, very thick, color=black,dashed] (N) -- (T0);
  \draw[->, very thick, color=red,dashed] (V0) -- (T0);
  \draw[->, very thick, dashed] (T0) -- (Z0);
  \draw[->, very thick, color=black] (Z0) -- (X0);
  \draw[->, very thick, color=red,dashed] (V0) -- (W1);
  \draw[->, very thick, color=red,dashed] (Ndots) -- (W1);
  \draw[->, very thick, color=red,dashed] (Ndots) -- (WJ);
  \draw[->, very thick, color=red,dashed] (NJ) -- (WJ);
  \draw[->, very thick, color=red,dashed] (NJ) -- (Y);
  
\draw[->, thick, color=black,dashed, gray!50!white] (W1) to [bend left=9] (Y);
\draw[->, thick, color=black,dashed, gray!50!white] (T0) to [bend left=21] (Y);
\draw[->, thick, color=gray!50!white,dashed] (WJ) -- (Y);
\draw[->, thick, color=black,dashed, gray!50!white] (W11) to [bend right=11] (Y);
  
  \draw[->, very thick, color=black,dashed] (X0) -- (T1);
  \draw[->, very thick, dashed] (T1) -- (Z1);
  \draw[->, very thick, color=black] (Z1) -- (X1);
  \draw[->, very thick, color=black,dashed] (X1) -- (Xdots);
  \draw[->, very thick, color=black,dashed] (Xdots) -- (TI);
  \draw[->, very thick, dashed] (TI) -- (ZI);
  \draw[->, very thick, color=black] (ZI) -- (XI);
  \draw[->, very thick, color=black,dashed] (XI) -- (Y);
  \draw[->, very thick, color=green!70!black,dashed] (T1) -- (W11);
  \draw[->, very thick, color=green!70!black,dashed] (V10) -- (W11);
  \draw[->, very thick, color=green!70!black,dashed] (V10) -- (W1dots);
  \draw[->, very thick, color=green!70!black,dashed] (V1J) -- (W1dots);  
  \draw[->, very thick, color=green!70!black,dashed] (V1J) -- (Y);  

\draw[->, very thick, color=blue!80!white,dashed] (TI) to [bend left=24] (Y);
  
\end{tikzpicture}
\caption[The materiality SCM]{
The materiality SCM: a general SCM where $Z_0$ is material for $X_0$.
} \label{fig:nro-model}
\end{figure}

\begin{definition}[Materiality SCM] \label{def:nro-model}
Given a graph containing the materiality paths, we may define the following random variables.

In the control path, $d:A \pathto Y$, let:
    \begin{itemize}
    \item the source be $A^d = \Eps^{A^d}$ where $\Eps^{A^d} \sim \unif(\bool^k)$
    where $k$ is the smallest positive integer such that $2^k > (k + c)bc$, 
    where $b$ is the maximum number of variables that are contexts of one decision, $b:=\max_{X \in \sX(\calS)} \lvert C_X \rvert$, 
    and $c$ is the maximum number of materiality paths passing through any vertex in the graph;
    \item every non-endpoint $V$ have $V^d = \Pa(V^d)$.
    \end{itemize}
In each truncated info path that is directed, $m_i:T_i \pathto Y$, let:
\begin{itemize}
    \item the intersection node $T^{m_i}$ have trivial domain;
    \item each chain node be $V^{m_i} = \Pa^*(V^{m_i})$
    \item the outcome have the function $f_{Y^{m_i}}(\pa_Y) = \llbracket \pa(Y^{\sp_i}) = \pa^*(Y^{m_i}) \rrbracket$.
\end{itemize}
In each truncated info path that is not directed, $T_i \upathto \gets W_{i,1} \to \cdots \gets W_{i,\jmax} \pathto Y$, let:
\begin{itemize}
\item each fork be $W_{i,j}^{m_i} \!\!=\! \Eps^{W_{i,j}^{m_i}},\Eps^{W_{i,j}^{m_i}} \!\!\sim \!\Unif(\bool^{\exp_2^{j}(k+\lvert \sp_i \rvert - 1)})$ where $\lvert \sp_i\! \rvert$ is the number of paths in $\sp_i$;
\item each chain node be $V^d = \Pa^*(V^d)$;
\item each collider be $V^{m_i} = \Pa_R(V^{m_i})[\Pa^*_L(V^{m_i})]$;
\item each intersection node be $T^{m_i}_i  = \Pa(V^{m_i})[\Pa^*(T_i^{\sp_i})]$ if the info path begins as $T_i \to \cdot$, otherwise it has empty domain;
\item the outcome have the function $f_{Y^{m_i}}(\pa_Y) = \llbracket \pa(Y^{\sp_i,\sr_{i,1:\jmax_i}})$ is compatible with $\pa^*(Y) \rrbracket.$ 
\end{itemize}
    
In each auxiliary path $r_{i,j}:W_{i,j} \to V_2 \pathto Y$, let:
    \begin{itemize}
    \item each chain node have $V^{r_{i,j}} = \Pa^*(v^{r_{i,j}})$.
    \item each source $W_{i,j}$ have trivial domain
    \end{itemize}

Then, let the \emph{materiality SCM}
have outcome variable $Y = \sum_{\imin \leq i \leq \imax} Y^{m_i}$, 
and non-outcome variables $V=\cat_{p \in \{d,m_i,r_{i,1 :\jmax_i} \mid \imin \leq i \leq \imax\}} V^p$.
\end{definition}

Note that this defines an SCM because each variable is a deterministic function of only its endogenous parents and exogenous variables.

We have define the materiality SCM so that decisions behave just as non-decisions, which always 
do what is required to ensure that $Y^{m_i} = 1$.

\begin{restatable}{lemma}{nroyesedge} \label{le:nro-yes-edge}
In the non-intervened model, the materiality SCM has
$Y=\idiff$, surely.
\end{restatable}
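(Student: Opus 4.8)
The plan is to prove, by a careful induction following the topological structure of the materiality paths, that every summand $Y^{m_i}$ in the outcome $Y = \sum_{\imin \le i \le \imax} Y^{m_i}$ evaluates to $1$ in the non-intervened model, regardless of the realised value of the single exogenous source $\Eps^{A^d}$ and the fork variables $\Eps^{W_{i,j}^{m_i}}$, $U_{i,\jmax_i}$, etc. Since there are exactly $\idiff = \imax - \imin + 1$ such terms, this gives $Y = \idiff$ surely. The argument splits into the three cases by which $Y^{m_i}$ can be defined in Definition \ref{def:nro-model}: the directed truncated info path, the non-directed truncated info path, and the blue term $\llbracket x_I = t_I \rrbracket$ (which is really the special case of a directed info path of length zero, or is handled identically).

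First I would set up the ``value propagation'' lemma along the control path $d: A \pathto Z_0 \to X_0 \pathto Y$: since each non-endpoint $V$ on $d$ has $V^d = \Pa(V^d)$, and each decision $X_i$ on $d$ in the non-intervened model simply has $X_i = Z_i$ (we defined $f_{X_i}$ to behave like a non-decision copying its parent along $d$), an easy induction shows that for every node $V$ on $d$, the $d$-component equals $\Eps^{A^d}$ propagated forward; in particular $T_i^{\sp_i}$ carries (a copy of) $\Eps^{A^d}$ together with the auxiliary-path contributions that feed $T_i$. Here one must be slightly careful to track that $\Pa^*$ at the intersection node $T_i$ injects exactly the information $T_i^{\sp_i}$, i.e.\ the tuple of $d$-values and auxiliary-path values entering $T_i$, and that this is precisely the tuple the downstream outcome term $Y^{m_i}$ will be compared against via $\pa(Y^{\sp_i})$.

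Next, for a \emph{directed} truncated info path $m_i: T_i \pathto Y$, each chain node copies $\Pa^*(V^{m_i})$, so the value arriving at $Y$ along $m_i$ is exactly $\pa^*(Y^{m_i}) = T_i^{\sp_i} = \pa(Y^{\sp_i})$; hence $f_{Y^{m_i}} = \llbracket \pa(Y^{\sp_i}) = \pa^*(Y^{m_i})\rrbracket = 1$. For a \emph{non-directed} truncated info path, I would invoke the structure underlying Definition \ref{def:compatible}: the forks $U_{i,j}$ sample bitstrings of the prescribed exponentially-growing lengths, each collider computes $W_{i,j} = \Pa_R(W_{i,j})[\Pa^*_L(W_{i,j})] = U_{i,j}[\,\cdot\,]$ and each auxiliary path $r_{i,j}$ faithfully copies $W_{i,j}$ down to $Y$; together with the value of $T_i^{\sp_i}$ reaching $Y$ along the chain part of $m_i$, the tuple $\pa(Y^{\sp_i, \sr_{i,1:\jmax_i}})$ is \emph{consistent with} $\su_{i,0:\jmax_i}$ in the sense of Definition \ref{def:compatible} (indices line up by construction of the collider functions), and consistency implies compatibility with $u_{i,\jmax_i}$, so again $f_{Y^{m_i}} = 1$. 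The blue term $\llbracket x_I = t_I\rrbracket$ holds because $X_I = Z_I = T_I$ in the non-intervened model. Summing the $\idiff$ unit contributions yields $Y = \idiff$ surely.

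The main obstacle I expect is purely bookkeeping rather than conceptual: making precise the claim ``consistency holds by construction'' when the materiality paths overlap. Because $V = \cat_{p} V^p$ is a Cartesian product over possibly-overlapping paths, one must verify that the component used to define a collider's \emph{index} ($\Pa^*_L$) is genuinely the component that was propagated from $T_i$ (and ultimately from $A$), and not some unrelated path-component that happens to live on the same vertex; similarly that the fork-length bookkeeping ($\exp_2^j(k + \lvert\sp_i\rvert - 1)$) is exactly what Definition \ref{def:compatible} and Lemma \ref{le:encryption2} need. I would handle this by fixing, once and for all, the correspondence between path-indices and product-coordinates, then checking the three bullet-point function definitions one coordinate at a time; each check is routine given the propagation lemma along $d$ and along each $r_{i,j}$. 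No choice of the exogenous values matters, so ``surely'' follows from the fact that all the equalities above are identities in those values.
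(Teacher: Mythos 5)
Your proposal is correct and follows essentially the same route as the paper's proof: reduce to showing each $Y^{m_i}=1$, split into the directed and non-directed cases for the truncated info path, use the fact that chain nodes copy their (extended) parents to propagate $T_i^{\sp_i}$ and the collider values to $Y$, and in the collider case conclude via consistency implying compatibility. The extra detail you add on the induction along $d$ and the coordinate bookkeeping for overlapping paths is a harmless elaboration of what the paper leaves implicit.
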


The proof follows from the model definition, and is supplied in \Cref{app:nro-yes-edge}.

We also know that each utility term $Y^{m_i}$ is upper bounded at one,
so in order to obtain the MEU, each $Y^i$ must equal $1$, almost surely.

\begin{lemma} \label{le:cannot-lapse}
    If a policy $\spi$ for the materiality SCM,
    has $P^\spi(Y^{m_i} \!<\! 1)\!>\!0$ for any $i$, the MEU is not achieved.
\end{lemma}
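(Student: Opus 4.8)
The plan is to combine Lemma \ref{le:nro-yes-edge} with the fact that each utility term $Y^{m_i}$ is bounded above by $1$. First I would observe that, by construction, each summand $Y^{m_i}$ of the outcome $Y = \sum_{\imin \leq i \leq \imax} Y^{m_i}$ takes values in $\{0,1\}$: when $m_i$ is directed, $Y^{m_i}$ is an Iverson bracket $\llbracket \pa(Y^{\sp_i}) = \pa^*(Y^{m_i}) \rrbracket$, and when $m_i$ is not directed it is the Iverson bracket $\llbracket \pa(Y^{\sp_i,\sr_{i,1:\jmax_i}}) \text{ com.w.\ } \pa^*(Y)\rrbracket$; in either case $0 \leq Y^{m_i} \leq 1$ pointwise, hence $Y \leq \idiff$ surely under \emph{every} policy.

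Next I would compute the expected utility under an arbitrary policy $\spi$. Since $Y \leq \idiff$ surely, $\mu_\spi \leq \idiff$, with equality if and only if $Y = \idiff$ almost surely, i.e.\ if and only if $Y^{m_i} = 1$ almost surely for every $i$ in the range $\imin \leq i \leq \imax$ (using again that each $Y^{m_i} \leq 1$). Now suppose $\spi$ has $P^\spi(Y^{m_i} < 1) > 0$ for some $i$; since $Y^{m_i}$ is $\{0,1\}$-valued this means $P^\spi(Y^{m_i} = 0) > 0$, so $\EE_\spi[Y^{m_i}] < 1$, and therefore $\mu_\spi = \sum_{i} \EE_\spi[Y^{m_i}] < \idiff$. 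On the other hand, Lemma \ref{le:nro-yes-edge} tells us that the non-intervened model (which corresponds to a particular policy following the scope, namely the one in which each decision copies its designated parent) achieves $Y = \idiff$ surely, hence expected utility exactly $\idiff$. Thus $\mu^*_\calS = \idiff$, and since $\mu_\spi < \idiff = \mu^*_\calS$, the policy $\spi$ does not achieve the MEU.

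There is essentially no hard step here: the only point requiring a small amount of care is confirming that the value $\idiff$ witnessed by Lemma \ref{le:nro-yes-edge} is genuinely attainable by a policy \emph{following the scope} $\calS$ (so that it is a valid lower bound on $\mu^*_\calS$), which is immediate because the assignments in Definition \ref{def:nro-model} make each decision $X_i$ a deterministic function of a context ($Z_i$, which is in turn the copy of $T_i$), so the copying behaviour is realised by a legitimate deterministic mixed policy. Everything else is the elementary observation that a finite sum of $[0,1]$-valued terms is maximised only when every term is maximised almost surely.
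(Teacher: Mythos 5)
Your proposal is correct and follows essentially the same route as the paper's proof: bound each $Y^{m_i}$ above by $1$, conclude $\EE^\spi[Y] < \idiff$ whenever some term falls short with positive probability, and compare against the non-intervened model's surely-achieved value of $\idiff$ from \Cref{le:nro-yes-edge}. Your added care in checking that the non-intervened behaviour is realised by a legitimate policy following $\calS$ is a reasonable elaboration of a step the paper leaves implicit.
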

\begin{proof}
We know that $\EE^\spi[Y] = \sum_{\imin \leq i \leq \imax} Y^{m_i}$ (\Cref{def:nro-model}),
so for all $i$, $Y^{m_i} \leq 1$ always.
So, if $P^\spi(Y^{m_i}<1)>0$ for any $i$, then 
$\EE^\spi[Y]<\idiff$, which underperforms the policy that is followed in the non-intervened model (\Cref{le:nro-yes-edge}).
\end{proof}

\subsection{Proving materiality in the materiality SCM} \label{sec:model-implies-nro}
We will now prove that in the materiality SCM, if $Z_0$ is removed from the contexts of $X_0$, 
then the performance for at least one of the utility variables $Y^{m_i}$ is compromised,
and so the MEU is not achieved.
The proof divides into two cases, based on whether the child of $X_0$ along the control path 
is a non-decision (\Cref{sec:non-decision-next})
or a decision (\Cref{sec:decision-next}).


\begin{figure}[t]\centering
\begin{subfigure}[b]{0.49\textwidth}\centering
\begin{tikzpicture}[rv/.style={circle, draw, thick, minimum size=9.6mm, inner sep=0.3mm}, node distance=15mm, >=stealth,inner sep=0mm]

  \node (X0) [rv,red,inner sep=0mm, label={[xshift=11.1mm,yshift=-12mm]}]              {$X_0$};
  \node (T0) [rv, left=4.3mm of X0,inner sep=0mm] {$T_0$};
  \node (T1) [rv,right=4.3mm of X0, label={[xshift=11.1mm,yshift=-12mm]}]              {$T_1$};
  \node (Pap1) [rv,right=4.3mm of T1, inner sep=0mm]              {$\Pa^{\sp_1}\!(\!Y\!)$};
  \node (Z0) [rv, above = 18mm of T0, label=0:{}]  {$U_{0,1}$};
  \node (Y) [rv, right=36mm of X0, label={[xshift=-3mm,yshift=-13mm,align=left]}] {$Y$};
  \node (Vl) [rv, above = 18.8mm of Y, label={[xshift=9.6mm,yshift=-6mm]}]  {$U_{1,\ell}$};
  \node (ldots) [rv, draw=none, below right = 5mm and 18.5mm of Z0]  {$\ldots$};
  
  \draw[->, very thick,dashed,red] (Z0) -- (T0);
  \draw[->, very thick,dashed] (T0) -- (X0);
  \draw[->, very thick, dashed, red] (Z0) -- (ldots);
  \draw[->, very thick, dashed, red] (Vl) -- (ldots);
  \draw[->, very thick, dashed, red] (Vl) -- (Y);
  \draw[->, very thick,dashed] (X0) -- (T1);
  \draw[->, very thick, dashed] (T1) -- (Pap1);
  \draw[->, very thick] (Pap1) -- (Y);
\end{tikzpicture}
\caption{
The intersection node $T_1$ is a chance node.\vspace{8mm}} \label{fig:T1-chance}
\end{subfigure}
\begin{subfigure}[b]{0.49\textwidth}\centering
\begin{tikzpicture}[rv/.style={circle, draw, thick, minimum size=9.6mm, inner sep=0.3mm}, node distance=15mm, >=stealth,inner sep=0mm]

  \node (X0) [rv,red,inner sep=0mm, label={[xshift=11.1mm,yshift=-12mm]}]              {$X_0(\!T_1\!)$};
  \node (CX0) [rv, above = 4.3mm of X0, inner sep=0mm]  {$C^{m_1}_{X_0}$};
  \node (X1) [rv,right=4.8mm of X0,red, label={[xshift=11.1mm,yshift=-12mm]}]              {$X_1$};
  \node (Pap1) [rv,right=4.3mm of X1, inner sep=0mm]              {$\Pa^{\sp_1}\!(\!Y\!)$};
  \node (Z0) [rv, above = 18mm of X0, label=0:{}]  {$U_{1,1}$};
  \node (Y) [rv, right=36mm of X0, label={[xshift=-3mm,yshift=-13mm,align=left]}] {$Y$};
  \node (Cnon) [rv, left=4.3mm of X0,inner sep=0mm] {$C^{\neg m_1}_{X_0}$};
  \node (Vl) [rv, above = 18.8mm of Y, label={[xshift=9.6mm,yshift=-6mm]}]  {$U_{1,\ell}$};
  \node (ldots) [rv, draw=none, below right = 5mm and 16.5mm of Z0]  {$\ldots$};
  
  \draw[->, very thick,dashed] (Z0) -- (CX0);
  \draw[->, very thick] (CX0) -- (X0);
  \draw[->, very thick] (Cnon) -- (X0);
  \draw[->, very thick, dashed, red] (Z0) -- (ldots);
  \draw[->, very thick, dashed, red] (Vl) -- (ldots);
  \draw[->, very thick, dashed, red] (Vl) -- (Y);
  \draw[->, very thick] (X0) -- (X1);
  \draw[->, very thick, dashed] (X1) -- (Pap1);
  \draw[->, very thick] (Pap1) -- (Y);
\end{tikzpicture}
\caption{
The intersection node $T_1$ is a decision. The contexts of $X_0$ are divided into
$C^{m_1}_{X_0}$ (the parent along the info path), and $C^{\neg m_1}_{X_0}$ (the other parents).} \label{fig:T1-decision}
\end{subfigure}
\caption[The intersection node is, or is not, a decision in the proof of materiality]{The cases where the intersection node $T_1$ is a chance node, or a decision} \label{fig:T1-type}
\end{figure}
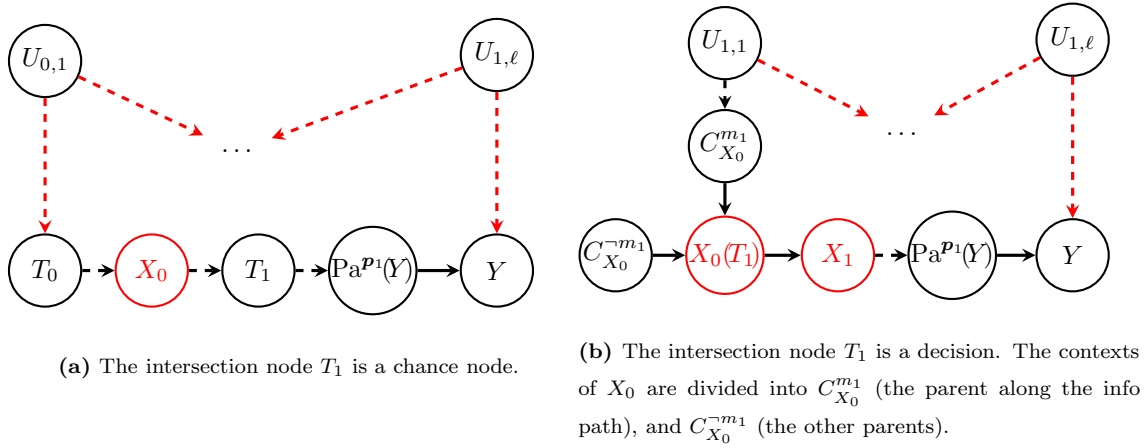

\ryan{Is it really necessary to use the bundle $\sp_i$, rather than just $T^d_i$?}

\subsubsection{Case 1: child of $X_0$ along $d$ is a non-decision.} \label{sec:non-decision-next} 
If the child of $X_0$ along the control path is a non-decision and $Z_0$ is not a context of $X_0$, 
we will prove that $\EE[Y^{m_0}]<1$.
In this case, either $X_0$ is the last decision in the control path, 
or otherwise there must exist an intersection node $T_1$, as shown in \Cref{fig:T1-chance}.
If the former is true, then it is immediate that the value $x_0$ is transmitted to $Y$ 
along the control path, based on the model definition.
As such, $Y_0$ can directly evaluate the decision $X_0$.
For the latter case, we want an assurance that downstream decisions will pass along the value of $X$, 
as was the case in \Cref{fig:van-merwijk-scheme}.
Such an assurance is provided by the following lemma, which shows that 
whenever an intersection node $T_i$ is a chance node --- as is $T_1$ ---
the value $t_i$ is transmitted to $Y$ by every optimal policy.


\begin{lemma}[Chance intersection node requirement] \label{le:info-transmission-3}
If in the materiality SCM, where $T_i$ is a chance node, a policy $\spi$ has
$P^\spi(\Pa(T_i^{\sp_i}) = \Pa(Y^{\sp_i}))<1$,
then $P^\spi(Y^{m_i}<1)>0$.
\end{lemma}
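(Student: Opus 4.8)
The plan is to argue the contrapositive: supposing $\spi$ achieves $P^\spi(Y^{m_i}<1)=0$ --- equivalently, since $Y^{m_i}$ is a $\{0,1\}$-valued Iverson bracket, $Y^{m_i}=1$ almost surely --- I show $P^\spi(\Pa(T_i^{\sp_i}) = \Pa(Y^{\sp_i}))=1$. Write $v := \Pa(T_i^{\sp_i})$ for the tuple of values $T_i$ receives along the paths of the bundle $\sp_i$. As $T_i$ is a chance node, \Cref{def:nro-model} makes it re-emit $v$ on each of its outgoing path-components: the first collider $W_{i,1}$ of the info path $m_i$ uses $\Pa^*_L(W_{i,1}^{m_i}) = T_i^{\sp_i} = v$ as its index (and if $m_i$ leaves $T_i$ as a chain, the chain nodes copy $v$ onward likewise), while the segment of $\sp_i$ running from $T_i$ to $Y$ carries $v$ onward --- so $\Pa(Y^{\sp_i})$ is exactly what $v$ has become at $Y$ after that segment, which passes through the downstream decisions $X_i,X_{i+1},\dots$ (recall $T_i$ lies on $d$ just upstream of $Z_i\to X_i$). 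The lemma thus asserts that those decisions must relay $v$ unchanged.

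If $m_i$ is directed, this is immediate: $m_i$ contains no decision, being active given $\condset{\{Z_i\}}$ (\Cref{le:nro-paths}), so its chain nodes copy $v$ faithfully to $Y$, giving $\pa^*(Y^{m_i})=v$; since $f_{Y^{m_i}}(\pa_Y)=\llbracket\pa(Y^{\sp_i})=\pa^*(Y^{m_i})\rrbracket$, the event $Y^{m_i}=1$ forces $\Pa(Y^{\sp_i})=v=\Pa(T_i^{\sp_i})$ always.

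If $m_i$ is not directed, its forks lock $v$ like a one-time pad. By \Cref{def:nro-model}, $W_{i,1}^{m_i}=U_{i,1}^{m_i}[v]$, $W_{i,j}^{m_i}=U_{i,j}^{m_i}[U_{i,j-1}^{m_i}]$ for $j\ge 2$, the value $U_{i,\jmax_i}^{m_i}$ is copied to $Y$ along the directed tail of $m_i$, and each auxiliary path $r_{i,j}$ relays $W_{i,j}^{m_i}$ to $Y$; hence $Y^{m_i}=1$ says precisely that the tuple $\langle\Pa(Y^{\sp_i}),W_{i,1}^{m_i},\dots,W_{i,\jmax_i}^{m_i}\rangle$ (with the caveat that an auxiliary path, too, must relay its collider value faithfully --- if it does not, the argument below applies verbatim with disagreement coordinate $\ge 1$) is compatible with $U_{i,\jmax_i}^{m_i}$ in the sense of \Cref{def:compatible}, while by construction the ``genuine'' tuple $\langle v,W_{i,1}^{m_i},\dots,W_{i,\jmax_i}^{m_i}\rangle$ is consistent with the fork sequence $\langle v,U_{i,1}^{m_i},\dots,U_{i,\jmax_i}^{m_i}\rangle$. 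Suppose, for contradiction, that $\Pa(Y^{\sp_i})\neq v$ on some event of positive probability. Fix a realization $\omega$ in this event; since the model is a finite probability space, $\{\omega\}$ has positive probability. Apply \Cref{le:encryption2} to the genuine tuple $\sw=\langle v,W_{i,1}^{m_i},\dots\rangle$ and the delivered tuple $\bar\sw=\langle\Pa(Y^{\sp_i}),W_{i,1}^{m_i},\dots\rangle$, evaluated at $\omega$ --- these differ first at coordinate $0$ --- to obtain an alternative fork assignment $u_1,\dots,u_{\jmax_i}$ with: (i) $\sw$ still consistent with $\langle v,u_1,\dots,u_{\jmax_i}\rangle$, so that replacing the forks by $u_1,\dots,u_{\jmax_i}$ leaves every collider value $W_{i,j}^{m_i}$ unchanged (induction on $j$, the index into $u_1$ being $v$, which is a function of $A$'s noise only), hence leaves $\Pa(Y^{\sp_i})$ and the entire delivered tuple unchanged; and (ii) $\bar\sw$ is incompatible with $u_{\jmax_i}$. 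Let $\omega'$ be $\omega$ with the forks of $m_i$ so replaced; then $\omega'$ is again a realization of positive probability, still satisfies $\Pa(Y^{\sp_i})\neq v$, but has $Y^{m_i}=0$ --- contradicting $Y^{m_i}=1$ a.s. Hence $\Pa(Y^{\sp_i})=v=\Pa(T_i^{\sp_i})$ a.s.

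The main obstacle is the bookkeeping behind claim (i): tracking that $W_{i,j}^{m_i}$ depends only on $U_{i,j}^{m_i}$, $U_{i,j-1}^{m_i}$ (and $v$ for $j=1$), that these collider values reach $Y$ and every downstream decision solely through the auxiliary and bundle paths, and that (by acyclicity of $\calG_\calS$) none of those paths re-enters a fork of $m_i$ --- so that the sole effect of the re-drawing is on the single tail input $U_{i,\jmax_i}^{m_i}$ fed to $Y$. One must also dispatch the edge case $T_i=A$, where $T_i$ additionally emits its own exogenous noise along $\sp_i$; this is harmless, as it only augments $v$ and leaves the target equality intact. By contrast with the companion statement for a \emph{decision} intersection node (cf.\ \Cref{fig:T1-decision}), this chance-node case is the straightforward one, since $v$ cannot be altered at $T_i$ itself and so no counting bound on the domain of $T_i$ against the number of paths through it is needed.
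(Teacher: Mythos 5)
Your proposal is correct and follows essentially the same route as the paper's proof: the directed case is handled by noting that the chain nodes of $m_i$ copy $T_i^{\sp_i}$ to $Y$ so the Iverson bracket forces the equality, and the non-directed case invokes \Cref{le:encryption2} to re-draw the forks of $m_i$ in a way that fixes every collider (hence everything outside $m_i$) while rendering the delivered tuple incompatible with $U_{i,\jmax_i}$. The only cosmetic differences are that you argue by contraposition and perform the fork replacement as a positive-probability ``realization surgery,'' where the paper phrases the same step as a $\doo(\sU_{i,1:\jmax_i}=\su'_{1:\jmax_i})$ intervention and then drops the $\doo$ using unconfoundedness and full support of the forks.
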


First, we prove the case where $m_i$ is a directed path.
In this case, $m_i$ copies the value $t^{\sp_i}$ to $Y$, 
which $Y^{m_i}$ checks against the value $\pa(y^{\sp_i})$ received via the control path.
Maximising $Y^{m_i}$ then requires them to be equal.

\begin{proof}[Proof of \Cref{le:info-transmission-3} when $m_i$ is a directed path]
We have $f_{Y^{m_i}}(\pa_{Y^{m_i}}) = \llbracket \pa(Y^{m_i})  \!=\! \pa(Y^{\sp_i})) \rrbracket$ (\Cref{def:nro-model}).
Also, $\Pa(Y^{m_i}) = T_i^{\sp_i}=\Pa(T_i^{\sp_i})$ surely, 
where the first equality follows from \Cref{def:nro-model}, while 
the second follows from \Cref{def:nro-model} and $T_i$ being a chance node.
So, if $P^\spi(\Pa(Y^{\sp_i}) \!=\! \Pa(T^{\sp_i}))\!<\!1$ then $P^\spi(Y^{m_i}\!=\!1)<1$.
\end{proof}

We now prove the case where $m_i$ is a directed path.
In this case, if the assignment $\pa(Y^{\sp_i})$ 
transmitted along the control path differs from the value $\pa(T^{\sp_i}_i)$
that came in to the intersection node $T_i$,
then just as we established for \Cref{fig:multi-collider}, there will exist an 
assignment $\su_{i,1:\jmax_i}$ to the fork nodes in $m_i$ that 
gives an unchanged assignment to colliders $\sv_{i,1:\jmax_i}$, 
but where $\pa(Y^{\sp_i})$ is incompatible with $u_{\jmax_i}$.

\begin{proof}[Proof of \Cref{le:info-transmission-3} when $m_i$ is not a directed path]
Let us index the forks and colliders of $m_i$ as $T_i \upathto V_{i,1} \pathfrom U_{i,1} \pathto W_{i,1} \pathfrom \cdots W_{i,\jmax_i} \pathfrom U_{i,\jmax_i} \pathto Y$.
Choose any assignments $\pa(T_i^{\sp_i}) \neq \pa(Y^{\sp_i})$ that occur with strictly positive probability. 
Then, there must also exist assignments
$\Pa(Y^{\sp_i,\sr_{i,1:\jmax_i}})=\pa(Y^{\sp_i,\sr_{i,1:\jmax_i}})$,
$\sU_{i,1:\jmax_i}=\su_{1:\jmax_i}$, 
and $\sW_{i,1:\jmax_i}=\sw_{1:\jmax_i}$
such that
$$P^\spi(\pa(T_i^{\sp_i}),\pa(Y^{\sp_i,r_{i,1}}),t_i^{\sp_i},\su_{1:\jmax_i},\sw_{1:\jmax_i})>0.$$
By \Cref{le:encryption2}, there also exists an assignment
$\sU_{i,1:\jmax_i}=\su'_{1:\jmax_i}$ 
such that
$\pa(T_i^{\sp_i}), \sw_{1:\jmax_i}$ is consistent with $u'_{1:\jmax_i}$,
and
$\pa(Y^\sp_i),\pa(Y^{\sr_{i,1:\jmax_i}})$
is incompatible with 
$u'_{\jmax_i}$.
Now, consider the intervention $\doo(\sU_{i,1:\jmax_i}=\su'_{1:\jmax_i})$.
Since $T_i$ is a chance node, every collider in $m_i$ is a non-decision, and is assigned the (unique) value consistent with $\pa(T_i^{\sp_i}), \su'_{1:\jmax_i}$.
Furthermore, $\pa(T_i^{\sp_i}), \sw_{1:\jmax_i}$ is consistent with $\pa(T_i^{\sp_i}), \su'_{1:\jmax_i}$,
so the intervention does not affect the assignments to these colliders.
Moreover, from \Cref{def:nro-model},
no variable outside of $m_i$ is affected by assignments within $m_i$, except through the colliders.
Therefore: \ryan{maybe revisit this proof?}
\begin{align*}
&P^\spi(\pa(Y^{\sp_i}),\pa(Y^{\sr_{i,{1:\jmax_i}}}),\Pa(Y^{m_i})=u'_{\jmax_i} \mid \doo(\sU_{i,1:\jmax_i}=\su'_{1:\jmax_i})) > 0  \\
\therefore &P^\spi(Y^{m_i}=0 \mid \doo(\sU_{i,1:\jmax_i}=\su'_{1:\jmax_i})) > 0  \\
& \hspace{60mm} (\pa(Y^\sp_i),\pa(Y^{\sr_{i,1:\jmax_i}}) \text{ not compatible with }u'_{\jmax_i})  \\
\therefore & P^\spi(Y^{m_i}=0 \mid \sU_{i,1:\jmax_i}=\su'_{1:\jmax_i}) > 0  \\ 
\intertext{\hfill($\sU_{i,1:\jmax_i}$ are unconfounded, so
$P^\spi(\sV \!\mid\! \doo(\sU_{i,1:\jmax_i}=\su'_{1:\jmax_i}))\!=\!P^\spi(\sV \!\mid\! \sU_{i,1:\jmax_i}\!=\!\su'_{1:\jmax_i})$}
\therefore & P^\spi(Y^{m_i}=0)>0 \hspace{88mm}\text{($P^\spi(\su_{i,1:\jmax_i})>0$)}. \\
\end{align*}
\end{proof}

If $m_i$ is not a directed path, 
then this requirement extends to the values $\pa(Y^{\sr_{i,1:\jmax_i}})$ passed down the auxiliary paths, 
not just the value $\pa(Y^{\sp_i})$ from the control path.
Specifically, $\pa(Y^{\sp_i}),\pa(Y^{\sr_{i,1:\jmax_i}})$ 
must be consistent with
$\pa(Y^{\sp_i}),\su_{i,{1:\jmax_i}}$, where $\su_{i,{1:\jmax_i}}$ denotes the values of forks on the info path.

\begin{restatable}[Collider path requirement]{lemma}{colliderpathrequirements} \label{le:info-transmission-2}
If the materiality SCM has an info path $m_i$ that is not directed, 
and under the policy $\spi$ there are assignments
$\Pa(Y^{\sp_i,\sr_{i,1:\jmax_i}})\!=\!\pa(Y^{\sp_i,\sr_{i,1:\jmax_i}})$
to parents of the outcome, 
and $\sU_{i,1:\jmax_i}^{m_i} \!=\! \su_{i,1:\jmax_i}^{m_i}$ to the forks of $m_i$,
with 
$P^\spi(\pa(Y^{\sp_i,\sr_{i,1:\jmax_i}}), \su_{i,1:\jmax_i}^{m_i})>0$
and where 
$\pa(Y^{\sp_i,\sr_{i,1:\jmax_i}})$ is inconsistent with $\pa(Y^{\sp_i}), \su_{i,1:\jmax_i}^{m_i}$,
then $P^\spi(Y^{m_i}<1)>0$.
\end{restatable}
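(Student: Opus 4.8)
The plan is to follow the proof of \Cref{le:info-transmission-3} in the case where $m_i$ is not directed, with \Cref{le:encryption2} doing the combinatorics and the full support and unconfoundedness of the info-path forks $\sU^{m_i}_{i,1:\jmax_i}$ converting ``there is a bad fork assignment'' into ``$P^\spi(Y^{m_i}<1)>0$''. The genuinely new difficulty, relative to \Cref{le:info-transmission-3}, is that $T_i$ need not be a chance node, so the colliders $W_{i,1},\dots,W_{i,\jmax_i}$ of $m_i$ and the chain nodes of the auxiliary paths $r_{i,1},\dots,r_{i,\jmax_i}$ may themselves be decisions under $\spi$ --- exactly the picture of \Cref{fig:T1-decision} --- and so I can no longer argue, as in \Cref{le:info-transmission-3}, that a $\doo$-intervention on the forks leaves these colliders recomputing their unique consistent values.

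First I would locate the first failure of consistency: reading position $0$ of a consistent chain as carrying $\pa(Y^{\sp_i})$, let $J'\le\jmax_i$ be the least index at which the transmitted bit $\pa(Y^{\sr_{i,J'}})$ differs from $u^{m_i}_{i,J'}[u^{m_i}_{i,J'-1}]$; such a $J'$ exists by the inconsistency hypothesis, and by minimality the transmitted sequence $\bm{\bar w}:=\pa(Y^{\sp_i,\sr_{i,1:\jmax_i}})$ agrees on positions $0,\dots,J'-1$ with the ``intended'' consistent sequence $\sw:=\langle \pa(Y^{\sp_i}),\,u^{m_i}_{i,1}[\pa(Y^{\sp_i})],\,u^{m_i}_{i,2}[u^{m_i}_{i,1}],\dots\rangle$ determined by $\pa(Y^{\sp_i})$ and the true forks, and disagrees with it at $J'$. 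I would then invoke \Cref{le:encryption2} with first argument $\sw$, second argument $\bm{\bar w}$, and provided sequence $u_0:=\pa(Y^{\sp_i})$, $u_j:=u^{m_i}_{i,j}$ for $1\le j\le J'$ (so that $u_j[u_{j-1}]=w_j$ for $1\le j<J'$, as required), obtaining forks $u_{J'+1},\dots,u_{\jmax_i}$ --- of the domain sizes of the corresponding $U^{m_i}_{i,j}$ --- for which $\sw$ is consistent with $u_{0:\jmax_i}$ while $\bm{\bar w}$ is incompatible with $u_{\jmax_i}$.

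If $J'=\jmax_i$, no new forks are produced, $u_{\jmax_i}=u^{m_i}_{i,\jmax_i}$, and on the given positive-probability event --- where the last fork equals $u^{m_i}_{i,\jmax_i}$ and the sequence reaching $Y$ is $\bm{\bar w}$ --- \Cref{def:nro-model} and \Cref{def:compatible} give $Y^{m_i}=\llbracket \bm{\bar w}\text{ compatible with }u^{m_i}_{i,\jmax_i}\rrbracket=0$, hence $P^\spi(Y^{m_i}<1)>0$. If $J'<\jmax_i$, I would instead pass to the intervention $\doo(\sU^{m_i}_{i,J'+1:\jmax_i}=u_{J'+1:\jmax_i})$ while leaving $\sU^{m_i}_{i,1:J'}$ at its given value; since $m_i$ does not feed back into the paths $\sp_i$ and each collider $W_{i,j}$ has, among its contexts, only $U^{m_i}_{i,j-1}$ and $U^{m_i}_{i,j}$ from $m_i$, the prefix $\pa(Y^{\sp_i,\sr_{i,1:J'}})$ is unchanged by this intervention. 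I would then read off $Y^{m_i}=0$ under $\doo$ and transfer it to $P^\spi$ using the unconfoundedness of $\sU^{m_i}_{i,\cdot}$ (so the $\doo$ becomes ordinary conditioning) together with the fact that the upstream part of the original event retains positive $P^\spi$-mass, exactly as in \Cref{le:info-transmission-3}.

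The hard part will be this last case, because after the intervention the \emph{suffix} of the sequence reaching $Y$ --- the bits $\pa(Y^{\sr_{i,j}})$, $j>J'$ --- is not under our control: a decision collider or auxiliary chain node may react to the perturbed forks arbitrarily, so the sequence arriving at $Y$ is $\langle \bar w_0,\dots,\bar w_{J'},\zeta_{J'+1},\dots,\zeta_{\jmax_i}\rangle$ with unknown $\zeta$'s, whereas \Cref{le:encryption2} certifies incompatibility only for the single sequence $\bm{\bar w}$ with its original suffix. My intended resolution is to observe --- strengthening, if necessary, the statement or appendix proof of \Cref{le:encryption2} --- that the incompatibility it delivers is already ``witnessed by the prefix'': because $|u_{J'+1}|,\dots,|u_{\jmax_i}|$ grow as a tower of exponentials while each collider value is a single bit, $u_{\jmax_i}$ can be chosen so that no sequence agreeing with $\sw$ on $0,\dots,J'-1$ and disagreeing at $J'$ has any consistent completion ending at $u_{\jmax_i}$, whatever its later entries. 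Granting this, every sequence with prefix $\langle \bar w_0,\dots,\bar w_{J'}\rangle$ --- in particular the post-intervention one, for every $\bm\zeta$ --- is incompatible with $u_{\jmax_i}$, closing the argument; failing such a strengthening, I would fall back on reducing $J'<\jmax_i$ to $J'=\jmax_i$ by absorbing the downstream decision colliders and their auxiliary paths into the environment one fork at a time, which I expect to be the bulk of the book-keeping.
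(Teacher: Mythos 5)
You follow the paper's route up to and including the decisive step: locate the first index $J'$ at which consistency fails, apply \Cref{le:encryption2} to obtain replacement forks $\bar{\su}$ agreeing with $\su^{m_i}_{i,1:J'}$ on the prefix, intervene with $\doo(\sU^{m_i}_{i,J'+1:\jmax_i}=\bar{\su}_{J'+1:\jmax_i})$, and transfer the failure back to $P^\spi$ by unconfoundedness and full support of the forks. The gap is in what you call the hard part, and it is a genuine gap because your proposed resolution cannot work, while the needed idea is already in your hands. The paper does not need the post-intervention suffix to be ``under control'': it uses the \emph{first} conclusion of \Cref{le:encryption2} --- that $\sw$ (the control-path value together with the actual collider values) is consistent with the replacement chain $\bar{\su}$, i.e.\ $\bar{u}_j[\bar{u}_{j-1}]=w_j$ for every $j$ --- to conclude that under the intervention every collider $W_{i,j}$ recomputes to exactly its original value. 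Since, by \Cref{def:nro-model}, the forks of $m_i$ can influence variables outside $m_i$ only through $T_i$ and the colliders, and these are all unchanged, nothing outside $m_i$ changes at all: in particular every bit $\pa(Y^{\sr_{i,j}})$, including those with $j>J'$, is identical to its original value, so the sequence reaching $Y$ is still $\bm{\bar{w}}$, which \Cref{le:encryption2} has already certified incompatible with $\bar{u}_{\jmax_i}$. You state this consistency conclusion when you invoke the lemma but never use it; it is exactly the reason the lemma delivers the two conclusions simultaneously.

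Your fallback --- strengthening \Cref{le:encryption2} so that incompatibility is witnessed by the prefix alone --- is provably impossible when $J'<\jmax_i$: compatibility with $u_{\jmax_i}$ existentially quantifies the intermediate forks, and the prefix constraints leave $u_{J'+1},\dots,u_{\jmax_i-1}$ essentially unconstrained, so for any fixed $u_{\jmax_i}$ an adversary may pick a chain ending at any desired $u_{\jmax_i-1}$ and then set the suffix bits $\zeta_j:=u_j[u_{j-1}]$, producing a compatible sequence with the given prefix. The one legitimate worry buried in your objection is that a collider $W_{i,j}$ may itself be a decision, in which case ``recomputes to the same value'' needs its contexts --- which include the intervened forks --- to be unchanged; the paper's proof treats this lightly (it flags only that $T_i$ may be a decision). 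But that is a subtlety in the paper's own argument, not one that your prefix-strengthening or the sketched ``absorb the downstream colliders'' reduction addresses; as written, your proof does not close the $J'<\jmax_i$ case.
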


The idea of the proof, similar to  \Cref{le:info-transmission-3}, is that whenever the bits transmitted along the auxiliary paths deviate 
from the values $\sw_{i,{1:\jmax_i}}$ of colliders in $m_i$, there exists 
an assignment $\su'_{i,{1:\jmax_i}}$ to forks in $m_i$ that will render the colliders, 
and hence the decision $x_i$ unchanged, while making $x_i$ incompatible with $u_{\jmax_i}$, 
and thereby producing $Y^{m_i}<0$.
A detailed proof is in \Cref{app:collider-path-reqs}.

In order to prove that the context $Z_0$ is needed, we will also need to establish that it is not deterministic, 
even if it is a decision.
In the case where $Z_0$ is a decision, the idea is that random information is generated at $A$, 
which each of the decisions are required to pass along the control path.
We are able to prove this as a corollary of \Cref{le:info-transmission-3}.

\begin{lemma}[Initial truncated info path requirements] \label{le:info-transmission-2b}
If $\spi$ in the materiality SCM does not satisfy:
$P^\spi(\Pa(Y^d) = A^d)<1$.
then the MEU is not achieved.
\end{lemma}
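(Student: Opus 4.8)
The plan is to obtain this lemma as an immediate corollary of \Cref{le:info-transmission-3}, applied at the \emph{first} intersection node of the control path. Recall from \Cref{le:nro-paths} that $d$ begins $A \pathto Z_{\imin} \to X_{\imin} \pathto \cdots$ with $A \pathto Z_{\imin}$ of length $0$ (when $\imin=0$ and $A=Z_0$ is a non-decision) or length $1$ (when $\imin=-1$, $Z_0=X_{-1}$, and $A$ is the non-decision parent-and-context of $Z_0$ provided by \Cref{le:decision-backdoor-revised}); in either case $Z_{\imin}=A$, the source of $d$. Since $T_{\imin}$ is defined as the vertex of $m'_{\imin}$ nearest $Y$ whose initial segment $Z_{\imin}\overset{m'_{\imin}}{\upathto}T_{\imin}$ equals the segment $Z_{\imin}\overset{d}{\pathfrom}T_{\imin}$ of $d$, and nothing precedes $Z_{\imin}=A$ on $d$, this common segment is the lone vertex $A$; hence $T_{\imin}=A$, which is a chance node, so \Cref{le:info-transmission-3} is available with $i=\imin$.

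Next I would read off what that lemma says in this case. The control path $d$ lies in the bundle $\sp_{\imin}$ of control and auxiliary paths through $T_{\imin}=A$, and because $T_{\imin}=A$ is the source of $d$, the $d$-component of $\Pa(T_{\imin}^{\sp_{\imin}})$ is just the source value $A^d=\Eps^{A^d}$, while the $d$-component of $\Pa(Y^{\sp_{\imin}})$ is $\Pa(Y^d)$. So whenever $P^\spi(A^d=\Pa(Y^d))<1$, the tuples $\Pa(T_{\imin}^{\sp_{\imin}})$ and $\Pa(Y^{\sp_{\imin}})$ disagree with positive probability, and \Cref{le:info-transmission-3} then gives $P^\spi(Y^{m_{\imin}}<1)>0$. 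By \Cref{le:cannot-lapse} the latter forces $\EE^\spi[Y]<\idiff$, which is beaten by the non-intervened policy (\Cref{le:nro-yes-edge}), so $\spi$ does not achieve the MEU --- precisely the claim. Equivalently, every MEU-attaining policy transmits $A^d$ faithfully down $d$ to $Y$, which is exactly what is needed later to guarantee that even a decision-valued $Z_0$ carries genuinely random information.

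The closest thing to an obstacle here is the degenerate identification $T_{\imin}=A$: since the source has no endogenous parent along $d$, one must check that \Cref{le:info-transmission-3}, whose proof writes $T_i^{\sp_i}=\Pa(T_i^{\sp_i})$ for a chance node $T_i$, still applies. I would dispatch this by observing that $A^d=\Eps^{A^d}$ is exogenous and plays the role of ``$\Pa(A^{\sp_{\imin}})$'' verbatim in that proof --- in the directed case one has $Y^{m_{\imin}}=\llbracket A^d=\Pa(Y^d)\rrbracket$ outright, and in the non-directed case the \Cref{le:encryption2}-based compatibility argument runs unchanged with the value entering $T_{\imin}=A$ taken to be $A^d$. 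Beyond this book-keeping nothing substantive is required; the lemma really is a one-line corollary of \Cref{le:info-transmission-3}.
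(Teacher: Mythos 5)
Your proposal is correct and follows essentially the same route as the paper's proof: both reduce the claim to \Cref{le:info-transmission-3} applied at the first intersection node $T_{\imin}$, observe that $T_{\imin}$ is a chance node so that $T_{\imin}^{d} \aseq \Pa(Y^{d})$ is forced for any MEU-attaining policy, and then propagate $A^{d}=T_{\imin}^{d}$ along the all-chance-node prefix of $d$ before concluding via \Cref{le:cannot-lapse}. The only cosmetic difference is that you pin down $T_{\imin}=A$ exactly (and flag the source's exogenous-parent book-keeping), whereas the paper argues slightly more generally that $T_{\imin}$ is some chance-node ancestor of $Z_{\imin}$ on $d$ and copies the value across the segment $A\pathto T_{\imin}$; the substance is identical.
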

\begin{proof}
From \Cref{le:nro-paths}, the control path $d$ begins with a chance node.
So, the first decision $X_\imin$ in $d$ must have a chance node $Z_\imin$ as its parent along $d$.
Furthermore, the intersection node $T_\imin$ must be an ancestor of $Z_\imin$ along $d$, so it is also 
a chance node. 
So it follows from \Cref{le:info-transmission-3}, that any policy $\spi$ must satsify
$P^\spi(T_\imin^{\sp_\imin} = \Pa(Y^{\sp_\imin}))=1$ if it attains the MEU.
As $T_\imin$ is in the control path, we have $d \in \sp_\imin$ (\Cref{le:nro-paths})
so $T_\imin^d \aseq \Pa(Y^d)$ is also required.
Moreover,
all of vertices in the segment $A \pathto T_\imin$ of $d$ are chance nodes, because $X_\imin$ was defined as the first decision in $d$, 
and $T_{\imin}$ precedes it.
And, each chance variable $V^d$ on the control path equals its parent $\Pa(V^d)$ (\Cref{def:nro-model}),
so $A^d = T_\imin^d$, and thus $A^d \aseq \Pa(Y^d)$ is required to attain the MEU.
\end{proof}

We can now combine our previous results to prove that it is impossible to achieve the MEU, if $Z_0$ is not a context of $X_0$, 
in the case where $T_1$ does not exist, or is a non-decision.

\begin{restatable}[Required properties unachievable if child is a non-decision]{lemma}{impossiblenondecision} \label{le:impossible-non-decision}
Let $\calM$ be a materiality SCM 
where the child of $X_0$ along $d$ is a \emph{non-decision}.
Then, the MEU for the scope $\calS$ cannot be achieved 
by a deterministic policy 
in the scope $\calS_{Z_0 \not \to X_0}$
(equal to $\calS$, except that $Z_0$ is removed from $\sC_{X_0}$).
\end{restatable}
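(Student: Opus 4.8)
The plan is a proof by contradiction. Suppose some deterministic policy $\spi$ following $\calS_{Z_0 \not \to X_0}$ attains the MEU. By \Cref{le:nro-yes-edge} the MEU equals $\idiff$, so \Cref{le:cannot-lapse} forces $P^\spi(Y^{m_i}=1)=1$ for every $i$; I will focus on $i=0$ and derive $P^\spi(Y^{m_0}<1)>0$, the desired contradiction.

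The first step is to route the value emitted by $X_0$ down the control path to $Y$. Because the child of $X_0$ along $d$ is a non-decision, the stretch of $d$ from $X_0$ up to the context $Z_1$ of the next decision on $d$ consists only of chance nodes, each of which copies its $d$-parent (\Cref{def:nro-model}); thus $X_0$'s emission travels unchanged along $d$ until it reaches $Y$ itself (if $X_0$ is the last decision on $d$) or the next intersection node $T_1$. Following the split in \Cref{fig:T1-type}: when $T_1$ is a chance node, \Cref{le:info-transmission-3} forces $\Pa(T_1^{\sp_1}) \aseq \Pa(Y^{\sp_1})$, whose control-path component says that $X_0$'s emission arrives unmodified at $Y$ inside $\Pa(Y^{\sp_0})$; the remaining sub-case, in which $T_1$ is a decision and $X_0$ has a context lying on $m_1$ (\Cref{fig:T1-decision}), is treated by the same argument after partitioning $\sC_{X_0}$ into its part on $m_1$ and its part off it. \Cref{le:info-transmission-2b} supplies the analogous control of the initial segment $A \pathto T_{\imin}$.

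With $X_0$'s emission traced to $Y$, the outcome functions of \Cref{def:nro-model} together with the transmission lemmas pin down what $Y^{m_0}=1$ a.s.\ demands of it. If $m_0$ is directed, $Y^{m_0}=1$ a.s.\ is equivalent to $X_0$'s emission equalling the value that enters $T_0$ along $d$, which --- the segment $A \pathto T_0$ of $d$ being all chance --- is $A$ copied forward. If $m_0$ is not directed, \Cref{le:info-transmission-2} (with \Cref{le:info-transmission-3}) and \Cref{le:encryption2} show that $Y^{m_0}=1$ a.s.\ forces the tuple $\langle \Pa(Y^{\sp_0}), W_{0,1}, \ldots, W_{0,\jmax_0} \rangle$ reaching $Y$ to be \emph{consistent} (\Cref{def:compatible}) with the fork values $\langle A, U_{0,1}, \ldots, U_{0,\jmax_0} \rangle$ of $m_0$, and the first, $\bool^{k}$-valued coordinate of that consistency condition is exactly ``$X_0$'s emission $= A$''. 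In either case, since $\spi$ is deterministic and $Z_0 \notin \sC_{X_0}$, the emission of $X_0$ is a fixed function of $\sC_{X_0} \setminus \{Z_0\}$; and because $d$ contains no parent of $X_0$ other than $Z_0$ (\Cref{le:nro-paths}) while the fresh randomness $\Eps_A$ of the source $A$ reaches $X_0$'s other contexts only through channels that, by the choice of $k$ in \Cref{def:nro-model}, are collectively too narrow to determine it, this emission cannot equal $A \sim \unif(\bool^{k})$ with probability one. Equivalently, in the non-directed case \Cref{le:encryption2} applied with $J'=0$ produces, on a positive-probability event, an alternative assignment to $A$ and to the forks $U_{0,1:\jmax_0}$ that leaves every observation of $X_0$ and every collider of $m_0$ unchanged yet makes the emission incompatible with $U_{0,\jmax_0}$. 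Hence $P^\spi(Y^{m_0}<1)>0$, contradicting \Cref{le:cannot-lapse}.

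The step I expect to be the main obstacle is the relay argument of the second paragraph: showing rigorously that whatever $X_0$ emits along $d$ arrives at $Y$ as the relevant component of $\Pa(Y^{\sp_0})$. This requires verifying that every ordinary node of $d$ downstream of $X_0$, and the intersection node $T_1$ in the ``chance'' sub-case, is genuinely a chance node here (so that \Cref{le:info-transmission-3} applies and the chain nodes copy their parents), carefully distinguishing the single control-path component $Y^d$ from the bundled value $Y^{\sp_0}$, and --- the delicate point --- ruling out enough leakage of $A$'s (or $U_{0,1}$'s) randomness into the other contexts of $X_0$, which is precisely where the ``$d$ contains no parent of $X_0$ other than $Z_0$'' clause of \Cref{le:nro-paths} and the deliberately large choice of $k$ in \Cref{def:nro-model} do the essential work.
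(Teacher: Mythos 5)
Your proposal is correct and follows essentially the same route as the paper's proof: use \Cref{le:info-transmission-2b} and \Cref{le:info-transmission-3} (with the chance-node relay through $T_1$, which exists precisely because the child of $X_0$ along $d$ is a non-decision) to force $X_0^d \aseq A^d$, then observe that a deterministic function of $\sC_{X_0}\setminus\{Z_0\}$ cannot have $2^k$ values in its support because the remaining contexts meet the materiality paths only in $\bool$-valued coordinates, giving a joint domain smaller than $2^k$ by the choice of $k$. The only inessential differences are that your ``$T_1$ is a decision'' sub-case is vacuous under the hypothesis of this lemma (that case belongs to \Cref{le:impossible-decision-bc-cannot-distinguish}), and that the paper makes your ``collectively too narrow'' step explicit as the bound $\lvert\dom{\sC_{X_0}\setminus\{Z_0\}}\rvert \leq 2^{c\cdot\lvert\sC_{X_0}\rvert} < 2^k$.
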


The logic is that if child of $X_0$ in the control path is a non-decision,
then the value of $X_0$ is copied all the way to $\Pa(Y^d)$ (\Cref{le:info-transmission-2b}).
Furthermore, $Z_0^d \aseq \Pa(Y^d)$ is necessary to achieve the MEU (\Cref{le:info-transmission-3}).
But the materiality SCM has been constructed so that the non-$Z_0$ parents of $X_0$ do not contain enough bits to transmit all of the information about $Z_0^d$, 
so the MEU cannot be achieved.
The proof is detailed in \Cref{app:proof-impossible-no-decision}.

\subsubsection{Case 2: child of $X_0$ along $d$ is a decision.} \label{sec:decision-next}
If the child of $X_0$ along $d$ is a decision, as shown in \Cref{fig:T1-decision}, 
we will prove that the decision $X_0$ must depend on $Z_0$ in order to achieve $\EE[Y_1]=1$.
This will be because without $Z_0$, $X_0$ will be limited in its ability to distinguish all of the possible 
values of the first fork node $U_{i,1}$ of $m_1$.
To establish this, we will need to conceive of a possible intervention to the fork nodes in $m_i$, 
that $X_i$ would have to respond to, and so 
we begin by proving that relatively few variables will be causally affected by certain interventions.

\begin{restatable}[Fork information can pass in few ways]{lemma}{limitedeffect} \label{le:limited-effect-from-forks}
If, in the materiality SCM: 

\begin{itemize}
\item the intersection node $T_i$ is the vertex $X_{i-1}$,
\item $\pi_{T_i}$ is a deterministic decision rule 
where
$\pi_{T_i}(\sc^{\neg m_i}(T_i,u_{i,1})=\pi_{T_i}(\sc^{\neg m_i}(T_i,u'_{i,1}))$
for assignments 
$u_{i,1},u'_{i,1}$ to the first fork variable, 
and $\sc^{\neg m_i}(T_i)$ to the contexts of $T_i$ not on $m_i$, and
\item $\sW_{i,1:\jmax_i}=\sw_{i,1:\jmax_i}$, and $\sU_{i,2:\jmax_i}=\su_{i,2:\jmax_i}$ are assignments
to forks and colliders in $m_i$ where each $u_{i,j}$ consists of just $w_{i,j}$ repeated $\exp_2^j(k+\lvert \sp_i\rvert-1)$ times, then:
\end{itemize}
\ifthesis 
\begin{align*}& P^\spi(\pa(Y^{\sp_i,r_{i,1}}),\sc^{\neg m_i}(T_i),\sw_{i,1:\jmax_i}, \su_{i,2:\jmax_i} \mid \!\doo(u_{i,1})) \\
=& P^\spi(\pa(Y^{\sp_i,r_{i,1}}),\sc^{\neg m_i}(T_i),\sw_{i,1:\jmax_i}, \su_{i,2:\jmax_i} \mid \!\doo(u'_{i,1})).
\end{align*}
\else
$$P^\spi(\pa(Y^{\sp_i,r_{i,1}}),\sc^{\neg m_i}(T_i),\sw_{i,1:\jmax_i}, \su_{i,2:\jmax_i} \mid \!\doo(u_{i,1}))\!=\!
P^\spi(\pa(Y^{\sp_i,r_{i,1}}),\sc^{\neg m_i}(T_i),\sw_{i,1:\jmax_i}, \su_{i,2:\jmax_i} \mid \!\doo(u'_{i,1})).$$
\fi
\end{restatable}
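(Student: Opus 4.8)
The plan is to prove this by a coupling argument. I would run the two submodels $\calM_{\doo(U_{i,1}=u_{i,1})}$ and $\calM_{\doo(U_{i,1}=u'_{i,1})}$ on a single shared draw of all exogenous variables (and of the behavioural randomness in $\spi$), so that every endogenous variable becomes a deterministic function of that draw, and then show that, on the exogenous event that forces the forks $\sU_{i,2:\jmax_i}$ to the degenerate value $\su_{i,2:\jmax_i}$, the tuple $(\Pa(Y^{\sp_i,r_{i,1}}),\sc^{\neg m_i}(T_i),\sW_{i,1:\jmax_i},\sU_{i,2:\jmax_i})$ is computed to the same value in both submodels. Because that degenerate event is determined by the exogenous seeds of $\sU_{i,2:\jmax_i}$ alone, which $\doo(U_{i,1}=\cdot)$ does not touch, summing the shared probability over the fibre of any fixed value of the tuple then yields the asserted equality $P^\spi(\cdots\mid\doo(u_{i,1}))=P^\spi(\cdots\mid\doo(u'_{i,1}))$.

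The first substantive step is to pin down how the value of $U_{i,1}$ can propagate through the materiality SCM. Since $T_i=X_{i-1}$ is a decision, \Cref{le:nro-paths} forces $m_i$ to leave $T_i$ by an inbound edge, so $T_i$ itself plays the role of the first collider $W_{i,1}$ and $m_i$ has the shape $T_i\pathfrom U_{i,1}\pathto W_{i,2}\pathfrom U_{i,2}\pathto\cdots\pathto Y$ with $U_{i,1}$ a fork. Reading off \Cref{def:nro-model}, $U_{i,1}$ is an exogenously seeded source all of whose out-edges lie on $m_i$; tracing them, its value is chain-copied in exactly two directions: toward $T_i$, where the last chain node on this stretch is the unique context $C^{m_i}_{T_i}$ of $T_i$ lying on $m_i$; and toward $W_{i,2}$, where it enters as the index of the assignment $W_{i,2}=U_{i,2}[U_{i,1}]$ (suppressing path-components and chain copies). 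Thus the influence of $U_{i,1}$ reaches only the interior chain nodes of $m_i$ on those two stretches, together with the colliders $W_{i,j}$ for $j\ge2$, their auxiliary paths $r_{i,2},\dots,r_{i,\jmax_i}$, and $Y$.

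Next I would check that both exits of $U_{i,1}$ are neutralised. For the exit at $W_{i,2}$: on the degenerate event $U_{i,2}=u_{i,2}$ is the constant bitstring with every bit equal to $w_{i,2}$, so $W_{i,2}=u_{i,2}[U_{i,1}]=w_{i,2}$ whether $U_{i,1}$ is $u_{i,1}$ or $u'_{i,1}$; and for $j\ge3$ the collider $W_{i,j}=U_{i,j}[U_{i,j-1}]$ does not mention $U_{i,1}$ at all and is likewise pinned to $w_{i,j}$ by the degeneracy of $U_{i,j}$, so the auxiliary paths $r_{i,2},\dots,r_{i,\jmax_i}$ and every other descendant of $W_{i,2}$ carry identical values in the two submodels. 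For the exit at $C^{m_i}_{T_i}$: the $m_i$-component of $C^{m_i}_{T_i}$ is a pure chain copy of $U_{i,1}$, so $\pi_{T_i}$ receives the $m_i$-context value $u_{i,1}$ in one submodel and $u'_{i,1}$ in the other, together with the same value of the remaining contexts $\sc^{\neg m_i}(T_i)$ (same because none of those contexts is a descendant of $U_{i,1}$, by the reach computed above); the equalising hypothesis on $\pi_{T_i}$ then makes $T_i$, hence $W_{i,1}=T_i$, take the same value in both submodels, and therefore so does everything downstream of $T_i$ --- in particular the paths in $\sp_i$ onward to $Y$ (giving $\Pa(Y^{\sp_i})$) and the auxiliary path $r_{i,1}$ (giving $\Pa(Y^{r_{i,1}})$). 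Consequently the only endogenous variables that can differ between the two coupled runs on the degenerate event are $U_{i,1}$ and the interior chain nodes of $m_i$ strictly between $U_{i,1}$ and its two exits; none of these occurs in the target tuple, so the tuple is an identical function of the shared draw in both submodels, and the equality of joint probabilities follows as in the first paragraph.

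The step I expect to be the main obstacle is the bookkeeping hidden in the phrases ``none of those contexts is a descendant of $U_{i,1}$'' and ``none of these occurs in the target tuple'' when materiality paths overlap: one must rule out that an interior node of $m_i$ is simultaneously a non-$m_i$ context of $T_i$, or lies on $\sp_i$ or on $r_{i,1}$, via some other materiality path, which is where the precise structure of the path selection in \Cref{le:nro-paths} and of the component-wise construction in \Cref{def:nro-model} has to be used. One should also separately dispose of the edge case $\jmax_i=1$, in which $U_{i,1}$ additionally feeds $Y$ directly along $m_i$; but that parent of $Y$ is $\Pa(Y^{m_i})$, which is not among $\Pa(Y^{\sp_i,r_{i,1}})$, so the target tuple is still unaffected. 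Everything else is routine: it is the same ``trace the values along the construction'' computation already used for \Cref{le:nro-yes-edge} and \Cref{le:info-transmission-3}.
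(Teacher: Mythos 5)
Your proposal is correct and takes essentially the same approach as the paper: the paper's proof likewise enumerates the possible routes by which $\doo(u_{i,1})$ could influence the target tuple (via $T_i$, via $W_{i,2}$, via contexts on the segment $T_i \pathfrom U_{i,1} \pathto W_{i,2}$, or via target variables lying on that segment) and neutralises the first two exactly as you do, using the policy-indifference hypothesis and the constancy of the degenerate $u_{i,2}$, while the remaining routes are excluded by activeness of $m_i$ given $\condset{T_i}$ and the component-wise construction of \Cref{def:nro-model}. Your coupling-on-a-shared-exogenous-draw framing is just a more explicit formalisation of the paper's ``the intervention could only affect these variables in four ways'' argument, and your flagged bookkeeping concerns are precisely the points the paper's proof discharges in its cases (3) and (4).
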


The proof follows from the definition of the materiality SCM, 
and it is detailed in \Cref{app:limited-effect-from-forks}.

We can now prove that if a deterministic policy 
does not appropriately distinguish assignments to $U_{i,1}$, 
then the $i$\textsuperscript{th} component of the utility will be suboptimal $\EE[Y^{m_i}]<1$.

\begin{restatable}[Decision must distinguish fork values]{lemma}{infotransmissionfive} \label{le:info-transmission-5}
If in the materiality SCM:
\begin{equation}
  \begin{aligned}
&\bullet\hspace{3mm} \text{the intersection node $T_i$ is the vertex $X_{i-1}$, and} \\
&\bullet\hspace{3mm} \text{$\spi$ is a deterministic policy 
that for assignments $u_{i,1},u'_{i,1}$ to $U_{i,1}$ where $u_{i,1} \!\neq\! u'_{i,1}$}, \\
&\hspace{6.5mm}
\text{has $\pi_{T_i}(\sc^{\neg m_i}(T_i),u_{i,1})\!=\!\pi_{T_i}(\sc^{\neg m_i}(T_i),u'_{i,1})$ 
for every $\sC_{T_i}^{\neg m_i}(T_i)\!=\!\sc^{\neg m_i}(T_i),$}
\end{aligned}\tag{$\dagger$} \label{eq:5} \end{equation}
then $P^\spi(Y^{m_i}<1)>0$
\end{restatable}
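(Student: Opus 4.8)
The plan is to exhibit a positive‑probability event on which $Y^{m_i}=0$, obtained by ``confusing'' the fork $U_{i,1}$ between the two values $u_{i,1}\neq u'_{i,1}$ that $\pi_{T_i}$ conflates. Recall the shape of this case: $T_i=X_{i-1}$ is a decision and $U_{i,1}$ is one of its contexts lying on $m_i$, so by \Cref{def:nro-model} the value of $U_{i,1}$ reaches the parents of $Y$ that $f_{Y^{m_i}}$ inspects only (i)~through the action chosen at $T_i$, which is then copied on to $Y$ along $r_{i,1}$, and (ii)~through the collider of $m_i$ immediately downstream of $U_{i,1}$. By \Cref{le:info-transmission-2}, for $Y^{m_i}=1$ to hold almost surely the tuple $\pa(Y^{\sp_i,\sr_{i,1:\jmax_i}})$ must be \emph{consistent} with $\langle\pa(Y^{\sp_i}),\su^{m_i}_{i,1:\jmax_i}\rangle$ on every positive‑probability run; in particular the bit delivered by $r_{i,1}$ --- namely the action at $T_i$ --- must equal the $\pa(Y^{\sp_i})$‑indexed bit of $U_{i,1}$.

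First I would freeze the remaining forks, applying $\doo(\sU_{i,2:\jmax_i}=\su_{i,2:\jmax_i})$ with each $u_{i,j}$ the constant bitstring of \Cref{le:limited-effect-from-forks}; this makes every collider of $m_i$ below $U_{i,1}$ constant (neutralising route (ii)) and puts us in the hypothesis of \Cref{le:limited-effect-from-forks}, which together with $(\dagger)$ gives that the joint law of $\bigl(\pa(Y^{\sp_i,\sr_{i,1:\jmax_i}}),\sc^{\neg m_i}(T_i),\sw_{i,1:\jmax_i},\su_{i,2:\jmax_i}\bigr)$ is the same under $\doo(U_{i,1}=u_{i,1})$ as under $\doo(U_{i,1}=u'_{i,1})$. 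Next, since $u_{i,1}\neq u'_{i,1}$ they disagree at some index $c^\ast$ of $\bool^{k+\lvert\sp_i\rvert-1}$; and because $k$ was chosen in \Cref{def:nro-model} large enough that the signal arriving at $Y$ along $\sp_i$ carries exactly $k+\lvert\sp_i\rvert-1$ bits and ranges over this whole index alphabet, I can realise a control value $c^\ast$ at $Y$ with positive probability --- intervening, if need be, on the source $A$ and the upstream fork sources feeding $\sp_i$, noting such interventions leave $(\dagger)$ intact. Leaving $U_{i,1}$ itself random, both $\{U_{i,1}=u_{i,1}\}$ and $\{U_{i,1}=u'_{i,1}\}$ keep positive probability.

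Now suppose, for contradiction, that $Y^{m_i}=1$ almost surely under this (partial) intervention. Conditioning on the positive‑probability event that $U_{i,1}=u_{i,1}$, that the control value at $Y$ is $c^\ast$, and that the remaining contexts of $T_i$ take some fixed value $\sc$, \Cref{le:info-transmission-2} forces the bit carried along $r_{i,1}$ --- which is the action $\pi_{T_i}(\sc,u_{i,1})$ --- to equal $u_{i,1}[c^\ast]$; the analogous event with $u'_{i,1}$ in place of $u_{i,1}$ (which, by \Cref{le:limited-effect-from-forks}, induces the same distribution on everything $f_{Y^{m_i}}$ depends on) forces that same action‑bit, equal to $\pi_{T_i}(\sc,u'_{i,1})$, to equal $u'_{i,1}[c^\ast]$. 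But $(\dagger)$ gives $\pi_{T_i}(\sc,u_{i,1})=\pi_{T_i}(\sc,u'_{i,1})$ while $u_{i,1}[c^\ast]\neq u'_{i,1}[c^\ast]$ --- a contradiction. Hence $P^{\spi}(Y^{m_i}<1\mid\doo(\cdot))>0$, and since every exogenous variable of the materiality SCM has full support, $P^{\spi}(Y^{m_i}<1)>0$.

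The main obstacle I anticipate is the middle step: one must verify that the target index $c^\ast$ really is a positive‑probability value of the control signal as it arrives at $Y$ --- delicate because that signal passes through the decision $T_i$, whose behaviour is only partly constrained, so one genuinely needs the inflation of $k$ in \Cref{def:nro-model} and an argument paralleling the counting behind \Cref{le:impossible-non-decision}. A second fiddly point is checking that freezing $\sU_{i,2:\jmax_i}$ really leaves the values transported along $\sp_i$ and $\sr_{i,1:\jmax_i}$ untouched, so that the only remaining degree of freedom in $f_{Y^{m_i}}$ is the action reported at $T_i$; this, together with the handling of overlapping materiality paths and the transfer from the intervened to the observational distribution in the presence of other, possibly suboptimal, decisions, is routine but tedious and I would relegate it to the appendix.
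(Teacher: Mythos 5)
Your proposal is correct and follows essentially the same route as the paper's proof: fix the index $c^\ast$ at which $u_{i,1}$ and $u'_{i,1}$ disagree, freeze the remaining forks at the constant bitstrings of \Cref{le:limited-effect-from-forks}, use that lemma together with $(\dagger)$ to transfer the joint law of everything $f_{Y^{m_i}}$ inspects from $\doo(u_{i,1})$ to $\doo(u'_{i,1})$, and then invoke the consistency requirement of \Cref{le:info-transmission-2} to contradict $\pi_{T_i}(\sc,u_{i,1})=\pi_{T_i}(\sc,u'_{i,1})$. The one obstacle you flag --- realising the control value $c^\ast$ at $Y$ --- is handled in the paper not by a counting argument in the style of \Cref{le:impossible-non-decision} nor by intervening on $A$, but simply by conditioning on the positive-probability event $A^d=\pa(Y^d)$ (full support of the exogenous source) and invoking the already-established transmission requirement $\Pa(Y^d)\aseq A^d$ of \Cref{le:info-transmission-2b}, which an MEU-achieving policy must satisfy.
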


The idea of the proof is that if $u_{i,1}$ and $u'_{i,1}$ differ,
there will be some assignment $\pa(Y^{\sp_i})$
such that $u_{i,1}[\pa(Y^{\sp_i})]$ and $u'_{i,1}[\pa(Y^{\sp_i})]$ differ.
When $\Pa(Y^{\sp_i})=\pa(Y^{\sp_i})$ and $u_{i,1}$, 
then $\Pa(Y^{r_{i,1}})$ to assume one value.
But if we intervene $u'_{i,1},u_{i,2:\jmax_i}$, 
then the value of $\Pa(Y^{r_{i,1}})$ will be incorrect, 
making $\Pa(Y^{\sp_i,\sr_{i,1:\jmax_i}})$ inconsistent with 
$\Pa(Y^{\sp_i},U_{i,1:\jmax_i})$ so the maximum expected utility is impossible to achieve.
The details are deferred to \Cref{app:info-transmission-5}.

This will allow us to prove that when the child of $X_0$ along $d$ is a decision,
the MEU cannot be achieved without $Z_0$ as a context of $X_0$.

\begin{restatable}[Required properties unachievable if child is a decision]{lemma}{unachievabledecision} \label{le:impossible-decision-bc-cannot-distinguish}
Let $\calM$ be the materiality SCM for some scoped graph $\calG_\calS$, 
where $\imax > 0$ and $T_1$ is a decision.
Then, there exists no deterministic policy in the scope $\calS_{Z_0 \not \to X_0}$
that achieves the MEU.
\end{restatable}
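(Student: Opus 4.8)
The plan is to fix an arbitrary deterministic policy $\spi$ following the scope $\calS_{Z_0\not\to X_0}$ and show that it cannot attain the MEU. By \Cref{le:cannot-lapse} it suffices to exhibit a single index $i$ with $P^\spi(Y^{m_i}<1)>0$, and I would take $i=1$, i.e.\ work with the info path $m_1$ of the context $Z_1$ immediately downstream of $X_0$ on the control path $d$. The hypothesis that $T_1$ is a decision puts us in exactly the regime of \Cref{le:info-transmission-5}: it forces $T_1$ to be the vertex $X_0$, so that lemma becomes applicable with $i=1$ as soon as its trigger condition \eqref{eq:5} is verified, namely that the (deterministic) decision rule $\pi_{X_0}$ --- which in $\calS_{Z_0\not\to X_0}$ is a function of $\sC_{X_0}\setminus\{Z_0\}$ only --- sends some pair of distinct values $u_{1,1}\neq u'_{1,1}$ of the first fork $U_{1,1}$ of $m_1$ to the same action, uniformly over every assignment to the remaining contexts $\sC_{X_0}^{\neg m_1}$ of $X_0$. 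Thus the whole statement reduces to establishing \eqref{eq:5}.

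To establish \eqref{eq:5} I would combine \Cref{le:limited-effect-from-forks} with a counting argument, in the spirit of the intuition given after \Cref{fig:multi-collider}: achieving $Y^{m_1}=1$ forces the value emitted by $X_0$ along $d$ to reproduce a fixed $k$-bit block (the ``$u_{1,0}$'' of \Cref{def:compatible}) equal to the control-path value that enters $X_0$, and that value is precisely what $Z_0$ carries; removing $Z_0$ leaves $X_0$ unable to recover it. Formally, \Cref{le:limited-effect-from-forks} (applied along $m_1$, with the forks above $U_{1,1}$ set to the ``$w$-repeated'' values it uses) shows that, holding $\sC_{X_0}^{\neg m_1}$ fixed, switching $U_{1,1}$ between two candidates changes at most the value of $X_0$'s context along $m_1$, so the set of behaviours $\pi_{X_0}$ can realize as $U_{1,1}$ ranges over its domain factors through a map into the domain of that single context. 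Then I would count cardinalities: $U_{1,1}$ takes $2^{2^{k+\lvert \sp_1\rvert-1}}$ values, whereas the number of behaviours $X_0$ can exhibit through its post-deletion observations is bounded by a quantity governed only by $b=\max_X\lvert\sC_X\rvert$ and $c$ (the maximal number of materiality paths through a vertex); since \Cref{def:nro-model} fixes $k$ to be the least integer with $2^k>(k+c)bc$, the former strictly exceeds the latter, and pigeonhole yields the required pair $u_{1,1}\neq u'_{1,1}$.

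With \eqref{eq:5} in hand, \Cref{le:info-transmission-5} gives $P^\spi(Y^{m_1}<1)>0$, and then \Cref{le:cannot-lapse} shows $\spi$ does not achieve the MEU; as $\spi$ was arbitrary, no deterministic policy in $\calS_{Z_0\not\to X_0}$ does, which is the claim. The main obstacle is exactly the verification of \eqref{eq:5}: one must read off from the structure of the materiality paths which observations of $X_0$ can carry information about $U_{1,1}$ once $Z_0$ has been deleted (the role of \Cref{le:limited-effect-from-forks}) and then push through the cardinality bookkeeping --- the exponential-tower fork domains $\exp_2^{j}(\cdot)$ against the polynomial-in-$k$ bound $(k+c)bc$ --- so that the inequality the choice of $k$ was engineered to produce actually closes. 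Secondary fiddly points are the identification $T_1=X_0$ from ``$T_1$ is a decision'' and the degenerate shapes of $m_1$ ($\jmax_1=0$, or $T_1$ coinciding with the first collider $W_{1,1}$), each of which has to be checked not to break the reduction to \Cref{le:info-transmission-5}.
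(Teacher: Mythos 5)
Your proposal follows the paper's proof essentially step for step: identify $T_1=X_0$ from the hypothesis that the child of $X_0$ along $d$ is a decision, reduce everything to verifying \Cref{eq:5} so that \Cref{le:info-transmission-5} (plus \Cref{le:cannot-lapse}) yields the failure to attain the MEU, and obtain \Cref{eq:5} by pigeonhole, comparing the doubly-exponential domain of $U_{1,1}$ against the bounds $\lvert\dom{X_0}\rvert\leq 2^{k+c}$ and $\lvert\dom{\sC^{\neg m_1}(X_0)}\rvert\leq 2^{bc}$ that the choice of $k$ in \Cref{def:nro-model} was engineered to beat. The only cosmetic differences are that the paper's appendix proof does not re-invoke \Cref{le:limited-effect-from-forks} at this stage (that reasoning is already packaged inside \Cref{le:info-transmission-5}), and the cardinality bookkeeping you defer as the ``main obstacle'' is precisely what the paper's proof carries out.
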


To prove that no deterministic policy in $\calS_{Z_0 \not \to X_0}$ can achieve the MEU 
(achievable with the scope $\calS$), we will show that if a deterministic policy $\spi$ satisfies 
$P^\spi(\Pa(Y^d) = A^d)=1$, as required by \Cref{le:info-transmission-2b},
 then the domain of $X_0 \times C^{\neg m_1}_{X_0}$ 
is smaller than the domain of $C^{m_1}_{X_0}$, 
so \Cref{eq:5} will be satisfied, 
and thus the MEU cannot be achieved.
A detailed proof is presented in \Cref{app:impossible-decision-bc-cannot-distinguish}.

We now combine the lemmas for the two cases to prove the main result.

\begin{proof}[Proof of \Cref{thm:main}]
Any scoped graph $\calG(\calS)$ that satisfies assumptions (A-C) contains materiality paths for the context $Z_0$ of $X_0$ (\Cref{le:nro-paths}), 
and has a materiality SCM (\Cref{def:nro-model}) compatible with $\calG(\calS)$.
In this decision problem, whether the child of $X_0$ along $d$ is or is not a decision, 
the MEU cannot be achieved by a deterministic policy unless $X_0$ is allowed to depend on $Z_0$
(\Cref{le:impossible-non-decision,le:impossible-decision-bc-cannot-distinguish}).
And stochastic policies can never surpass the best deterministic policy (\citep[Proposition 1]{lee2020characterizing}), 
so no such policy can achieve the MEU, 
and so $Z_0$ is material for $X_0$.
\end{proof}

\section{Toward a more general proof of materiality} \label{sec:further-steps}
So far, via \Cref{thm:main} we have established the necessity of condition (I) of LB-factorizability for immateriality.
We now outline some steps toward evaluating the necessity of conditions (II-III) of LB-factorizability, 
and the further condition in \citep[Thm. 2]{lee2020characterizing}.

To begin with, condition (III) requires that we choose an ordering $\prec$, such that 
the parents of each decision $X$ are somewhere before $X$, while the descendants are somewhere afterwards.
Clearly this condition can be satisfied for any acyclic graph, so it instead 

Conditions (II-III) are individually not very restrictive, but are jointly substantial.
So a natural next step is to try to prove that conditions (II-III) are necessary, 
by defining some info paths and control paths for graphs that violate conditions (II-III),
defining a materiality SCM, and proving materiality in that SCM.
So far, however, we have only been able to carry out the first step --- defining the paths --- 
and difficulties have arisen in using those paths to define an SCM that exhibits materiality.
In this section, we will outline what info paths and control paths can be proven to exist, 
and then outline the difficulties in using them to prove materiality.

\subsection{A lemma for proving the existence of paths}
When the variables $\sZ,\sX',\sC',\sU$ are not factorizable, we can prove the existence of info and control paths.

\begin{restatable}[System Exists General]{lemma}{systemexistsgeneral}
\label{le:system-exists-general}
Let $\calG_\calS$ be a scoped graph that satisfies assumptions (A,B) from \Cref{thm:main}.
If $\sZ=\{Z_0\}$, $\sX' \supseteq \Ch(Z_0)$, $\sC'=C_{\sX'} \setminus (\sX' \cup \sZ)$, $\sU=\emptyset$
are not LB-factorizable,
then there exists a pair of paths to some $C' \in \sC' \cup Y$:
\begin{itemize}
    \item an info path $m:Z_0 \upathto C'$, active given $\lceil \sX' \cup \sC' \rceil$, and
    \item a control path $d:X \pathto C'$ where $X \in \sX'$. 
\end{itemize}
\end{restatable}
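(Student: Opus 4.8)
The plan is to produce the two paths from a single d-connecting path that must exist once LB-factorizability fails, after first pinning down what that failure amounts to.

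The first step is to observe that condition~(III) of \Cref{def:lb-factorizable} is always satisfiable: the restriction to $\sV' := \sC' \cup \sX' \cup \sZ$ of any topological ordering of $\calG_\calS$ puts all of $\Pa(X) = \sC_X$ (which lies in $\sV'$, since $\sC_X \subseteq \sC_{\sX'} \subseteq \sV'$) before each $X \in \sX'$ and leaves only non-descendants of $X$ before $X$. Since condition~(I) does not depend on the ordering, ``not LB-factorizable'' therefore means: either condition~(I) fails, or, for every ordering $\prec$ satisfying~(III), condition~(II) fails at some $C \in \sC'$. I also record that $Z_0$, being a context, has a decision child, so $\Ch(Z_0) \neq \emptyset$ and in particular $\sX' \neq \emptyset$; by assumption~(A) every member of $\sX'$ has a directed path to $Y$.

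Next, handle the case that condition~(I) fails, which turns out to be clean. Then some $\pi_X$ with $X \in \sX'$ is d-connected to $Y$ given $\lceil \sX' \cup \sC' \rceil$. As $X \in \sX' \subseteq \lceil \sX' \cup \sC' \rceil$ is in the conditioning set and $\pi_X$'s only edge is $\pi_X \to X$, the connecting path is blocked at $X$ unless it leaves $X$ through a parent, so it has the form $\pi_X \to X \gets V \upathto Y$ with $V$ a non-collider, hence $V \notin \lceil \sX' \cup \sC' \rceil$. But $V \in \sC_X \subseteq \sV'$, and the only element of $\sV'$ that can lie outside $\lceil \sX' \cup \sC' \rceil \supseteq \sX' \cup \sC'$ is $Z_0$; so $V = Z_0$, and in particular $Z_0 \to X$, i.e.\ $X \in \Ch(Z_0)$. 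The suffix $m : Z_0 \upathto Y$ is then an info path active given $\lceil \sX' \cup \sC' \rceil$ (every interior node keeps its collider status and the endpoint $Z_0$ is unconditioned), and by~(A) there is a directed path $d : X \pathto Y$; taking $C' := Y$ settles this case.

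Otherwise condition~(II) fails for every ordering satisfying~(III). Here I would choose $\prec$ advantageously --- placing $Z_0$ as early as~(III) permits and listing first those contexts of $\sC'$ that are not descendants of a decision in $\sX'$ --- and take $C$ to be the $\prec$-earliest context violating~(II), yielding a path $p$ active given $\sW := \lceil (\sX' \cup \sC')_{\prec C} \rceil$ from $C$ to some $\pi_X$ with $X \in \sX'_{\prec C}$, or to $Z_0$ when $Z_0 \prec C$. A $\pi_X$-endpoint again forces $p$ to enter $X$ through an unconditioned parent, and that parent is forced (exactly as above, now relative to $\sW$) to be $Z_0$ or a context placed after $C$; chasing these, together with the collider structure of $p$, one reroutes $p$ into a path with endpoints $Z_0$ and $C$ that is still active given $\sW$. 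It then remains to (i) upgrade activeness from $\sW$ to the larger set $\lceil \sX' \cup \sC' \rceil$, and (ii) exhibit a directed control path $d : X \pathto C$ with $X \in \sX'$; for~(ii) the ordering choice is meant to pay off, since the $\prec$-earliest violating context should be forced downstream of a decision in $\sX'$ (a context that is not such a descendant should be provably compatible with~(II), hence would have been ordered before $C$), and if no such $C$ survives we fall back to $C' = Y$ using~(A). The main obstacle is precisely case~(ii) of the last paragraph together with the conditioning-set upgrade in~(i): enlarging from the ``prefix'' implied-set $\lceil (\sX' \cup \sC')_{\prec C} \rceil$ to $\lceil \sX' \cup \sC' \rceil$ can destroy activeness (a later node may become a conditioned non-collider, or a newly activated collider), so one must prune the witnessing path to avoid the extra conditioned nodes or show each is dominated by a collider already on it; and pinning down that the witnessing context is genuinely a descendant of $\sX'$ requires choosing $\prec$ and $C$ together with a separate d-separation argument for the discarded contexts. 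The reduction via~(III) and the analysis of the condition~(I) failure are routine by comparison.
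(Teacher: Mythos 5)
Your reduction of ``not LB-factorizable'' and your treatment of the case where condition~(I) fails are essentially the paper's: the connecting path is forced to enter $X$ through its unique unconditioned parent $Z_0$, the suffix is the info path, and assumption~(A) supplies $d:X\pathto Y$. That part is correct.

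The gap is in the condition-(II) case, and it is exactly where you yourself locate ``the main obstacle'': you do not actually produce the control path $d:X\pathto C$, and you do not carry out the upgrade of activeness from $\lceil (\sX'\cup\sC')_{\prec C}\rceil$ to $\lceil \sX'\cup\sC'\rceil$. These are the two substantive steps of the lemma, and flagging them as difficulties is not the same as closing them. The paper resolves both with a single device you don't have: an \emph{ordering graph} $\calH$ on $\sZ\cup\sX'\cup\sC'$ (edges $\Pa(B)\to B$ for $B\in\sX'$ and $B\to C$ for $C\in\Desc(B)$), together with an ordering $\prec$ that is topological in $\calH$ \emph{and} places a variable after $Z_0$ exactly when it is an $\calH$-descendant of $Z_0$. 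With that choice, a violation of condition~(II) at $C$ can only target $\sZ_{\prec C}=\{Z_0\}$ (after reducing the $\spi_{\sX'_{\prec C}}$ case via \Cref{le:unnecessary-separation}), which forces $Z_0\prec C$, hence $C\in\Desc^\calH(Z_0)$, and \Cref{le:descendant-in-h} then unpacks this directly into the required directed path $Z_0\to X\pathto C$ with $X\in\sX'$. Your alternative ordering heuristic (``list first those contexts not downstream of $\sX'$'') is not shown to be compatible with a topological order, and your claim that a non-descendant context ``should be provably compatible with~(II)'' is asserted, not proved. For the activeness upgrade, the paper first reroutes the active path into a walk with collider detours, truncates at the \emph{first} node of $\sC'_{\succ Z_0}$ along the walk (this choice of $C$ is load-bearing in sub-case 2.2.3.3), and then runs the case analysis of Step~2.2 through \Cref{le:activeness-given-closure-set}, showing non-colliders avoid $\sX'\cup\sC'$ and forks and endpoints avoid $\lceil \sX'\cup\sC'\rceil$; your suggestion to ``prune the witnessing path'' does not address the genuinely dangerous case of a fork that lies in $\lceil \sX'\cup\sC'\rceil\setminus\lceil(\sX'\cup\sC')_{\prec C}\rceil$, which the paper kills using the $\calH$-descendance structure. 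In short: the skeleton is right, but the proof of the hard half is missing.
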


A proof is supplied in \Cref{app:proving-path-existence}.
The intuition of this proof is that each of the conditions (I-III) implies a precedence relation between a pair of variables in $\sV' \cup Y$.
Each of these precedence relations can be used to build an ``ordering graph'' over $\sV' \cup Y$.
If the ordering graph is acyclic, then we can let $\prec$ be any ordering that is topological on the graph, and then $\sZ,\sX', \sC',\sU$ are LB-factorizable.
Otherwise, we can use a cycle in the graph to prove the existence of an info path and a control path.
By iterating through these cycles, we can obtain a series of info paths and control paths that terminate at $Y$.

The resulting paths are in some cases, quite useful for proving materiality.
For instance, we can recover the pair of info and control paths used in \Cref{fig:van-merwijk-scheme}.
To prove that $Z$ is material for $X$, 
we can start by choosing $\sX'=\{X,X'\}$, $\sC'=\{Z'\}$, $\sC'=\{Z',W\}$, and $\sU'=\emptyset$.
Then, \Cref{le:system-exists-general} implies the existence of an active path from $Z$ to some $\Desc_X \cap \sC'$, 
so we see that the first info path is the edge $Z \to Z'$.
With $Z'$ being a descendant of $X$, we also have the first control path, $X \to Z'$.
We must then obtain some paths that exhibit why $Z'$ is itself useful for the decision $X$ to know about, and to influence.
To do this, we can reapply \Cref{le:system-exists-general} using the sets
$\sX'=\{X'\}, \sZ=\{Z'\}, \sC'=\{W\}$, and $\sU'=\emptyset$.
We then obtain the new info path $Z' \to W \gets U \to Y$, and the new control path $Z' \to X' \to Y$.
The SCM in \Cref{fig:van-merwijk-scheme} uses these paths to prove $Z$ material for $X$.

\subsection{A further challenge: non-collider contexts}
In some graphs, it is not clear how to use the info and control paths \Cref{le:system-exists-general}  
to prove materiality, because non-collider nodes on the info path may be contexts.
(In previous work, this possibility was excluded by the solubility assumption \citep[Lemma 28]{van2022complete}.)
We will now highlight one case, in \Cref{fig:obstacle}, where it is relatively clear how this challenge can be overcome, 
and one case, \Cref{fig:superimposed}, where it is unclear how to make progress.

In the graph of \Cref{fig:obstacle}, we would like to prove that $Z_0$ is material for $X_0$.
Using \Cref{le:system-exists-general}, we can obtain the red and blue info paths as shown, 
and the corresponding control paths in darker versions of the same colors.
In the approach of \Cref{def:nro-model}, shown in \Cref{fig:obstacle-1}, 
$X_0$ should need to observe $Z_0$ in order to know which slice from $V$ is presented at its parent $X_1$.
Then, $X_1$ would play two roles, one for the red info path, and one for the dark blue control path.
As a collider on the red info path, its role is to present the $Z_0^\text{th}$ bit from $V$.
As the initial endpoint of the blue control path, so its role is to copy the assignment of $Z_0$.
The problem, however, is that $X_0$ then does not need to observe $Z_0$ in order to reproduce its value, 
because this value is already observed at $X_1$, so $Z_0$ is not material.

To remedy this problem, we can construct an alternative SCM, where the value of $Z_0$ is ``concealed'', 
i.e. it is removed from the other contexts, $C_{Z_0} \setminus Z_0$.
At $X_1$, we directly remove $Z_0$, leaving this decision with a domain of only one bit.
At $C$, we impose some random noise, so that it is not always a perfect copy of $Z_0$.
The result is shown in \Cref{fig:obstacle-2}.
When this model is not intervened, an expected utility of $\EE[Y]=10.99$ is achieved, 
because the red term in $Y$ always equals $10$, while the blue term has an expectation of $0.99$.
(This is the MEU, because there is no way to improve the blue term to have expectation $1$ without decreasing 
the expectation of the red term by at least $0.05$.)
If instead, $Z_0$ is removed as a context for $X_0$, then the expected utility can only be as high as $\EE[Y]=10.95$.
To understand this, restrict our attention to deterministic policies, 
and note that in order for the red term to be better than a coin flip (with an expected value of $5$),
we would either need to have $X_0=\langle C,X_1 \rangle$ --- and the red term will have an expectation of $9.95$, 
or we must have $X_1=V[0]$ and $X_0=\langle 0,X_1 \rangle$ --- and then the blue term will have an expectation of $0.5$.
In either case, performance is worse than $10.99$, so $Z_0$ is material for $X_0$.

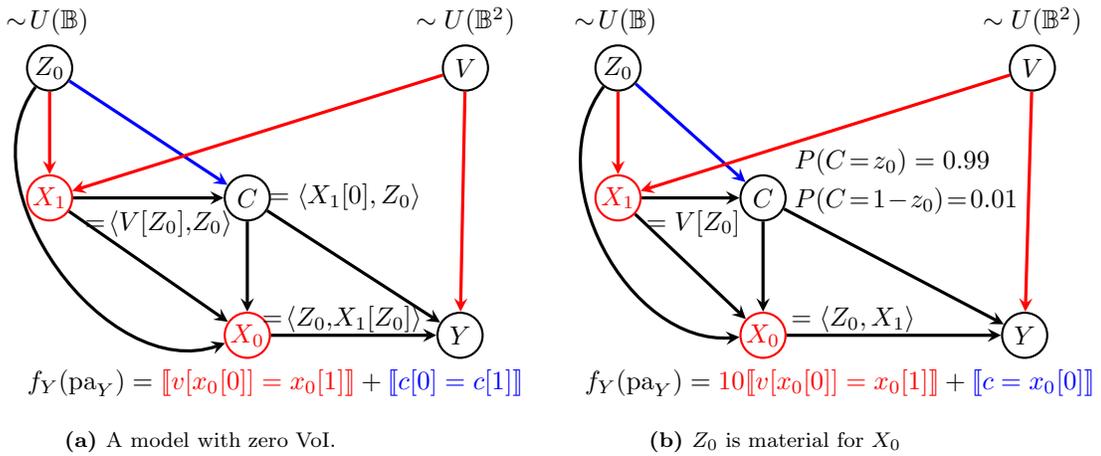
\begin{figure}[ht]
\begin{subfigure}[b]{0.42\textwidth}\centering
\begin{tikzpicture}[rv/.style={circle, draw, thick, minimum size=6mm, inner sep=0.2mm}, node distance=15mm, >=stealth]
\node (Z0) [rv, label=above:{$\!\sim\! U(\bool)$}]  {$Z_0$};
\node (X1) [rv,red, below=11mm of Z0, label={[xshift=14.5mm,yshift=-9.5mm]$=\!\langle V[Z_0],\! Z_0 \rangle$}]              {$X_1$};
\node (C) [rv, right=20mm of X1, label={[xshift=13mm,yshift=-6mm]$=\langle X_1[0],Z_0 \rangle$}]              {$C$};
\node (X0) [rv,red, below=12mm of C, label={[xshift=12.6mm,yshift=-4mm]$=\!\langle Z_0,\!X_1[Z_0]\rangle$}]              {$X_0$};
\node (V) [rv, right=49mm of Z0, label=above:{$\sim U(\bool^2)$}]              {$V$};
\node (Y) [rv, right=22mm of X0, label={[xshift=-24.6mm,yshift=-12.5mm,align=left]$f_Y(\pa_Y)={\color{red}\llbracket v[x_0[0]]=x_0[1] \rrbracket} + {\color{blue}\llbracket c[0]=c[1]\rrbracket}$}]              {$Y$};

  \draw[->, very thick, red] (Z0) -- (X1);
  \draw[->, very thick, blue] (Z0) -- (C);
  \draw[->, very thick] (X1) -- (C);
  \draw[->, very thick] (X1) -- (X0);
  \draw[->, very thick] (C) -- (X0);
  \draw[->, very thick] (C) -- (Y);
  \draw[->, very thick, red] (V) -- (X1);
  \draw[->, very thick, red] (V) -- (Y);
  \draw[->, very thick] (X0) -- (Y);
  \draw[->, very thick] (Z0) to[bend right=73] (X0);

\end{tikzpicture}
\caption{A model with zero VoI.} \label{fig:obstacle-1}
\end{subfigure}
\hspace{10mm}
\begin{subfigure}[b]{0.42\textwidth}\centering
\begin{tikzpicture}[rv/.style={circle, draw, thick, minimum size=6mm, inner sep=0.2mm}, node distance=15mm, >=stealth]
\node (Z0) [rv, label=above:{$\!\sim\! U(\bool)$}]  {$Z_0$};
\node (X1) [rv,red, below=11mm of Z0, label={[xshift=10mm,yshift=-9.5mm]$=V[Z_0]$}]              {$X_1$};
\node (C) [rv, right=13mm of X1, label={[xshift=19mm,yshift=-7.5mm]$
\begin{aligned}
    &P(C \!=\! z_0)=0.99 \\[-1.5ex]
    &P(C\!=\!1 \!-\! z_0)\!=\!0.01 \\
\end{aligned}
$}] {$C$};
\node (X0) [rv,red, below=12mm of C, label={[xshift=12mm,yshift=-4mm]$=\langle Z_0, X_1\rangle$}]              {$X_0$};
\node (V) [rv, right=48.8mm of Z0, label=above:{$\sim U(\bool^2)$}]              {$V$};
\node (Y) [rv, right=28.5mm of X0, label={[xshift=-24.6mm,yshift=-12.5mm,align=left]$f_Y(\pa_Y)={\color{red}10\llbracket v[x_0[0]]=x_0[1] \rrbracket} + {\color{blue}\llbracket c=x_0[0]\rrbracket}$}]              {$Y$};

  \draw[->, very thick, red] (Z0) -- (X1);
  \draw[->, very thick, blue] (Z0) -- (C);
  \draw[->, very thick] (X1) -- (C);
  \draw[->, very thick] (X1) -- (X0);
  \draw[->, very thick] (C) -- (X0);
  \draw[->, very thick] (C) -- (Y);
  \draw[->, very thick, red] (V) -- (X1);
  \draw[->, very thick, red] (V) -- (Y);
  \draw[->, very thick] (X0) -- (Y);
  \draw[->, very thick] (Z0) to[bend right=66] (X0);

\end{tikzpicture}
\caption{$Z_0$ is material for $X_0$} \label{fig:obstacle-2}
\end{subfigure}
\caption[Randomness may be needed to prove materiality]{Two alternative models that use the same two info paths, red and blue.} \label{fig:obstacle}
\end{figure}

The problem is that concealing the value of $Z_0$ does not work for all graphs.
To see this, let us add two decisions, $X_2$ and $X_3$, to the graph from \Cref{fig:obstacle}, to thereby obtain the graph in \Cref{fig:superimposed}.
Let us retain the materiality SCM from \Cref{fig:obstacle-2}, except that $X_2$ and $X_3$ copy the value from $C$ along to $Y$.
One might expect that $Z_0$ should still be material, but it is not.
Now, there is a policy that achieves the new MEU of $11$ by superimposing the value of $Z_0$ on the assignments of decisions $X_2$ and $X_3$.
In this policy $\spi$,
$x_1 = v[z_0]$,
$x_2 = z_0 \oplus z_0$, 
$x_3 = x_2 \oplus z_0$, 
and $x_0 = x_2 \oplus x_3 = z_0$
where $\oplus$ represents the XOR function.
Under $\spi$, the red term equals $10$ always, while the blue term always equals $1$, i.e.\ the MEU is achieved,
and $\spi$ is a valid policy even if $Z_0$ is not a context of $X_0$, meaning that $Z_0$ is not material for $X_0$.

In summary, 
whenever $\sZ \ni Z_0,\sX' \ni X_0,\sC',\sU$ are not LB-factorizable, 
then we can find some info and control paths for $Z_0$ and $X_0$, 
but then $X_0$ can recover the value of $Z_0$, making it possible to achieve the MEU even when $Z_0$ is removed as a context of $X_0$.
In some graphs, we can devise an alternative SCM that conceals the value of $Z_0$.
But in others, a policy can superimpose the information from $Z_0$ on other decisions, such as $X_2$ and $X_3$ in \Cref{fig:superimposed}, 
so that $X_0$ can recover the value of $Z_0$, making $Z_0$ immaterial for $X_0$ once again.

It seems that new insights are needed to solve this superimposition problem, 
and that therefore that we will need new insights to establish a complete criterion for materiality in insoluble decision problems.

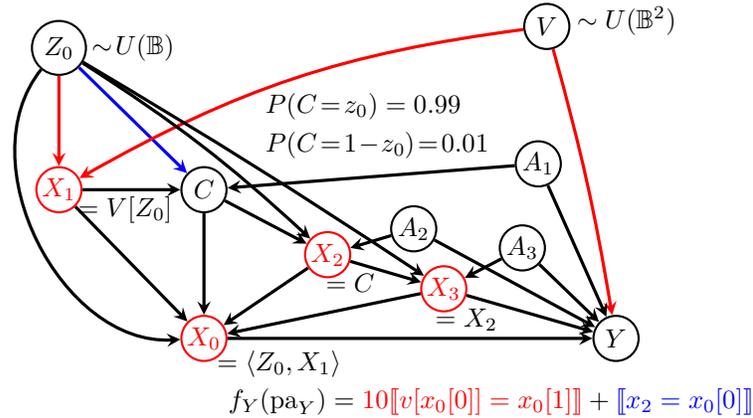
\begin{figure}[ht]\centering
\begin{tikzpicture}[rv/.style={circle, draw, thick, minimum size=6mm, inner sep=0.8mm}, node distance=15mm, >=stealth]
\node (Z0) [rv, label=0:{$\!\sim\! U(\bool)$}]  {$Z_0$};
\node (X1) [rv,red, below=12mm of Z0, inner sep=0mm, label={[xshift=8.8mm,yshift=-8.8mm]$=V[Z_0]$}]              {$X_1$};
\node (C) [rv, right=13mm of X1, label={[xshift=23mm,yshift=-1.2mm]$
\begin{aligned}
    &P(C \!=\! z_0)=0.99 \\[-1.5ex]
    &P(C\!=\!1 \!-\! z_0)\!=\!0.01 \\
\end{aligned}
$}] {$C$};
\node (A1) [rv, above right=-1mm and 40mm of C, inner sep=0mm]              {$A_1$};
\node (X2) [rv,red, below right =4.2mm and 12mm of C, inner sep=0mm, label={[xshift=3mm,yshift=-9.3mm]$=C$}]              {$X_2$};
\node (A2) [rv, above right=-1mm and 7mm of X2, inner sep=0mm]              {$A_2$};
\node (X3) [rv,red, below right =0.1mm and 11mm of X2, inner sep=0mm, label={[xshift=3mm,yshift=-10mm]$=X_2$}]              {$X_3$};
\node (A3) [rv, above right=1mm and 6mm of X3, inner sep=0mm]              {$A_3$};
\node (X0) [rv,red, below=13.5mm of C, inner sep=0mm, label={[xshift=10mm,yshift=-9.5mm]$=\langle Z_0, X_1\rangle$}]              {$X_0$};
\node (V) [rv, above right=-2mm and 60mm of Z0, label={[xshift=10.6mm,yshift=-5.5mm]$\sim U(\bool^2)$}]              {$V$};
\node (Y) [rv, right=48.5mm of X0, label={[xshift=-16.6mm,yshift=-14.5mm,align=left]$f_Y(\pa_Y)={\color{red}10\llbracket v[x_0[0]]=x_0[1] \rrbracket} + {\color{blue}\llbracket x_2=x_0[0]\rrbracket}$}]              {$Y$};

  \draw[->, very thick, red] (Z0) -- (X1);
  \draw[->, very thick, blue] (Z0) -- (C);
  \draw[->, very thick] (X1) -- (C);
  \draw[->, very thick] (X1) -- (X0);
  \draw[->, very thick] (C) -- (X0);
  \draw[->, very thick] (C) -- (X2);
  \draw[->, very thick] (X2) -- (X3);
  \draw[->, very thick] (X2) -- (X0);
  \draw[->, very thick] (X3) -- (Y);
  \draw[->, very thick] (X3) -- (X0);
  \draw[->, very thick, red] (V) to[bend right=11] (X1);
  \draw[->, very thick, red] (V) to[bend left=3] (Y);
  \draw[->, very thick] (X0) -- (Y);
  \draw[->, very thick] (Z0) to[bend right=66] (X0);
  
  \draw[->, very thick] (Z0) to[bend left=5] (X2);
  \draw[->, very thick] (A1) -- (C);
  \draw[->, very thick] (A1) -- (Y);
  \draw[->, very thick] (Z0) to[bend left=1.4] (X3);
  \draw[->, very thick] (A2) -- (X2);
  \draw[->, very thick] (A2) -- (Y);
  \draw[->, very thick] (A3) -- (X3);
  \draw[->, very thick] (A3) -- (Y);

\end{tikzpicture}
\caption[Another materiality counterexample: superimposition]{A model with zero VoI} \label{fig:superimposed}
\end{figure}

\section{Conclusion} \label{sec:conclusion}
We have found that in a graph whose contexts cannot satisfy condition (I) of LB-factorizability, 
any context can be material.
We encountered some new problems for materiality proofs, and devised appropriate solutions:
\begin{itemize}
\item if the variable $Z_i$ whose materiality we are trying to establish is a decision, whose value can be determined by other available contexts,
---
then we must choose a different info path so that non-observed variables would be needed to determine the value of $Z_i$
\item if the info path begins with a context of multiple decisions, 
---
then we must construct the SCM differently along the info path
\item if the control path contains consecutive decisions,
then we require more bits to be copied along the control path, so that not all of these bits can be copied along alternative paths.
\end{itemize}

As a next step towards establishing a complete criterion for materiality, we then considered the 
more general setting where 
no context can jointly satisfy conditions (I-III) of LB-factorizability.
In this setting, it is possible to identify info paths and control paths for a target context $Z_0$ and decision $X_0$, 
and to apply our SCM construction to these paths.
However, there may exist policies that transmit the assignment of $Z_0$ through alternative paths, 
and that achieve the MEU even when $Z_0$ is removed as a context of $X_0$.
Although there exist ways of concealing the information about $Z_0$ from a descendant decision $X_{i'},i<i'$, 
there can also be other ways that information about $Z_0$ may be transmitted, such as transmitting this information 
in other decisions, undermining materiality once again.
Thus, the challenge of proving a complete criterion of materiality for insoluble graphs currently remains open.

\section{Acknowledgements}
Thanks to Minwoo Park and Tom Everitt for comments on draft versions of this manuscript.

\addtocontents{toc}{\protect\setcounter{tocdepth}{1}}
\appendix
\clearpage
\bibliography{main-old}\newpage
\section{Recap of Lee and Bareinboim (2020)} \label{app:recap}
Our result \Cref{thm:main} is an initial step in a larger potential project of proving 
that \citep[Thm.\ 2]{lee2020characterizing} is a complete criterion for materiality.

\citep[Thm.\ 2]{lee2020characterizing} result begins with the following factorization \citep{lee2020characterizing}, of which we are only focused on 
cases where the first condition is violated.
The result uses the definition of ``redundancy'' (which is a looser condition than immateriality):
if a scoped graph $\calG(\calS)$ has $X \not \subseteq \Anc_Y$ or $(C \not \dsep Y \mid X \cup \Pa_X \setminus \{ C\})$
then it is ``redundant'',
The result from \citet{lee2020characterizing} is reproduced verbatim:
\begin{customlem}{LB-1}
Given an MPS $\calS$, which satisfies non-redundancy, 
	let $\sX'\subseteq \sX(\calS)$, actions of interest, $\sC'\subsetneq \sC_{\sX'}\setminus \sX'$. non-action contexts of interest.
	If there exists a subset of exogenous variables $\sU'$ in $\calG_\calS$, a subset of endogenous variables $\sZ$ in $\calG_\calS$ that is disjoint with $\sC'\dotcup\sX'$ and subsumes $\sC_{\sX'}\setminus(\sC'\dotcup\sX')$,  
	and 
	an order $\prec$ over $\sV'\doteq \sC'\dotcup\sX'\dotcup\sZ$ such that
	\begin{enumerate}
		\item $\parens{Y \perp \spi_{\sX'} \mid \lceil{\sX'\dotcup\sC'}\rceil}_{\calG_\calS}$,
		\item $\parens{C \perp \spi_{\sX'_{\prec C}},\sZ_{\prec C},\sU' \mid \lceil{{(\sX'\dotcup\sC')}_{\prec C}}\rceil}_{\calG_\calS}$ for every $C\in\sC'$, and
		\item $\sV'_{\prec X}$ is disjoint with $de(X)_{\calG_{\calS}}$ and subsumes $pa(X)_{\calG_{\calS}}$ for every $X\in\sX'$,
	\end{enumerate}
        where, the policy node $\pi_X$ is a new parent added to $X$, 
	then the expected reward for $\spi$, a deterministic policy optimal with respect to $\calS$, can be written as
	\begin{align}\label{eq:intermediate-nr-opt}
        \mu_{\spi}=
            \sum_{y,\sc',\sx'} y{Q'_{\sx'}(y,\sc')}
            \sum_{\su', \sz}Q(\su')
            \prod_{Z\in\sZ} Q(z|\sv'_{\prec z},\su')
            \prod_{X\in\sX'} \pi(x|\sc_x).
    \end{align}
\end{customlem}

Lemma LB-1 provides conditions for asserting \Cref{eq:intermediate-nr-opt} given $(\calS, \sX',\sC')$, whether $(\sU', \sZ, \prec)$ exist satisfying three conditions.
It is then used to prove redundancy under optimality using the following theorem.

\begin{customthmalt}{LB-2}
Let $\sU',\sZ$ and $\prec$ satisfy Lemma LB-1. For $Z \in \sZ$, let $\sV_Z$ be a minimal subset of $\sV'_{\prec Z} \cup \sU'$ such that 
$Z \dsep \sU' \mid \sV_Z$.
We define $\text{fix}(\sT)$ with respect to 
$\{\langle Z,\sV_Z \rangle \}_{Z \in \sZ}$, that is with $\hat{\sT}:=\lceil \sT \rceil \cup \{Z \in \sZ \mid \sV_Z \setminus \sU' \subseteq \lceil \sT \rceil \}$, and 
$\text{fix}(\sT)$ is $\sT$ if $\sT=\hat{\sT}$, and $\text{fix}(\hat{\sT})$ otherwise.
If $\text{fix}(\sC_X \setminus \sZ) \supseteq \sC_X$ for $X \in \sX'$, then 
$\calS' := (\calS \setminus \sX') \cup \{\langle X, \sC_X \setminus \sZ \rangle \}_{X \in \sX'}$
satisfies $\mu^*_{\calS'}=\mu^*_\calS$.
\end{customthmalt}

Let us apply Theorem LB-2 to the graph \Cref{fig:triangle-no-voi}, which we discussed in \Cref{sec:setup}.
We have noted that using $\sZ=\{Z\},\sX'=\{X\}$, and the ordering $\prec = \langle Z,X \rangle$, 
$\sZ$ and $\sX'$ are LB-factorizable.
To apply the theorem, we must confirm that $\text{fix}(\sC_X \setminus \sZ) \supseteq \sC_X$ is true.
The right hand side is simply equal to $Z$.
To evaluate the left hand side, note that $\sC_X \setminus \sZ = \emptyset$.
Furthermore, $\hat{\emptyset}$ includes $\lceil \sT \rceil$, which includes $Z$.
So $\text{fix}(\emptyset)$ also includes $Z$, meaning that 
the left hand side, $\text{fix}(\emptyset)$ is a superset of the right hand side, $\sC_X$, 
and thus $Z$ is immaterial for $X$.

An interested reader may refer to \citet{lee2020characterizing} for
further examples 
where LB-2 is used to establish immateriality.

\section{Supplementary proofs regarding the main result (Theorem \ref{sec:main-result})} \label{app:proofs}

\subsection{Proof of \Cref{le:decision-backdoor-revised}}
We begin by restating the lemma.

\pathstoz*


We now prove \Cref{le:decision-backdoor-revised}.

\begin{proof}[Proof of \Cref{le:decision-backdoor-revised}]

Since $Z$ is assumed to be a decision, we have from \Cref{le:chance-parent-of-Z-revised}, that there exists 
$N \in \Pa_Z \setminus \condset{Z}$, which therefore is also a chance node.
Assumption (C) of \Cref{thm:main} for $N \to Z$ implies the existence of a path
$p:\Pi_Z \to Z \gets N \upathto Y$ active given $\condset{N}$, which can be truncated as $p':Z \gets N \overset{p}{\upathto} Y$.
We will consider the cases where every collider in $p'$ is in $\condset{Z}$, or there exists one that is not.

Case 1. Every collider in $p'$ is in $\condset{Z}$.
Clearly $p'$ begins as $Z \gets \cdot$ and terminates at $Y$
and is active at colliders, given $\condset{Z}$.
We will now prove that $p'$ is also active given $\condset{Z}$ at non-colliders.
Note that 
$\condset{N} = \lceil (\sX(\calS)\cup C_{\sX(\calS)}) \setminus N \rceil \supseteq \lceil (\sX(\calS) \cup C_{\sX(\calS) \setminus Z}) \setminus (Z \cup N) \rceil = \condset{Z}$, 
 where the
first equality follows from $N$ being a chance node, 
and the latter follows from that and $N \not \in \sC_{\sX(\calS) \setminus Z}$, which jointly imply that $N \in \Pa_Z \setminus \condset{Z}$.
So $p'$ is active given $\condset{Z}$ at non-colliders, and the result is proved for this case.

Case 2. There exists a collider in $p'$ that is not in $\condset{Z}$.
Let $M$ be the collider in $p'$ that is not in $\condset{Z}$, nearest to $Z$ along $p'$.
Since $p'$ is active given $\condset{N}$, we have $M \in \Anc_{\condset{N}}$, 
which implies $M \in \sX(\calS) \cup (C_{\sX(\calS)}$ (because $M \in \lceil \sW \rceil \setminus \sW \implies M \in \sX(\calS)$),
so $M$ is an ancestor of some decision $X'$.
By assumption (B) of \Cref{thm:main},
$X'$ is an ancestor of $Y$,
so we can construct $p'':Z \overset{p'}{\upathto} M \pathto X' \pathto Y$, 
and prove that it satisfies the required conditions.
Clearly $p''$ begins at $Z$, terminates at $Y$. 
The first segment $Z \overset{p'}{\upathto} M$ is active 
at non-colliders given $\Anc_{\condset{Z}}$ by the same argument as in Case 1, 
and at colliders by the definition of $M$.
From $M \not \in \Anc_{\condset{Z}}$, it follows that
$M \pathto X' \pathto Y$ of $p''$ is active given $\Anc_{\condset{Z}}$,
proving the result.

\end{proof}

\subsection{Proof of \Cref{le:nro-paths}} \label{app:materiality-paths}

We begin by restating the lemma.

\materialitypaths*

The proof was described in \Cref{sec:materiality-paths-general}, and is as follows.

\begin{proof}
We prove the existence of each path in turn.

From \Cref{le:chance-parent-of-Z-revised}, 
there exists a control path $A \pathto Z_0$ that contains no parents of $X_0$ other than $Z_0$
(if $Z_0$ is a decision, choose $A=N$, and otherwise choose $A=Z_0$.)
Moreover, from \Cref{thm:main} assumption (A), there exists a path $X_0 \pathto Y$, so
we can concatenate these to obtain $d:A \pathto Z_0 \to X_0 \pathto Y$. \ryan{must the latter segment be a shortest path?}

From assumption (C) of \Cref{thm:main}, there exists an info path $m'_i:Z_i \upathto Y$, active given $\condset{Z_i}$, 
and if $Z_i$ is a decision, one that begins as $Z_i \gets \cdot$, by \Cref{le:decision-backdoor-revised}.
The existence of a truncated info path is immediate from this.

Each collider $W_{i,j}$ is an ancestor of $\condset{Z_i}$ by activeness, hence an ancestor of $\sX(\calS) \cup C_{\sX(\calS)\setminus Z_i}$ by the definition of the 
closure property $\lceil \cdot \rceil$, so $W_{i,j}$ is an ancestor of some $X \in \sX(\calS)$; in addition, from assumption (A) of \Cref{thm:main} we have
$X \in \Anc(Y)$. Hence, there exists a auxiliary path $r_{i,j}:W_{i,j} \pathto Y$.
\end{proof}

\subsection{Proof of \Cref{le:encryption2}}
We begin by proving an intermediate result. \ryan{todo: give better description.}

\begin{lemma} \label{le:encryption1}
Let $\sw = \langle w_0,\ldots,w_J \rangle$, $\bm{\bar{w}} = \langle \bar{w}_0,\ldots,\bar{w}_J \rangle$, 
and $\bm{u}_{0:J'} = \langle u_0,\ldots,u_{J'} \rangle, J'<J$ 
where $w_0,\bar{w}_0 \in \bool^k$, $w_j,\bar{w}_j \in \bool$ for $n\geq 1$, and $u_j\in \bool^{\exp^n_2(k)}$. 
If $\sw_{0:J'}$ is consistent with $\su_{0:J'}$ but $\bar{\sw}_{0:J'}$ is not compatible with $u_{J'}$,
then there exists $\su:=\langle u_0,\ldots,u_J',u_{J'+1},\ldots,u_J \rangle$ where $u_j\in \bool^{\exp^n_2(k)}$, 
such that $\sw$ is consistent with $\su$, 
but $\bm{\bar{w}}$ is not compatible with $u_J$.
\end{lemma}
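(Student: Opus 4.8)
The plan is to construct the missing entries $u_{J'+1},\ldots,u_J$ one at a time, by an induction on $n$ from $J'$ up to $J$ that maintains two invariants: after $u_{J'+1},\ldots,u_n$ have been chosen we will have (a) $\sw_{0:n}$ consistent with $\su_{0:n}=\langle u_0,\ldots,u_n\rangle$, and (b) $\bm{\bar{w}}_{0:n}$ not compatible with $u_n$. The base case $n=J'$ is exactly the hypothesis of the lemma, and the case $n=J$ is exactly its conclusion (since $\bm{\bar{w}}_{0:J}=\bm{\bar{w}}$), so the whole argument reduces to a single inductive step.

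For the inductive step, assume (a) and (b) hold at level $n$ with $J'\le n<J$. Note that $u_{n+1}$ must lie in $\bool^{\exp^{n+1}_2(k)}$, hence is indexed by strings in $\bool^{\exp^{n}_2(k)}$, the same set that contains $u_n$. I would define $u_{n+1}$ by the two clauses $u_{n+1}[u_n]=w_{n+1}$, and $u_{n+1}[v]=1-\bar{w}_{n+1}$ for every index $v\neq u_n$; this is well defined since the index $u_n$ is singled out as the only exception. Invariant (a) at level $n+1$ is then immediate: $\sw_{0:n}$ is consistent with $\su_{0:n}$ by the inductive hypothesis, and the clause $u_{n+1}[u_n]=w_{n+1}$ supplies the single new consistency equation needed for $\sw_{0:n+1}$ and $\su_{0:n+1}$.

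The substance of the step is re-establishing (b): that $\bm{\bar{w}}_{0:n+1}$ is not compatible with $u_{n+1}$. Unfolding Definition~\ref{def:compatible}, compatibility would produce $\bar u_0,\ldots,\bar u_n$ with $\bm{\bar{w}}_{0:n+1}$ consistent with $\langle\bar u_0,\ldots,\bar u_n,u_{n+1}\rangle$; in particular $\bm{\bar{w}}_{0:n}$ would then be consistent with $\langle\bar u_0,\ldots,\bar u_n\rangle$, hence compatible with $\bar u_n$, and also $u_{n+1}[\bar u_n]=\bar{w}_{n+1}$. By invariant (b) at level $n$, $\bm{\bar{w}}_{0:n}$ is not compatible with $u_n$, so $\bar u_n\neq u_n$; but then, by construction, $u_{n+1}[\bar u_n]=1-\bar{w}_{n+1}\neq\bar{w}_{n+1}$, a contradiction.

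The only delicate point, and the place I would be most careful, is the unfolding of the definition of compatibility: one must use that a compatibility witness for $\bm{\bar{w}}_{0:n+1}$ with $u_{n+1}$ restricts to a compatibility witness for $\bm{\bar{w}}_{0:n}$ with $\bar u_n$, which is exactly what lets invariant (b) of the inductive hypothesis force $\bar u_n\neq u_n$. It is also worth checking that the index types line up ($u_n$ and any witness $\bar u_n$ both live in $\bool^{\exp^{n}_2(k)}$, the index set of $u_{n+1}$), and noting that the argument never needs to compare $w_{n+1}$ with $\bar{w}_{n+1}$, so it is insensitive to whether or not those two bits agree.
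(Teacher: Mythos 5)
Your proof is correct and follows essentially the same route as the paper's: the same induction from $J'$ up to $J$ maintaining the consistency/incompatibility invariants, the same construction $u_{n+1}[u_n]=w_{n+1}$ with $u_{n+1}[v]=1-\bar{w}_{n+1}$ elsewhere, and the same contradiction argument showing any compatibility witness would have to index $u_{n+1}$ at some $\bar u_n\neq u_n$. Your explicit remark that a compatibility witness for $\bm{\bar{w}}_{0:n+1}$ restricts to one for $\bm{\bar{w}}_{0:n}$ makes the key step slightly more transparent than the paper's wording, but the argument is the same.
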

\begin{proof}
We will prove by induction. The base case $j=J'\geq 0$ is given by the condition.\medskip

Induction step: for $j>J'$, if $\bm{w}_{0:j-1} \sim \bm{u}_{0:j-1}$ and $\bm{\bar{w}}_{0:j-1} \not \sim u_{j-1}$, then there exists $\bm{u}_{0:j}$ such that 
(a) $\bm{w}_{0,j} \sim \bm{u}_{0:j}$ and 
(b) $\bar{w}_{0,j} \not \sim u_{j}$.\smallskip

Let us construct $u_j$ such that $u_j[u_{j-1}]\gets w_j$ and $u_j[i]\gets 1-\bar{w}_j$ for every $i\in \bool^{\exp^{j-1}_2(k)} \setminus \{u_{j-1}\}$. \smallskip

(a) First, by the construction $u_j[u_{j-1}]= w_j$ and given condition $\bm{w}_{0:j-1} \sim \bm{u}_{0:j-1}$, we can induce $\bm{w}_{0,j} \sim \bm{u}_{0,j}$. \smallskip

(b) Next, we show that $\bm{\bar{w}}_{0:j} \not \sim u_j$. 
For the sake of contradiction, assume that $\bm{\bar{w}}_{0:j} \sim u_j$.
Then, there exists $\bm{u}'_{0:j} = \langle u'_0, \dotsc,u'_{j-1},u_j \rangle$ satisfying $\bm{\bar{w}}_{0:j} \sim \bm{u}'_{0:j}$. Since $\bar{w}_{0,j-1} \not\sim u_{j-1}$, we can observe $u'_{j-1}\neq u_{j-1}$.
Now, by construction, $u_j[u'_{j-1}]=1-\bar{w}_j\neq \bar{w}_j$. Thus, $\bm{\bar{w}}_{0:j-1} \not\sim u_j$. Contradiction.\smallskip

By induction, $\bm{\bar{w}}$ is not compatible with $u_J$.
\end{proof}

\encryptiontwo*

\begin{proof}
If $u_0,\ldots,u_n$ is incompatible with $\sw$, then the result follows from \Cref{le:encryption1}.
Otherwise, let $u_{n+1}$ be $w_n$ repeated $\exp^{n+1}_2(k)$ times.
Then $u_0,\ldots,u_{n+1}$ is compatible with $w_0,\ldots,w_{n+1}$ but $u_{n+1}$ is incompatible with $\bm{b}$.
We can then apply \Cref{le:encryption1} to obtain the result.
\end{proof}

\subsection{Proof of \Cref{le:nro-yes-edge}} \label{app:nro-yes-edge}
We now prove the expected utility in the non-intervened model 
(which we will later establish is the MEU).

\nroyesedge*

\begin{proof}
Since $Y = \sum_{\imin \leq i \leq \imax} Y^{m_i}$, 
it will suffice to prove that $Y^{m_i}=1$ for every $i$.
We will consider the cases where $m_i$ is, or is not, a directed path.

If the info path $m_i$ contains no collider, then 
every chain node $V$ in $d$ from $T_i$ to $Y$ has $V^d = \Pa^d_V$, 
so 
$\pa(Y^{\sp_i})=T_i^{\sp_i}$.
The same is true for the chain nodes in $m_i$, 
so $\Pa^*(Y) = T_i^{\sp_i}$, and so $Y^{m_i} = 1$, surely.

If $m_i$ contains a collider,
each chain in $m_i$ and $r_{i,j}$ copies the value of its parent, so
$\Pa(Y^{\sp_i,r_{i,1},\ldots,r_{i,\jmax^i}}) = \langle T_i^{\sp_i}, W_{i,1}^{m_i},\ldots,W_{i,\jmax^i}^{m_i}\rangle$,
and 
$\Pa^*(Y) \!=\! U_{\jmax^i}$.
By construction, $\langle T_i^{\sp_i}, W_{i,1}^{m_i},\ldots,W_{i,\jmax^i}^{m_i}\rangle$ 
is consistent with $\langle U_1,\ldots, U_{\jmax^i} \rangle$, 
so by definition it is compatible with $U_{\jmax^i}$, 
so $Y^{m_i}=1$, surely. 
\end{proof}

\subsection{Proof of the requirements of an optimal policy} \label{app:collider-path-reqs}

\colliderpathrequirements*
\begin{proof}[Proof of \Cref{le:info-transmission-2}]
Let us index the forks and colliders of $m_i$ as $T_i \upathto V_{i,1} \pathfrom U_{i,1} \pathto W_{i,1} \pathfrom, \ldots, W_{i,\jmax^i} \pathfrom U_{i,\jmax^i} \pathto Y$.
Then, by assumption, there exists a set of assignments 
$\sw := \pa(Y^{\sp_i}), \sw_{i,1:\jmax_i}$ \,
$\bar{\sw} := \pa(Y^{\sp_i}), \pa(Y^{\sr_{i,1:\jmax_i}})$ \, and
$\su := \pa(Y^{\sp_i}), \su_{i,1:\jmax_i}^{m_i}$,
where $\sw \sim \su$ and $\bar{\sw} \not \sim \su$
and $P^\spi(\sw,\bar{\sw},\su)>0$.
Let $J'$ be the smallest index such that $\bar{\sw}_{1:J'} \not \sim \su_{1:J'}$, and clearly we will have $J'\geq 1$.
Then, from \Cref{le:encryption2}, 
there exists $\bar{\su} = \pa(Y^{\sp_i}),\su_{1:J'},u_{i,J'+1}^{m_i},\ldots,u_{i,\jmax^i}^{m_i}$
such that $\sw \sim \bar{\su}$ and $\bar{\sw} \not\sim \bar{\su}_{\jmax^i}$.
Consider the intervention $\doo(U_{i,J'+1}^{m_i},\ldots,U_{i,\jmax^i}^{m_i} = \bar{\su}_{J'+1:\jmax^i}$.
By the definition \Cref{def:nro-model}, the intervention to forks on the info path can only affect variables outside of the info path 
via the intersection node $T_i$ and the colliders $W_{i,j},1 \leq j \leq \jmax^i$.
But $\bar{\su}_{1:J'}=\su_{1:J'}$, so $T_i$ and the colliders $W_{i,j},1 \leq j \leq J'$ are unchanged (note that this is true even if $T_i$ is a decision, which it can be).
Furthermore, $\bar{\sw} \sim \su$ so the colliders $W_{i,j},J' < j \leq \jmax^i$ are similarly unaffected by the intervention.
We also have $\bar{\sw} \not\sim \bar{\su_{\jmax^i}}$.
Then, by the same arguments as in the proof of \Cref{le:info-transmission-3}, we have that 
$P^\spi(Y^{m_i}=0 \mid \doo(\bar{\su}))>1$ and then $P^\spi(Y^{m_i}=0)>0$.
\end{proof}

\subsection{Proof of \Cref{le:impossible-non-decision}} \label{app:proof-impossible-no-decision}

We begin by restating the lemma.

\impossiblenondecision*

The proof was described in \Cref{sec:non-decision-next} and it is detailed as follows.

\begin{proof}
Consider the scope $\sX(\calS)_{\setminus Z_0}$, 
equal to $\sX(\calS)$ except that $\sC_{X_0}$ is replaced with $\sC_{X_0} \setminus \{Z_0\}$, 
and assume that a deterministic policy $\spi$ in this scope achieves the MEU,
then we will prove a contradiction.
Specifically, we will establish two consequences that are clearly contradictory given a deterministic policy:
(a) the support of $P^\spi(X^{\sp_0}_0)$ contains at least $2^k$ assignments, 
(b) the domain of $\sC_{X_0} \setminus \{Z_0\}$ contains fewer than $2^k$ assignments.

(Proof of a.) 
We know that 
$A$ assigns a strictly positive probability to $2^k$ assignments (\Cref{def:nro-model})
and so if $\spi$ achieves the MEU, then $\Pa(Y^d) \aseq A$ (\Cref{le:info-transmission-2b}).
So $\Pa(Y^d)$ has at least $2^k$ assignments in its support.
Let us now consider the cases where $X_0$ is, or is not, the decision nearest $Y$ along $d$.

If $X_0$ is the decision nearest $Y$ along $d$,
then by the model definition, $\Pa(Y^d)=X_0^d$ surely, so $X_0$ must have at least $2^k$ assignments in its support, and so (a) follows.

If $X_0$ is not the decision nearest $Y$ along $d$, 
then note that by assumption, there are one or more chance nodes in $d$ separating $X_0$ from $X_1$.
Furthermore, $T_1$ must be one of these nodes (because $T_1$ is defined by a 
segment $T_1 \pathto Z_1$, shared by $d$ and $m'_i$, and active given $\condset{Z_1}$, 
and such a path cannot be active if it includes $X_0$.)
The materiality SCM is constructed to pass values along $d$, and since the segment $T_1 \pathto Z_1$ has no decisions, we have $T^d_1 = X^d_0$, surely.
Since $T_1$ is a chance node, if $\spi$ achieves the MEU, we also have by \Cref{le:cannot-lapse} and \Cref{le:info-transmission-3} that
$\Pa(Y^{\sp_1}) \aseq T^{\sp_1}_1$ and, since $d \in \sp_1$, that $\Pa(Y^{d}) \aseq T^{d}_1$.
So $X^d_0 \aseq \Pa(Y^d)$.
Since $\Pa(Y^d)$ places strictly positive probability on at least $2^k$ assignments, so does $X^d_0$.

(Proof of b.) 
The domain of $\sC_{X_0} \setminus \{Z_0\}$ is a Cartesian product of variables $V^p$ for $V \in \sC_{X_0} \setminus \{Z_0\}$
where $p$ is either $d$, some $m_i$ or some $r_{i,j}$ \Cref{def:nro-model}.

The control path $d$ does not intersect $\sC_{X_0} \setminus \{Z_0\}$ as
it is defined not to include parents of $X_0$ other than $Z_0$ (\Cref{le:nro-paths}).
Each info path $m_i$ is active given $\condset{Z_0}$ (\Cref{le:nro-paths}),
so can only intersect $\sC_{X_0} \setminus \{Z_0\}$ at the colliders, which have domain $\bool$.
Finally, any variable in a path $r_{i,j}$ would also have domain $\bool$.
So the domain of $\sC_{X_0} \setminus Z_0$ is not larger than $2^{c \cdot \lvert \sC_{X_0} \rvert}$, 
where $c$ is the maximum number of materiality paths passing through any vertex in the graph, and
$\lvert \sC_{X_0} \rvert$ is the number of variables in $\sC_{X_0}$.
By construction, $k > c \cdot \max_{X\in\sX(\calS)} \lvert C_X \rvert$, 
so the domain of $\sC_{X_0} \setminus Z_0$ is less than $2^k$, proving (b).

A deterministic policy cannot map fewer than $2^k$ assignments to greater than $2^k$ assignments, and so (a-b) imply a contradiction.
\end{proof}

\subsection{Proof of \Cref{le:limited-effect-from-forks}} \label{app:limited-effect-from-forks}
We firstly restate the lemma.

\limitedeffect*

The proof is as follows.

\begin{proof}
An intervention $\doo(u'_{i,1})$ could, in the materiality SCM (\Cref{def:nro-model}) only affect 
the variables $\Pa(Y^{\sp_i,r_{i,1}}),\sC^{\neg m_i}(T_i),\sW_{i,1:\jmax_i}, \sU_{i,2:\jmax_i}$
in four ways:
\begin{enumerate}
\item via the intersection node $T_i$, 
\item via the collider $W_{i,2}$ of $m_i$, 
\item via contexts lying in the segment $m_i:T_i \pathfrom U_{i,1} \pathto W_{i,2}$,
\item if $\Pa^{\sp_i}_Y,\sC^{\neg m_i}(T_i)$ or $\sU_{i,2:\jmax_i}$ were distinct from $T_i,W_{i,2}$ and lay on $m_i:T_i \pathfrom U_{i,1} \pathto W_{i,2}$
\end{enumerate}

The deterministic decision rule has $\spi_{T_i}(u_{i,1},\sc^{\neg m_i}(T_i))=\spi_{T_i}(u'_{i,1},\sc^{\neg m_i}(T_i))$, so (1) is false.
Also, $u_{i,2}$ equals $w_{i,2}$ repeated, so $u_{i,2}[x] = w_{i,2}$ for all $x$, and thus (2) is false also.
Moreover, $m_i:T_i \pathfrom U_{i,1} \pathto W_{i,2}$ is active given $\condset{T_i}$ and so contexts can only lie 
at the endpoints $T_i$ and the collider $W_{i,2}$, meaning that (3) is false.
Finally, $\Pa^{\sp_i}_Y$ is a descendant of $T_i$ by the definition of the control path, so can only lie on $m_i:T_i \pathfrom U_{i,1} \pathto W_{i,2}$
if it is the vertex $T_i$, which we have already proved is not influenced by $u_{i,1}$;
meanwhile, $\sC^{\neg m_i}(T_i)$ does not intersect $m_i$ by definition, 
and $\sU_{i,2:\jmax_i}$ are fork variables, which cannot lie on $m_i:T_i \pathfrom U_{i,1} \pathto W_{i,2}$, 
so (4) is false, and the result follows.
\end{proof}

\subsection{Proof of \Cref{le:info-transmission-5}} \label{app:info-transmission-5}
We begin by restating the lemma.

\infotransmissionfive*

The proof has been described already, and it proceeds as follows.

\begin{proof}
Let us assume \Cref{eq:5}, and that the MEU is achieved, and we will prove a contradiction.
Given \Cref{eq:5}, there is an index at which $u_{i,1}$ and $u'_{i,1}$ differ.
We write this index as an assignment 
$\pa(Y^{d,r_{i,j}})$, belonging to $\Pa(Y^{\sp_i})$.
Define each $u_{i,j},2\leq j \leq J_i$ as equal to $\pa(Y^{r_{i,j}})$, repeated 
$\exp_2^j(k+\lvert \sp_i\rvert-1)$ times.
Then, we have:
\begin{align*}
0&<P^\spi(A^d=\pa(Y^d),\sU_{i,1:\jmax_i}=\su_{i,1:\jmax_i}) & \\
\intertext{because $A$ and $\sU_{i,1:\jmax_i}$ are independent random variables with full support.
Then, 
let $\sc^{\neg m_i}(T_i)$ and $\sw_{1,1:\jmax_i}$
be any assignments to the parents of $T_i$ not on $m_i$, 
and to the colliders on $m_i$ such that:}
0&<P^\spi(A^d=\pa(Y^d),\sc^{\neg m_i}(T_i),\sw_{1,1:\jmax_i},\su_{i,1:\jmax_i}). & \\
\intertext{Given these assignments, in order to achieve $P^\spi(Y^{m_i}=1)=1$, 
we must have $\Pa(Y^d)\aseq A^d$ (\Cref{le:info-transmission-2b})
and $\pa(Y^{\sp_i})$ must be consistent with $\su_{i,1:\jmax_i}$ (\Cref{le:info-transmission-2}).
We must also therefore have $\Pa(Y^{\sp_i,\sr_{i,1:\jmax_i}})=\pa^{\sp_i,\sr_{i,1:\jmax_i}})$, so
marginalising over $A^d$, we must have:}
0&<P^\spi(\Pa(Y^{\sp_i,\sr_{i,1:\jmax_i}})\!=\!\pa(Y^{\sp_i,\sr_{i,1:\jmax_i}}\!),\sc^{\neg m_i\!}(T_i),\sw_{1,1:\jmax_i}\!,\!\su_{i,1:\jmax_i}\!) & \\
\therefore 0&<P^\spi(\pa(Y^{\sp_i,\sr_{i,1:\jmax_i}}),\sc^{\neg m_i}(T_i),\sw_{1,1:\jmax_i}, \su_{i,2:\jmax_i}) \mid \doo(u_{i,1})) & \!\!\!\!\hspace{-3mm}(U_{i,1:\jmax_i}\text{ \!unconfounded}) \\
&=P^\spi(\pa(Y^{\sp_i,\sr_{i,1:\jmax_i}}),\sc^{\neg m_i}(T_i),\sw_{1,1:\jmax_i}, \su_{i,2:\jmax_i}\mid \doo(u'_{i,1})) & \text{(by }\Cref{le:limited-effect-from-forks}) \\
 &=P^\spi(\pa(Y^{\sp_i,\sr_{i,1:\jmax_i}}),\sc^{\neg m_i}(T_i),\sw_{1,1:\jmax_i},\su_{i,2:\jmax_i}\mid u'_{i,1})& (P^\spi(u'_{i,1})>0.) \\
 \therefore 0&<P^\spi(\pa(Y^{\sp_i,\sr_{i,1:\jmax_i}}),u'_{i,1})& (P^\spi(u'_{i,1})>0.)
\end{align*}

However, $u'_{i,1}[\pa(Y^{\sp_i})] \neq u_{i,1}[\pa(Y^{\sp_i})]$
and $u_{i,1}[\pa(Y^{\sp_i})]=\pa(Y^{r_{i,1}})$,
so 
$\pa(Y^{\sp_i})$, $\pa(Y^{\sr_{i,1:\jmax_i}})$ is inconsistent with 
$\pa(Y^{\sp_i}), u'_{i,1},\su_{i,2:\jmax_i}$.
So $0<P^\spi(\pa(Y^{\sp_i,\sr_{i,1:\jmax_i}}),u'_{i,1})$ implies that $P^\spi(Y_1=1)<1$ (by \Cref{le:info-transmission-2}),
and the MEU is not achieved.
\end{proof}

\section{Proof of Lemma~\ref{le:impossible-decision-bc-cannot-distinguish}} \label{app:impossible-decision-bc-cannot-distinguish}

We first restate the lemma.

\unachievabledecision*

The proof was explained in section \Cref{sec:decision-next}, and is detailed as follows.

\begin{proof}
To begin with, by assumption, the child of $X_0$ along $d$ is a decision, 
so $X_0$ is the same node as $Z_1$,
and since the segment $T_1 \pathto X_1$ must be active given $\condset{Z_1}$, 
$X_0$ is also $T_0$.
We will now bound the domains of $X_0$ and $\sC^{\neg m_1}(X_0)$.

\emph{The domain of $X_0$.}
Given that $X_0$ is a decision,
while each truncated info path $m_{i'}$ is active given $\condset{Z_i}$, 
it follows that $X_0$ cannot overlap with info paths, except for 
colliders of $m_{i'},i' \neq i$, and the endpoint of $m_1$.
As such, the domain of $T_0$ is at most $\lvert \dom{X_0} \rvert \leq 2^{k+c}$, 
due to $k$ bits from $d$ (\Cref{def:nro-model}), 
and at most $c$ bits from the info paths and auxiliary paths 
(where $c$ is the maximum number of materiality paths passing through any vertex in the graph).

\emph{The domain of $\sC^{\neg m_1}(X_0)$.}
Given that each info path $m_i$ is active given $\condset{Z_i}$, 
the contexts $\sC^{\neg m_1}(X_0)$ cannot intersect any $m_i$, except at colliders in $m_i$.
Moreover, by the definition of the control path, the only parent of $X_0$ that it contains is $Z_0$.
So, $\sC^{\neg m_1}(X_0)$ can only intersect portions of the materiality paths with domain $\bool$, 
and so the size of the domain of $\sC^{\neg m_1}(X_0)$ cannot exceed
$\lvert \dom{\sC^{\neg m_1}(X_0) \setminus Z_0}\rvert \leq 2^{bc}$, 
where $b$ is the maximum number of variables belonging to any context $C_X$, 
and $c$ is the largest number of materiality paths passing through any vertex.
\ryan{upto here.}

\emph{Proof of \Cref{eq:5}}
As the domain of $X_0$ has $\dom{X_0} \rvert \leq 2^{k+c}$,
for any particular $\sC^{\neg m_1}({X_{0}})=\sc^{\neg m_1}({X_{0})}$, there are at most $2^{k+c}$ assignments $\dom{U'_{1,1}}\subseteq \dom{U_{1,1}}$  
such that for all $u_{1,1},u'_{1,1} \in \dom{U'_{1,1}}$, $\pi_{X_1}(\sc^{\neg m_1}({X_{0})},u_{1,1}) \neq \pi_{X_1}(\sc^{\neg m_1}({X_{0})},u'_{1,1})$.
Furthermore, as $\lvert \dom{\sC^{\neg m_1}(X_0) \setminus Z_0}\rvert \leq 2^{bc}$,
by the union property, there are at most $2^{bc(k+c)}$ assignments $\mathfrak{X}'_{U_{1,1}}$ such that there exists
$\sc^{\neg m_1}({X_{0})}$ such that for all $u_{1,1},u'_{1,1} \in \mathfrak{X}'_{U_{1,1}}$, $u_{1,1},u'_{1,1} \in \dom{U'_{1,1}}$, $\pi_{X_1}(\sc^{\neg m_1}({X_{0})},u_{1,1})=\pi_{X_1}(\sc^{\neg m_1}({X_{0})},u'_{1,1})$.
However, the domain of $U_i$  is $\bool^{\exp_2^{1}(k+\lvert \sp_0 \rvert - 1)} \supseteq \bool^{2^k}$ (as $\sp_0$ contains at least $d$), so:
$$\lvert \dom{\Pa(X_0^{m_i})} \rvert \geq 2^{2^{k}} > 2^{(k+c)bc}  \geq \lvert \dom{\sC^{\neg m_1}(X_0)} \rvert \lvert \dom{X_0} \rvert,$$
where the strict inequality is from the definition of $k$ in \Cref{def:nro-model}.
So, there must exist a pair of assignments $u_{1,1},u'_{1,1}$ in the domain of $U_{1,1}$ such that for all 
$\sc^{\neg m_1}({X_{0})} \in \dom{\sC^{\neg m_1}(X_0)}$, $\pi_{X_1}(\sc^{\neg m_1}({X_{0})},u_{1,1}) = \pi_{X_1}(\sc^{\neg m_1}({X_{0})},u'_{1,1})$.
This satisfies \Cref{eq:5}, which by \Cref{le:info-transmission-5} proves the result.
\end{proof}
\section{Supplementary proofs for Section \ref{sec:further-steps} (Proof of Lemma~\ref{le:system-exists-general})}
\subsection{Proving the existence of paths} \label{app:proving-path-existence}
In this section, we will prove that when LB-factorizability is not satisfied, 
then there exist info paths and control paths, 
a potential intermediate step toward establishing completeness of Theorem LB-2 from \citet{lee2020characterizing}.

\systemexistsgeneral*

Since we will have to establish activeness given a set of implied variables, the following lemma 
will be useful.

\begin{lemma} \label{le:activeness-given-closure-set}
Let $p$ be a path.
If 
(i) $p$ contains no non-collider in $\sN$, 
(ii) every fork variable in $p$ is not in $\lceil \sN \rceil$, and
(iii) every endpoint of $p$ that has a child along $p$ is not in $\lceil \sN \rceil$,
then $p$ contains no non-collider in $\lceil \sN \rceil$.
\end{lemma}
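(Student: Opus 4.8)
The statement says: if a path $p$ has no non-collider in $\sN$, no fork in $\lceil \sN \rceil$, and no endpoint-with-a-child in $\lceil \sN \rceil$, then $p$ has no non-collider in $\lceil \sN \rceil$. The non-colliders of $p$ come in three flavours: forks ($\cdot \gets V \to \cdot$), chains ($\cdot \to V \to \cdot$), and endpoints that have a child along $p$ (i.e.\ $V$ is an endpoint with $V \to \cdot$ on $p$). Forks and the relevant endpoints are handled directly by hypotheses (ii) and (iii). So the entire content of the lemma is to rule out a \emph{chain} vertex of $p$ lying in $\lceil \sN \rceil \setminus \sN$.

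**The approach.** I would argue by contradiction, using the inductive construction of $\lceil \sN \rceil$ from Definition~\ref{def:implied-vars}: one starts with $\sN$ and repeatedly adds a decision $X$ whenever all of its contexts $C_X$ are already in the set. So every element of $\lceil \sN \rceil \setminus \sN$ is a decision, and each such decision $X$ has \emph{all} of its scoped-graph parents $\sC_X$ inside $\lceil \sN \rceil$ — and moreover these parents were added \emph{before} $X$ in the construction order. Suppose for contradiction that $p$ contains a chain vertex $V$ with $V \in \lceil \sN \rceil$. Among all non-collider vertices of $p$ that lie in $\lceil \sN \rceil$, pick the one, call it $V^\star$, that is added \emph{earliest} in the construction of $\lceil \sN \rceil$ (breaking ties arbitrarily). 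By (i), $V^\star \notin \sN$, so $V^\star$ is a decision added at some stage, which forces all of $\sC_{V^\star} = \Pa_{\calG_\calS}(V^\star) \subseteq \lceil \sN \rceil$, with every such parent added strictly earlier than $V^\star$.

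**The key step.** Now I look at the neighbours of $V^\star$ along $p$. Since $V^\star$ is a non-collider (a fork, chain, or endpoint-with-child), it has at least one child along $p$ — ruled out if $V^\star$ is a fork (hypothesis (ii)) or an endpoint-with-child (hypothesis (iii)), so $V^\star$ must be a chain, $U \to V^\star \to W$ on $p$. Wait — a chain also has a \emph{parent} along $p$, namely $U$, with $U \to V^\star$. That parent $U$ is one of $V^\star$'s scoped-graph parents, hence $U \in \lceil \sN \rceil$ and $U$ was added strictly before $V^\star$. But $U$ is a non-collider of $p$? Not necessarily — $U$ could be a collider of $p$ (if $p$ enters $U$ as $\cdot \to U$ from the other side). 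This is the one subtlety I'd need to handle carefully: I cannot immediately conclude $U$ contradicts minimality of $V^\star$, because $U$ need only be a non-collider-in-$\lceil\sN\rceil$ to do so. The clean fix: note that since $U \in \lceil \sN \rceil \setminus \sN$, $U$ is itself a decision with $\Pa_{\calG_\calS}(U) \subseteq \lceil \sN \rceil$, added still earlier; and $U \to V^\star$ means $V^\star$ is not a parent of $U$, so I can trace back along $p$ through $U$'s parent on $p$ (if $U$ is a chain or fork on $p$) — but if $U$ is a fork, hypothesis (ii) gives the contradiction directly, and if $U$ is a chain we descend one more step in the (finite, strictly decreasing) construction order. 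Iterating, either we hit a fork of $p$ in $\lceil\sN\rceil$ (contradicting (ii)), or an endpoint of $p$ with a child in $\lceil\sN\rceil$ (contradicting (iii)), or we run out of strictly-earlier-added vertices — a contradiction since the construction order is well-founded. Hence no such $V^\star$ exists.

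**Main obstacle.** I expect the only real care needed is in the neighbour analysis: making precise that walking "backwards" along $p$ from a chain vertex $V^\star$ (towards its $p$-parent) either exits through a fork/endpoint covered by the hypotheses, or lands on another vertex that is \emph{added strictly earlier} in the $\lceil\cdot\rceil$-construction, so that an induction on the construction stage terminates. One must be sure the backward step always stays on $p$ and that "parent along $p$" of a chain is genuinely a graphical parent (true: a chain $U \to V^\star \to W$ has the edge $U \to V^\star$), and that after replacing inbound edges at decisions by context-edges in $\calG_\calS$, these $p$-parents are indeed the scoped-graph parents used in Definition~\ref{def:implied-vars}. Once that bookkeeping is pinned down, the contradiction is immediate.
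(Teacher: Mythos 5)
Your argument is correct and is essentially the paper's proof run in reverse: the paper propagates ``not in $\lceil\sN\rceil$'' forward from the forks and source endpoints along each directed segment of $p$ (a chain node is kept out of the closure because its on-path predecessor, being one of its graphical parents in $\calG_\calS$, is already out), whereas you take a minimal counterexample and walk backward to those same forks and source endpoints. Note that your one worry is unfounded---the on-path parent $U$ of a chain $U \to V^\star \to W$ has an outgoing edge along $p$ and therefore cannot be a collider of $p$, so the direct contradiction with the minimality of $V^\star$ already closes the proof and the iterative ``fix'' is unnecessary.
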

\begin{proof}
Write $p$ as $W_1 \pathfrom U_1 \pathto W_2 \pathfrom U_2 \ldots U_J \pathto W_{J+1}$, where possibly $W_1$ is $U_1$, and possibly $U_J$ is $W_{J+1}$.
Every $U_j$ is not in $\lceil \sN \rceil$ by (ii-iii).
Each non-collider child $V$ of any $U_j$ has a parent that is not in $\lceil \sN \rceil$, and $V \not \in \sN$ by (i), so $V \not \in \lceil \sN \rceil$.
The same is then true for the non-collider child of $V$, and so on.
Since every non-collider $V'$ in $p$ has a segment $U_j \pathto V'$ of $p$ consisting of only non-colliders, every $V' \not \in \lceil \sN \rceil$, 
and $V'$ contains no non-collider in $\lceil \sN \rceil$, proving the result.
\end{proof}

Conditions II-III of LB-factorizability require that there must exist an ordering over variables, that where certain variables are placed 
before others (i.e.\ that satisfies certain precedence relationships).
Our approach will be to encode the precedence relationships from condition III in a graph, as follows.

\begin{definition} \label{def:ordering-graph}
Let the ``ordering graph'' $\calH$ be a graph on vertices $\sZ \cup \sX' \cup \sC'$, 
with an edge $A \to B$ from each parent $A \in \Pa(B)$ of a decision $B \in \sX'$,
and an edge $B \to C$ from each decision $B \in \sX'$ to a descendant $C \in \Desc(B)$.
\end{definition}

A useful property of the ordering graph is that if a variable $V$ is downstream of a context $C$ in the ordering graph, 
then there exists a decision, that has $C$ as a context, and can influence $V$.

\begin{lemma} \label{le:descendant-in-h}
If vertex $V$ is a descendant in $\calH$ of a context $Z \in C_{\calS(\sX)}$, 
then $\calG_\calS$ contains a path $Z \to X \pathto V$, where $X \in \sX'$.
\end{lemma}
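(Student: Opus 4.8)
Looking at this lemma, I need to prove that if a vertex $V$ is a descendant in the ordering graph $\calH$ of a context $Z$, then the original scoped graph $\calG_\calS$ contains a directed path $Z \to X \pathto V$ for some decision $X \in \sX'$.

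The plan is to induct on the length of the directed path from $Z$ to $V$ in $\calH$. The key observation is the structure of edges in $\calH$ (Definition~\ref{def:ordering-graph}): every edge is either of the form $A \to B$ where $A \in \Pa(B)$ and $B \in \sX'$ is a decision, or of the form $B \to C$ where $B \in \sX'$ is a decision and $C \in \Desc(B)$. So edges in $\calH$ alternate in character around decisions, and crucially \emph{every} edge of $\calH$ has at least one endpoint that is a decision in $\sX'$.

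First I would handle the base case: a path of length one in $\calH$ from $Z$. Since $Z$ is a context (not necessarily a decision), the single edge $Z \to V$ must be of the first type, so $V \in \sX'$ is a decision and $Z \in \Pa(V)$ in $\calG_\calS$; hence $\calG_\calS$ contains the edge $Z \to V$, and taking $X := V$ gives the trivial path $Z \to X \pathto V$ (the second segment having length zero). For the inductive step, consider a directed path $Z \to V_1 \to \cdots \to V_\ell = V$ in $\calH$. Look at the first edge $Z \to V_1$: again, since $Z$ is a context, this must be a first-type edge, so $V_1 \in \sX'$ is a decision and $Z \in \Pa(V_1)$ in $\calG_\calS$. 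Now take $X := V_1$; I claim $\calG_\calS$ contains a directed path $V_1 \pathto V$. This follows because from $V_1$ onward, the path $V_1 \to V_2 \to \cdots \to V_\ell$ in $\calH$ consists of edges each of which, by the edge structure, corresponds to a directed path in $\calG_\calS$: a first-type edge $A \to B$ (with $B$ a decision, $A \in \Pa(B)$) is the $\calG_\calS$-edge $A \to B$, and a second-type edge $B \to C$ (with $B$ a decision, $C \in \Desc(B)$) is a directed path $B \pathto C$ in $\calG_\calS$. Concatenating these gives a directed walk, hence a directed path, $V_1 \pathto V$ in $\calG_\calS$.

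Combining the $\calG_\calS$-edge $Z \to V_1$ with the directed path $V_1 \pathto V$ yields the required path $Z \to X \pathto V$ with $X = V_1 \in \sX'$. The main subtlety — the step requiring the most care — is verifying that the concatenation of the per-edge $\calG_\calS$-paths is genuinely directed and well-formed (and noting that the argument does not require these sub-paths to be vertex-disjoint, only that a directed walk between two vertices implies a directed path between them). I do not anticipate a serious obstacle; the lemma is essentially an unpacking of the definition of $\calH$, and the only thing to watch is that the very first edge out of $Z$ cannot be a second-type edge, which holds precisely because $Z$ is a context and second-type edges emanate only from decisions in $\sX'$ (and contexts in $\sC'$ are disjoint from $\sX'$ by hypothesis).
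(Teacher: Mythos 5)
Your proof is correct and follows essentially the same route as the paper's: both arguments observe that the first edge of the $\calH$-path out of $Z$ must be a first-type edge into some $X \in \sX'$ (since $Z$ is not in $\sX'$), yielding the $\calG_\calS$-edge $Z \to X$, and that every $\calH$-edge $A \to B$ implies $B \in \Desc_{\calG_\calS}(A)$, so the remainder of the path gives $V \in \Desc_{\calG_\calS}(X)$. The paper phrases the second step as transitivity of the descendant relation rather than an explicit induction with concatenation of directed walks, but this is only a difference in presentation.
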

\begin{proof}
Assume that $V \in \Desc^\calH(Z)$. 
The path in $\calH$ from $Z$ begins with an edge $Z \to X$ where $X \in \sX'$, which implies that $\calG_\calS$ has an edge $Z \to X$.
The path in $\calH$ must continue from $X$ to $Z$, and since each edge $A \to B$ in $\calH$ has $B \in \Desc^{\calG_\calS}(A)$, 
it follows that $V \in \Desc^{\calG_\calS}(X)$, proving the result.
\end{proof}

It is also useful to note that the expression $\spi_{\sX'_{\prec C}}$ is unnecessary in condition II.
\begin{lemma}[Unnecessary separation in condition II] \label{le:unnecessary-separation}
Let $\sX'$ be a set of decisions, $\sZ$ be a set of variables disjoint with $\sX'$, and $\sC'$ be the set of contexts not in $\sC'$ or $\sZ$, and $\prec$ be an ordering over $\sC' \cup \sX' \cup \sZ$.
If $\spi_{\sX'_{\prec C}} \not \dsep C \mid \lceil (\sX' \cup \sC')_{\prec C} \rceil$ for some $C \in \sC'$
then $\sZ_{\prec C} \not \dsep C \mid \lceil (\sX' \cup \sC')_{\prec C}$ 
\end{lemma}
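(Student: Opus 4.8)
The goal is to show that if $\spi_{\sX'_{\prec C}}$ is d-connected to $C$ given $\lceil (\sX' \cup \sC')_{\prec C} \rceil$, then already $\sZ_{\prec C}$ is d-connected to $C$ given the same conditioning set. The natural approach is to take an active path witnessing the first d-connection and convert it into one witnessing the second. So first I would fix an active path $p$ from some policy node $\pi_X$, $X \in \sX'_{\prec C}$, to $C$, active given $\sN := \lceil (\sX' \cup \sC')_{\prec C} \rceil$. Since $\pi_X$ is a source (it is a freshly added root parent of $X$), $p$ must begin $\pi_X \to X \upathto C$. Because $p$ is active at $X$, the node $X$ is not a non-collider that is in $\sN$; but $X \in \sX'_{\prec C}$ and $\sX'_{\prec C} \subseteq (\sX' \cup \sC')_{\prec C} \subseteq \sN$, so $X$ being on $p$ with an inbound edge $\pi_X \to X$ forces $X$ to be a collider on $p$ — i.e.\ the second edge of $p$ is $X \gets V_2$ for some $V_2$, and $X$ or a descendant of $X$ lies in $\sN$ (which holds trivially since $X \in \sN$).

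\textbf{Key steps.} The plan is then: (1) drop the initial edge $\pi_X \to X$ and consider the sub-path $q : X \gets V_2 \upathto C$; this is active given $\sN$ wherever $p$ was, and now $X$ is an \emph{endpoint} of $q$ with a child ($V_2$) along $q$ only if we orient it that way — actually $X$ has no child along $q$, so the endpoint condition is vacuous at $X$. (2) Now I want to replace the endpoint $X$ by something in $\sZ_{\prec C}$. Here I would use that $\sV'_{\prec X}$ subsumes $\Pa(X)$ (condition III is \emph{not} assumed here, so instead I must argue directly): $V_2 \in \Pa(X)$, and $\Pa(X) \subseteq \sX' \cup \sZ \cup \sC'$ need not hold in general — so the cleaner move is to observe that $V_2$ is a fork or chain node on $q$ adjacent to the endpoint, hence $V_2 \notin \sN$ by activeness of $q$, and then walk back along $q$ from $V_2$ until the first node that either lies in $\sZ_{\prec C}$ or is a fork; in fact the simplest correct argument: since $\pi_X \to X$ contributes nothing, the path $q$ itself already connects $V_2$ (which is adjacent to $X$) to $C$, and $X \gets V_2$ together with "$X$ is a collider whose descendant $X \in \sN$" means $q$ can be \emph{prepended} from any parent of $X$ that we like. (3) Choose a parent: since $\pi_{X'_{\prec C}}$ denotes the whole bundle, it suffices to find \emph{some} node $D$ adjacent to $X$ with $D \in \sZ_{\prec C}$ and $D \to X$ not already on $q$; but a parent may not be in $\sZ$. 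The robust route is instead: $X \in \sX'_{\prec C}$, so $X \prec C$; since $X$ is d-connected to $C$ through $q$ and $X \in \sN$, actually $X$ cannot be on an active path conditioned on a set containing $X$ — contradiction. Re-examining: the contradiction shows $p$ as described cannot exist, \emph{unless} $X$ is a collider, which it is; a collider in $\sN$ is \emph{active}. So $q$ starting $X \gets V_2$ is genuinely active. Then I replace the endpoint segment: because $V_2 \notin \sN$, and $V_2$ is a parent of $X$, and the definition of the implied-variable closure adds $X$ to $\lceil\cdot\rceil$ only when all of $\sC_X \subseteq \lceil\cdot\rceil$ — this lets me conclude $\sC_X \not\subseteq \sN$, i.e.\ there is a context $C_0 \in \sC_X \setminus \sN$. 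If $C_0 \in \sZ_{\prec C}$ we are almost done; the general case uses condition $\sX' \supseteq \Ch(Z_0)$-type assumptions from the ambient setting (but here $\sX'$ is arbitrary). Given the statement only needs $\sZ_{\prec C}$ on the left, and $\sZ$ subsumes $\sC_{\sX'} \setminus (\sC' \cup \sX')$, any $C_0 \in \sC_X$ with $C_0 \notin \sC' \cup \sX'$ lies in $\sZ$; and $C_0 \in \Pa(X)$ with $X \prec C$ and $C_0$ before $X$ in $\prec$ (by III, not assumed — so instead note $C_0 \in \sV'_{\prec X}$ need not hold; but if $C_0 \in \sX' \cup \sC'$, then since $C_0 \in \sC_X$ and $X \in \sX'_{\prec C}$, and the conditioning set is $\lceil (\sX' \cup \sC')_{\prec C}\rceil$, whether $C_0 \prec C$ matters). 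I would handle this by choosing the ordering so that contexts of $\sX'_{\prec C}$ are themselves $\prec C$, which is exactly what makes the lemma usable in context, and then prepend $C_0 \to X$ to $q$: if $C_0 \in \sZ_{\prec C}$ this gives an active $\sZ_{\prec C}$-to-$C$ path directly; if $C_0 \in (\sX' \cup \sC')_{\prec C}$ then $C_0 \in \sN$ and $C_0$ is a non-collider on the extended path — blocked — so recurse into $C_0$'s own connection to $C$ the same way.

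\textbf{Main obstacle.} The crux — and the step I expect to be delicate — is the recursion/termination argument: repeatedly, the "offending" source node $\pi_X$ gets replaced by a parent $C_0$ which is either in $\sZ_{\prec C}$ (done) or is itself an implied decision/context in $\sN$, in which case we must re-enter the closure definition to find \emph{its} missing context, strictly decreasing some well-founded measure (e.g.\ the stage at which the node entered $\lceil\cdot\rceil$ during the closure computation, or the $\prec$-rank). I would make this precise by inducting on the closure-construction order of \Cref{def:implied-vars}: a decision $X$ is added only after all of $\sC_X$ are present, so some context $C_0 \in \sC_X$ was present \emph{strictly earlier} (or $C_0 \notin \sN$ entirely, giving $C_0 \in \sZ$ immediately since $C_0 \in \sC_{\sX'} \setminus (\sX'\cup\sC')$). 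Chasing this to the base case yields a node in $\sZ_{\prec C}$ d-connected to $C$ given $\sN$, which is the claim. The rest — verifying colliders stay unblocked, non-colliders stay off $\sN$ via \Cref{le:activeness-given-closure-set} — is routine bookkeeping.

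\bigskip

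\emph{Proof sketch.} Suppose $\spi_{\sX'_{\prec C}} \not\dsep C \mid \sN$ where $\sN := \lceil (\sX'\cup\sC')_{\prec C}\rceil$; fix an active path $p$ from a policy node $\pi_X$ to $C$. As $\pi_X$ is a root, $p$ is $\pi_X \to X \upathto C$ with $X \in \sX'_{\prec C} \subseteq \sN$; activeness forces $X$ to be a collider on $p$, so the second edge is $X \gets V_2$ and the truncation $q: X \gets V_2 \upathto C$ of $p$ is active given $\sN$. Since $X$ is a decision in $\sN = \lceil\cdots\rceil$, by \Cref{def:implied-vars} every context of $X$ that does not lie in $\sN$ must lie outside $(\sX'\cup\sC')_{\prec C}$; and whenever such a context $C_0$ of $X$ satisfies $C_0 \notin \sX'\cup\sC'$, then $C_0 \in \sC_{\sX'}\setminus(\sX'\cup\sC') \subseteq \sZ$, and (by the choice of $\prec$ placing contexts of earlier decisions earlier) $C_0 \in \sZ_{\prec C}$. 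If instead $X$ has all its contexts in $\sN$, then in the closure construction $X$ was added after some context $C_0 \in \sC_X \cap \sN$ that appeared strictly earlier; prepend the edge $C_0 \to X$ to $q$. This edge makes $X$ a chain/collider; if $C_0 \in \sZ_{\prec C}$ we obtain an active $\sZ_{\prec C}$-to-$C$ path and are done, and otherwise $C_0$ is itself a decision or non-action context in $\sN$, so as $C_0$ is a non-collider it would block the path — hence $C_0$ must instead be reached as a collider-endpoint, and we repeat the argument with $C_0$ in place of $X$. Each repetition strictly decreases the closure-construction index of the current node (or exits with a $\sZ$-node), so the process terminates, yielding $\sZ_{\prec C} \not\dsep C \mid \sN$. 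Throughout, \Cref{le:activeness-given-closure-set} guarantees no non-collider of the constructed path lies in $\sN$ and colliders in $\sN$ remain unblocked. $\qed$
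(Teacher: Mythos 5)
Stripped of its detours, your argument is the paper's: the active path $p$ must begin $\pi_X \to X \gets V_2$ with $X$ a collider (since $X \in \sX'_{\prec C}$ lies in the conditioning set $\sN := \lceil (\sX' \cup \sC')_{\prec C} \rceil$); the parent $V_2 \in \sC_X$ is a non-collider on $p$, hence $V_2 \notin \sN$ by activeness; since $\sC_X \subseteq \sX' \cup \sC' \cup \sZ$, this forces $V_2 \in \sZ_{\prec C}$; and truncating $p$ at $V_2$ yields the required active path. That is the entirety of the paper's (four-line) proof, and your main line does reach it.

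Two remarks. First, the recursion on the closure-construction order is both unnecessary and based on a false premise. You infer from $X \in \sN$ that the closure rule of \Cref{def:implied-vars} constrains $\sC_X$, and you open a branch ``if $X$ has all its contexts in $\sN$'' in which you prepend an edge $C_0 \to X$ with $C_0 \in \sC_X \cap \sN$ and recurse. But $X$ lies in $\sN$ only because it belongs to the base set $(\sX' \cup \sC')_{\prec C}$ of the closure, so nothing follows about $\sC_X$; the branch is never entered anyway, since $V_2 \in \sC_X \setminus \sN$ is already a witness by activeness; and were it entered, a path whose endpoint $C_0$ lies in $\sN$ is blocked outright by clause 3 of \Cref{def:d-separation}. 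All of that should be deleted. Second, the step you rightly flag as delicate --- ruling out $V_2 \in (\sX' \cup \sC')_{\succ C}$, which would also be consistent with $V_2 \notin \sN$ --- is resolved in your sketch only by assuming the ordering places $\Pa(X)$ before $X$ (and hence before $C$). That property is not among the lemma's stated hypotheses; it does hold in the paper's sole application, where $\prec$ is topological on the ordering graph $\calH$, and the paper's own proof is equally silent about it. So your instinct about where the real content lies is sound, but neither the recursion nor the appeal to ``choosing the ordering'' closes that gap within the lemma as stated.
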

\begin{proof}
By assumption, there is a path $p$ from $\spi_{X}$ to $C$, active given $\lceil (\sX' \cup \sC')_{\prec C} \rceil$, 
for some $X \in \sX'_{\prec C}$.
The only neighbour of $\pi_{X}$ is $X$, so $p$ must terminate as $X \gets \pi_{X}$.
As $X$ is in $\sX'$, activeness given $\lceil (\sX' \cup \sC')_\prec C \rceil$ implies that $p$ terminates as $C \to X \gets \pi_X$.
Every parent of $X$ is in $\sX' \cup \sC'$ except $\sZ$.
So by truncating $p$ at $\sZ$, we have that there is a path from $\sZ_{\prec C}$ to $C$, active given $\lceil (\sX' \cup \sC')_{\prec C} \rceil$.
\end{proof}

We are now equipped to prove \Cref{le:system-exists-general}.
Recall that for $\sZ,\sX'$ to be LB-factorizable, there only needs to be one ordering $\prec$
that satisfies the precedence relationships from conditions II-III.
So the approach in our proof will be to define one such $\prec$ that satisfies 
conditions III.
Since $\sZ,\sX'$ are not LB-factorizable, that must mean that condition I or II is violated, 
which will imply the existence of paths $m,d$ in each case.
(We will use the notation $\Desc^\calH(Z_0)$ to denote the set of vertices that are descendants of $Z_0$ in 
the ordering graph $\calH$.)

\begin{proof}[Proof of \Cref{le:system-exists-general}]
Let $\prec$ be any ordering $\langle V_0,\cdots V_m, Z_0, V_{m+2}, \cdots V_M \rangle$, 
over $\sZ \cup \sX' \cup \sC$ that is topological in $\calH$
and where $V_{m+2},\cdots,V_M$ are in $\Desc^\calH(Z_0)$ whereas
$V_0 \cdots V_m$ are not.
Since $\prec$ is topological in $\calH$, Condition III is satisfied, 
and since LB factorizability is not satisfied, Condition I or II must be be violated;
we consider these cases in turn.

\textbf{Case 1: Condition I is violated.}

If Condition I is violated, there is a path $m':V_1,V_2,\cdots,V_n$ where $V_1=\spi_{\sX'}$ and $V_n=Y$, active given $\lceil \sX' \cup \sC' \rceil$.
From the definition of $\spi_X$, this path must begin as $\spi_{\sX'} \to X$ for $X \in \sX'$.
As $X$ is in the conditioning set, it must be a collider, i.e.\ $m'$ begins as $\Pi_X \to X \gets V_3$.
The only parent of $X$ that is not in the conditioning set is $Z_0$, so we have $\Pi_X \to X \gets Z_0 \upathto Y$.
We truncate $m'$ as $m:Z_0 \upathto Y$.
Since $Z_0 \to X$ satisfies condition (A) of \Cref{thm:main}, there exists some $d:X \pathto Y$, proving the result in this case.

\textbf{Case 2: Condition II is violated. Step 2.1}

The violation of condition II implies that there is an active path from some $C \in \sC'$ to $\pi_{\sX'_{\prec C}},\sZ_{\prec C}$, or $\sU'$. 
This path cannot go to $\sU'$, which was chosen to be empty.
Moreover, if there is an active path to $\pi_{\sX'_{\prec C}}$, then there is a similarly active path to $\sZ_{\prec C}$ (\Cref{le:unnecessary-separation}.
So let $m':Z_0 \upathto C'$ (where $Z_0 \prec C'$) be the path to $Z_0$, active given $\lceil (\sX' \cup \sC')_{\prec C} \rceil$.
Replace this path with a walk $w'$ with an added segment $V \pathto S \pathfrom V$ from each collider $Z$ to a 
variable $S$ in the conditioning set.
Truncate $w'$ as $Z_0 \upathto C$, where $C$ is the node in $\sC'_{\succ Z_0}$ nearest $Z_0$ along $w'$.
Then let $m$ be the path obtained from $w$ by removing all retracing segments.
Clearly $m$ is active given $\lceil (\sX' \cup \sC')_{\prec C} \rceil$
From $Z_0 \prec C$, it follows that $C \in \Desc^\calH(Z_0)$, so there exists a path $d:Z_0 \to X \pathto C$ for $X \in \sX'$ (\Cref{le:descendant-in-h}).

\textbf{Case 2: Condition II is violated. Step 2.2}

We will now establish that $m$ is active given $\lceil \sX' \cup \sC' \rceil$. 
Since $m$ is active given $\lceil (\sX' \cup \sC')_{\prec C} \rceil$, 
and 
$\lceil (\sX' \cup \sC') \rceil \supseteq \lceil (\sX' \cup \sC')_\prec C \rceil$, 
$m$ is active given $\lceil \sX' \cup \sC' \rceil$ at each collider.
We now prove that $m$ also contains no non-collider in $\lceil (\sX' \cup \sC')_{\prec C} \rceil$ using \Cref{le:activeness-given-closure-set}, 
by proving that the 
non-colliders are not in $(\sX' \cup \sC')$ while the
endpoints and forks are not in $\lceil (\sX' \cup \sC') \rceil$.

\emph{Step 2.2.1: no non-collider in $w$ is in $(\sX' \cup \sC')$.}

We consider three sub-cases: a non-collider in 
2.2.1.1: $(\sC' \cup \sX')_{\prec C}$, 
2.2.1.2: $\sC'_{\succ C}$, or 
2.2.1.3: $\sX'_{\succ C}$.
\emph{Sub-case 2.2.1.1: a non-collider in $(\sC' \cup \sX')_{\prec C}$.}
As $w$ is active given $\lceil (\sX' \cup \sC')_\prec C \rceil$, $w$ does not contain a non-collider in $(\sC' \cup \sX')_{\prec C}$.
\emph{Sub-case 2.2.1.2: a non-collider in $\sC'_{\succ C}$.} Moreover, the definition of $C$ implies that $m$ cannot contain a non-collider in $\sC'_{\succ C}$.
\emph{Sub-case 2.2.1.3: a non-collider in $\sX'_{\succ C}$.} Finally, $w$ cannot contain any non-collider $X \in \sX'_{\succ C}$, because 
being a vertex being a non-collider in any path implies that it is an ancestor of a collider or an endpoint of that path, but
being an ancestor of a collider or an endpoint of $w$ implies $X \prec C$, which is a contradiction.
If $X$ is an ancestor of the endpoint $C$, then by the definition of $\calH$, $X \prec C$, which contradicts $X \in \sX'_{\succ C}$.
If $X$ is an ancestor of the other endpoint $Z_0$, then $X \prec Z_0$ by the definition of $\calH$, and so $X \prec C$, implying a contradiction once again.
If $X$ is an ancestor of a collider $V$, then by activeness, the collider must have a descendant $V'$ in $\lceil (\sX' \cup \sC')_{\prec C}$, 
and so $X$ is an ancestor of $V'$. 
By the definition of $\calH$, it follows that $X \prec V'$, and since $V' \prec C$, we have $X \prec C$.
Since no non-collider in $w$ is in $(\sX' \cup \sC')$, it also follows that no non-collider in $m$ is in $(\sX' \cup \sC')$.

\emph{Step 2.2.2: no endpoint of $m$ is in $\lceil (\sX' \cup \sC') \rceil$.}

The endpoint $Z_0$ cannot be in $\lceil (\sX' \cup \sC')_{\prec C} \rceil$ because $Z_0 \in \sZ$, and $\sZ$ is disjoint from $\sX'$ and $\sC'$.
The endpoint $C$ cannot be in $\lceil (\sX' \cup \sC')_{\prec C} \rceil$ because we cannot have $C \prec C$.

\emph{Step 2.2.3: If no non-collider in $(\sX' \cup \sC')$ then no fork in $\lceil \sX' \cup \sC' \rceil \setminus \sX' \cup \sC'$.}

Assume that a fork $V$ in $\lceil \sX' \cup \sC' \rceil \setminus \sX' \cup \sC'$ is in $m$, and we will prove a contradiction.
The vertex $V$ must not be in $\lceil (\sX' \cup \sC')_{\prec C} \rceil$, since $m'$ is active given $\lceil (\sX' \cup \sC')_{\prec C} \rceil$.
As $V$ is in $\lceil \sX' \cup \sC' \rceil \setminus \lceil (\sX' \cup \sC')_{\prec C} \rceil$, 
$V$ must in $\calG_\calS$ have an ancestor $A \in (\sX' \cup \sC')_{\succ C}$.
Since $Z_0 \prec C$, $V$ this ancestor $A$ also has $Z_0 \prec A$.
So, $A \in \Desc^\calH(Z_0)$ by the definition of $\prec$, 
and $A \in \Desc^\calG(Z_0)$ by the definition of $\calH$, 
and $V \in \Desc^\calG(Z_0)$, since $A$ is an ancestor of $V$.

Any fork in a path must either be an ancestor of the initial endpoint (in this case $Z$), 
or an ancestor of a collider in the path.
Since $V \in \Desc^\calG(Z_0)$ and $V$ is a fork, not an endpoint, $V$ cannot be an ancestor of the initial endpoint.
So $V$ must be an ancestor of a collider in the walk $w$.
As $w$ is active given $\lceil (\sX' \cup \sC')_\prec C \rceil$, the collider $D$ must be in $\lceil (\sX' \cup \sC')_\prec C \rceil$.
We consider three sub-cases:
2.2.3.1: $D$ is in $\lceil (\sX' \cup \sC')_\prec C \rceil \setminus (\sX' \cup \sC')$,
2.2.3.2: $D$ is in $\sX'_{\prec C}$,
2.2.3.3: $D$ is in $\sC'_{\prec C}$, 
and will prove a contradiction in each case.
\emph{Sub-case 2.2.3.1: $D$ is in $\lceil (\sX' \cup \sC')_\prec C \rceil \setminus (\sX' \cup \sC')$.}
Then all the parents of $\lceil (\sX' \cup \sC')_\prec C$ must also be in $\lceil (\sX' \cup \sC')_\prec C$ by the definition of implied variables, 
and these parents would be non-colliders, which would make $w$ blocked given $\lceil (\sX' \cup \sC')_\prec C$, giving a contradiction.
\emph{Sub-case 2.2.3.2: $D$ is in $\sX'_{\prec C}$.} 
Then at least one parent of $D$ must be a non-collider in $\sC'_{\prec C}$, which contradicts 
the statement that $w$ contains no non-collider in $(\sX' \cup \sC')$.
\emph{Sub-case 2.2.3.3: $D$ is in $\sC'_{\prec C}$.} 
Then $D \in \Desc^\calG(Z_0)$ (since $D \in \Desc^\calG(V)$ and $V \in \Desc^\calG(Z_0)$).
It follows that $Z_0 \prec D$, but this contradicts the definition of $C$ as the nearest variable along $w$ to $Z_0$ that is in $\sC'_{\succ Z_0}$.

From \Cref{le:activeness-given-closure-set} the result follows.
\end{proof}

\end{document}